\tikzset{shorten >=1pt, >=stealth, auto, node distance=40, initial text=}
\crefname{desc}{Item}{Items}
\Crefname{desc}{Item}{Items}
\DeclareRobustCommand{\abbrevcrefs}{%
\crefname{theorem}{Thm.}{Thms.}%
\crefname{corollary}{Cor.}{Cors.}%
\crefname{proposition}{Prop.}{Props.}%
\crefname{example}{Ex.}{Exs.}%
\crefname{definition}{Def.}{Defs.}%
\crefname{lemma}{Lem.}{Lems.}%
\crefname{figure}{Fig.}{Figs.}%
}
\DeclareRobustCommand{\cabref}[1]{{\abbrevcrefs\cref{#1}}}
\theoremstyle{plain}
\newtheorem{theorem}{Theorem}
\newtheorem{lemma}[theorem]{Lemma}
\newtheorem{corollary}[theorem]{Corollary}
\newtheorem{proposition}[theorem]{Proposition}
\theoremstyle{definition}
\newtheorem{definition}[theorem]{Definition}
\newtheorem{example}[theorem]{Example}
\newtheorem{remark}[theorem]{Remark}
\theoremstyle{remark}
\newtheorem{claim}[theorem]{Claim}
\newcommand{\charfusion}[2]{%
  \def\ch@rfusion##1##2{%
    \ooalign{\hfil$##1#1$\hfil\cr\hfil$##2$\hfil\crcr}}%
      \mathop{%
      \vphantom{#1}%
      \mathpalette\ch@rfusion#2}\displaylimits}
\newcommand{\cupcdot}{\charfusion{\cup}{\cdot}}
\newcommand{\fixed@sra}{\ensuremath{\vrule height 2\fontdimen22\textfont2 width 0pt\shortrightarrow}}
\newcommand{\shortarrow}[1]{%
  \mathrel{\text{\rotatebox[origin=c]{\numexpr#1*45}{\fixed@sra}}}
}
\newcommand{\bigplus}{%
  \DOTSB\mathop{\mathpalette\mattos@bigplus\relax}\slimits@
}
\newcommand\mattos@bigplus[2]{%
  \vcenter{\hbox{%
    \sbox\z@{$#1\sum$}%
    \resizebox{!}{0.9\dimexpr\ht\z@+\dp\z@}{\raisebox{\depth}{$\m@th#1+$}}%
  }}%
  \vphantom{\sum}%
}
\newcommand{\blank}{\ensuremath{\mbox{\protect\raisebox{.65ex}{\ensuremath{\underbracket[.5pt][1.5pt]{\hspace*{1ex}}}}}}}
\newcommand{\inp}{\mathbbm{i}}
\newcommand{\outp}{\mathbbm{o}}
\newcommand{\leqlag}[1]{\leq\!\!#1}
\newcommand{\llag}[1]{<\!\!#1}
\newcommand{\glag}[1]{>\!\!#1}
\newcommand{\SigmaI}{\ensuremath{\Sigma}}
\newcommand{\SigmaO}{\ensuremath{\Gamma}}
\newcommand{\SigmaIO}{\ensuremath{\Sigma\cup\Gamma}}
\newcommand{\Rec}{\ensuremath{\textnormal{\textsc{Rec}}}}
\newcommand{\Aut}{\ensuremath{\textnormal{\textsc{Aut}}}}
\newcommand{\Rat}{\ensuremath{\textnormal{\textsc{Rat}}}}
\newcommand{\all}{\ensuremath{\textnormal{\textsc{Reg}}_{\mathsf{all}}}}
\newcommand{\fsl}{\ensuremath{\textnormal{\textsc{Reg}}_{\mathsf{FSL}}}}
\newcommand{\fs}{\ensuremath{\textnormal{\textsc{Reg}}_{\mathsf{FS}}}}
\newcommand{\fse}{\ensuremath{\mathit{fse}}}
\newcommand{\prefs}[1]{\mathrm{Prefs}(#1)}
\author{Christof Löding\affiliationmark{1}
  \and Sarah Winter\affiliationmark{2}\thanks{This work is supported by the MIS project F451019F (F.R.S.-FNRS). Sarah Winter is a postdoctoral researcher at F.R.S.-FNRS.}}
\title[Resynchronized Uniformization and Definability Problems for Rational Relations]{Resynchronized Uniformization and Definability Problems for Rational Relations\thanks{This work was supported by the DFG grant LO 1174/3.}}
\affiliation{
  RWTH Aachen University, Germany\\
  Université libre de Bruxelles, Belgium}
\keywords{rational relations, transducer, synchronization languages, uniformization, definability}
\begin{document}
\publicationdata{vol. 25:2 }{2023}{9}{10.46298/dmtcs.7460}{2021-05-05;
2021-05-05; 2022-10-18; 2023-04-14}{2023-05-18}
\maketitle
\begin{abstract}
Regular synchronization languages can be used to define rational relations of finite words, and to characterize subclasses of rational relations, like automatic or recognizable relations.
We provide a systematic study of the decidability of uniformization and definability problems for subclasses of rational relations defined in terms of such synchronization languages. We rephrase known results in this setting and complete the picture by adding several new decidability and undecidability results.
\end{abstract}

\section{Introduction}\label{sec:intro}

In this paper we study uniformization and definability problems for subclasses of rational relations over finite words. The class of (binary) rational relations is defined by transducers, that is, nondeterministic finite state automata in which each transition is annotated with a pair of an input and an output word (see \cite{berstel2009,sakarovich:2009a}). One can also view such a transducer as a two-tape automaton that processes two given words with two reading heads. The two heads can process their word at different speeds (for example, in each transition one can read only one letter on the first tape and two letters on the second tape). While the class of rational relations is rather expressive, it does not have good algorithmic and closure properties (see \cite{berstel2009}). For example, the class is not closed under intersection and complement, and the universality problem (whether a given transducer accepts all pairs of words) is undecidable.

One obtains subclasses of rational relations by restricting the way the two heads can move on their tapes. 
An important such subclass of rational relations are synchronous rational relations, often called automatic relations, see \cite{DBLP:journals/tcs/FrougnyS93,DBLP:conf/lcc/KhoussainovN94,DBLP:conf/lics/BlumensathG00}. These are obtained when restricting the transducers to move synchronously on the two tapes, that is, reading in each transition one letter from each word (that has not yet been fully read). Automatic relations enjoy many closure and algorithmic properties that make them, for example, a suitable tool in decision procedures for logic, see \cite{DBLP:conf/lics/BlumensathG00}.

The class of recognizable relations is obtained when processing two words completely asynchronously, that is, the transducer completely reads the word on the first tape, and then it reads the word on the second tape. It is not hard to see that recognizable relations are finite unions of products of regular languages, see \cite[Chapter~III, Theorem~1.5]{berstel2009}. 

The definability problem for two classes $C_1,C_2$ of word relations is the problem of deciding for a given relation in $C_1$ whether it is in $C_2$. For example, if $C_1$ is the class of rational relations, and $C_2$ the class of automatic relations, then the problem is to decide whether a given rational relation can be defined by a synchronous transducer. This problem was shown undecidable by \cite[Proposition~5.5]{DBLP:journals/tcs/FrougnyS93}, and also the definability problem for rational and recognizable relations was shown undecidable by \cite[Chapter~III, Theorem~8.4]{berstel2009}. A systematic study of definability problem for some important subclasses of rational relations is given by \cite{DBLP:journals/ita/CartonCG06}.

In this paper, we consider a more general version of the definability problem, where the subclasses of rational relations are defined by synchronization languages, a formalism that has been introduced and studied by \cite{conf/stacs/FigueiraL14,DBLP:conf/stacs/DescotteFF19}. In that setting, a synchronization language is a language over the alphabet $\{1,2\}$, specifying how the two heads of a transducer are allowed to move on the two tapes. For example, the word $1121222$ indicates that two symbols from the first tape are read, then one symbol from the second tape, then one symbol from the first tape, and then three symbols from the second tape. Each run of a transducer defines such a synchronization word. Therefore, a regular synchronization language $L$ defines a subclass of the rational relations, namely all relations that can be defined by a transducer whose runs have synchronization words in $L$. For example, the class of automatic relations is defined by the synchronization language $(12)^*(1^* + 2^*)$, and the class of recognizable relations by $1^*2^*$, see \cite{conf/stacs/FigueiraL14}.

We consider here a more general version of synchronization languages. Assuming that the input (first tape) alphabet $\SigmaI$, and the output (second tape) alphabet $\SigmaO$ of the relation are disjoint, we consider synchronization languages that are regular subsets of $(\SigmaIO)^*$. A word over $(\SigmaIO)^*$ does not just specify in which order the heads move along the tape, but also which symbols have to be read in each step. This way, it is possible to capture subclasses of rational relations that cannot be captured with the synchronization languages over $\{1,2\}$, as for example the class of prefix-recognizable relations (see \cref{sssec:prefixrec}). Another motivation for studying this more general formalism of synchronization languages is its tight connection to origin semantics.

Origin semantics is motivated by the intractability of, e.g., the equivalence problem for rational relations (whether two given transducers recognize the same relation), which stems from the fact that two transducers recognizing the same relation may produce their outputs very differently. This is witnessed by their different synchronization languages. 
To overcome this difficulty \cite{DBLP:conf/icalp/Bojanczyk14,DBLP:conf/icalp/BojanczykDGP17} have introduced and studied transducers with origin semantics, that is, additionally, there is an origin function that maps output positions to their originating input positions.
The main result of \cite{DBLP:conf/icalp/Bojanczyk14} is a machine-independent characterization of relations defined by two-way transducers with origin semantics \cite[Theorem~1]{DBLP:conf/icalp/Bojanczyk14}.
Using the origin semantics, many interesting problems become decidable, e.g., deciding whether two given (one-way) transducers are equivalent reduces to deciding whether their synchronization languages are equal.
However, the origin semantic is rather rigid.
To mitigate this, \cite{FJLW16} have introduced a similarity measure between (one-way\footnote{The introduction of this measure has triggered similar work on two-way transducers, see \cite{DBLP:conf/fsttcs/BoseMPP18,DBLP:conf/mfcs/BoseKMPP19}.}) transducers with origin semantics which amounts to a similarity measure between their synchronization languages.
Restricting, for example, the equivalence problem to transducers that have `similar' synchronization languages allows to regain decidability. Since origin semantics of transducers cannot be captured by synchronization languages over $\{1,2\}$, this is another motivation for studying the formalism of synchronization languages over $\Sigma \cup \Gamma$. For example, a synchronization language over $\Sigma \cup \Gamma$ can capture all transducers that behave similarly to a given one in the above mentioned measure.

We study here the decidability of the definability problem, whether a given rational, automatic, or recognizable relation can be defined inside a given synchronization language $T$ for different classes of synchronization languages. 
It is easy to prove that all instances of the definability problem are undecidable if the class of target synchronization languages is regular but otherwise not restricted.
Most importantly, we show that it is decidable whether a given automatic relation can be defined inside a given target synchronization language that comes from the class of synchronization languages that define recognizable relations.
It remains open if it is decidable whether a given automatic relation can be defined inside a given target synchronization language that comes from the class of synchronization languages that define automatic relations, but we show decidability for a few special cases. 
These and further results are summarized in \cref{tab:overviewdefinabilty} on \cpageref{tab:overviewdefinabilty}. An explanation of the results is given in \cref{sec:overview-def}. We note here that another kind of definability problem in the context of synchronization languages has been studied by \cite{descotte2018resynchronizing}: Given two synchronization languages (over $\{1,2\}$), are all relations that are definable with the first synchronization language also definable with the second synchronization language? So the definability problem considered by \cite{descotte2018resynchronizing} is on the level of classes of relations, while we study definability for individual relations.

Furthermore, we also study the uniformization problem for subsequential transducers.
A uniformization of a binary relation $R$ is a subset of $R$ that
defines a function and that has the same domain as $R$, so the
function selects precisely one image for each element in the domain.
In our setting, we are given a relation over finite words, and ask whether this relation has a uniformization that can be computed by a subsequential transducer. A subsequential transducer is a transducer that is deterministic on the input symbols, and produces in each step a finite sequence of output symbols (the prefix ``sub'' refers to the fact that the transducer can make a final output at the end of the word). The uniformization problem by synchronous sequential transducers (that read in each step one input symbol and produce one output symbol) has first been studied for automatic relations over infinite words in the context of synthesizing sequential functions from a logic specification by \cite{buechi}. The setting in which the sequential transducer needs not be synchronous has been considered by \cite{hosch1972finite,holtmann2010degrees}.

While the decidability results for uniformization of automatic relations by synchronous sequential transducers can easily be transferred from infinite to finite words, the problem becomes different when the sequential transducer needs not be synchronous. Decidability for uniformization of automatic relations by subsequential transducers has been shown by \cite[Theorem~18]{CarayolL14}, and for other subclasses of rational relations (finite valued and deterministic rational relations) by \cite{FJLW16}. For rational relations, the problem becomes undecidable, see \cite[Theorem~17]{CarayolL14} (however, each rational relation has a uniformization by a rational function, see \cite{sakarovich:2009a}, which needs not be subsequential).

The problem of uniformization by subsequential transducers does not impose any restrictions on how the subsequential transducer produces its output. If one is, for example, interested in uniformization by subsequential transducers that have a bound $k$ on the delay between input and output (the difference of the length of the processed input and the produced output), then this can be captured by using a regular synchronization language as another parameter to the problem. The subsequential transducer then has to produce its output in such a way that the resulting interleaving of input and output is inside the given synchronization language. As an example (of unbounded delay between input and output), transducers that produce exactly two output symbols for each input symbol are captured by the synchronization language $(\SigmaI\SigmaO\SigmaO)^*$.

Uniformization of automatic relations by subsequential transducers with a synchronization language of finite shiftlag (which roughly corresponds to the bounded delay case mentioned in the previous paragraph) has been shown to be decidable by \cite{DBLP:conf/icalp/Winter18}.
Uniformization of rational relations by subsequential transducers with a synchronization language `similar' to the one of the given transducer has been studied in \cite{FJLW16}.

Our contribution is a systematic overview and some new results of uniformization problems for rational, automatic, and recognizable relations by subsequential transducers for different classes of synchronization languages. 
The most involved result is that it is undecidable whether a given automatic relation has a uniformization by a subsequential transducer inside a given regular target synchronization language.
Furthermore, we show is that it is decidable whether a given rational relation has a uniformization by a recognizable relation.
The proof is a reduction to the boundedness problem for distance automata.
A complete overview is given in \cref{tab:overview} on \cpageref{tab:overview}, and the results are explained in more detail in \cref{sec:overview}.

The paper is structured as follows. In \cref{sec:prelims} we provide definitions and terminology used throughout the paper. In \cref{sec:unif} we present results on uniformization, and in \cref{sec:def} we consider definability problems.

\section{Preliminaries}\label{sec:prelims}
In this section, we introduce our terminology and recall some results that are used throughout the paper. Let $\mathbbm N$ denote the set of non-negative integers.

\subparagraph{Words, languages, relations, and uniformizations.}

An \emph{alphabet} is a finite non-empty set of \emph{symbols}, also called \emph{letters}.
We denote alphabets by $\Sigma$ and $\Gamma$.
A \emph{word} over $\Sigma$ is a possibly empty finite sequence $a_1\cdots a_n$ of symbols from $\Sigma$.
The empty word is denoted by $\varepsilon$.
Given a word $w$, we denote by $|w|$ its length, that is, the number of symbols in $w$, by $|w|_a$ the number of occurrences of letter $a$ in $w$, by $w[i]$ its $i$th symbol, and by $w[i:j]$ its infix $w[i]\cdots w[j]$ for $i \leq j$.

We denote by $\Sigma^{-1}$ the set of symbols $a^{-1}$ for all $a \in \Sigma$.
Any word $w$ over $\Sigma \cup \Sigma^{-1}$ can be reduced into a unique irreducible word using the free group equations $aa^{-1} = a^{-1}a = \varepsilon$ for all $a \in \Sigma$, e.g., $c^{-1}abb^{-1}a^{-1}a = c^{-1}a$.

Let $\Sigma^*$, and $\Sigma^+$ denote the set of finite and non-empty finite words over $\Sigma$, respectively.
We write $\Sigma^k$ for $\underbrace{\Sigma\cdots\Sigma}_{k\text{ times}}$, and $\Sigma^{\leq k}$ for $\bigcup_{i=0}^{k} \Sigma^i$.
A \emph{language} $L$ is a subset of $\Sigma^*$, its set of prefixes is denoted by $\prefs{L}$, and $u^{-1}L$ is the set $\{v \mid uv\in L\}$.

A (binary) \emph{relation} $R$ is a
subset of $\Sigma^* \times \Gamma^*$, its domain $\mathrm{dom}(R)$ is the set $\{ u \mid (u,v) \in R\}$, and $(u,v)^{-1}R$ is the relation $\{ (x,y) \mid (ux,vy) \in R \}$.

A \emph{uniformization} of a relation $R \subseteq \Sigma^* \times \Gamma^*$ is a total function $f\colon \mathrm{dom}(R) \to \Gamma^*$ such that $(u,f(u)) \in R$ for all $u \in \mathrm{dom}(R)$.

\subparagraph{Synchronization languages.}

We consider relations over $\SigmaI^* \times \SigmaO^*$, and 
refer to $\SigmaI$ as input alphabet and to $\SigmaO$ as output alphabet.
We assume that $\SigmaI$ and $\SigmaO$ are disjoint, if not, we annotate every symbol in $\SigmaI$ with $1$ and every symbol in $\SigmaO$ with $2$ to make them disjoint. A language over $\SigmaIO$ then encodes a relation over $\SigmaI^* \times \SigmaO^*$, as explained in the following.
For $c \in \{\inp,\outp\}$, referring to input and output, respectively, we define two functions $\pi_{c}\colon (\SigmaIO) \rightarrow (\SigmaIO \cup \{\varepsilon\})$ by $\pi_{\inp}(a) = a$, $\pi_{\outp}(a) = \varepsilon$ if $a \in \Sigma$, and $\pi_{\inp}(a) = \varepsilon$, $\pi_{\outp}(a) = a$ if $a \in \SigmaO$.
These functions are lifted to words over $\SigmaIO$ by applying them to each letter.
A word $w \in (\SigmaIO)^*$ is a \emph{synchronization} of the uniquely determined pair $(u,v) \in \SigmaI^* \times \SigmaO^*$ with $u = \pi_{\inp}(w)$ and  $v = \pi_{\outp}(w)$. We write $\llbracket w \rrbracket$ to denote $(u,v)$.
Then a language $S \subseteq (\SigmaIO)^*$ of synchronizations defines the relation $\llbracket S \rrbracket = \{ \llbracket w \rrbracket \mid w \in S\}$.
Such languages over $\SigmaIO$ are called \emph{synchronization languages}.

Given a synchronization language $S \subseteq (\SigmaIO)^*$, we say a word $w \in (\SigmaIO)^*$ is \emph{$S$-controlled} if $w \in S$.
A language $T \subseteq (\SigmaIO)^*$ is \emph{$S$-controlled} if all its words are, namely, if $T \subseteq S$. 

\subparagraph{Lag, shift, and shiftlag.}

Different classes of automaton definable relations can be defined in terms of the interleaving of input and output symbols in synchronization languages, see \cite{conf/stacs/FigueiraL14}. These characterizations make use of the notions lag, shift, and shiftlag, as defined below.
For the examples illustrating these definitions, consider $\Sigma = \{a\}$ and $\Gamma = \{b\}$.
Given a word $w \in (\SigmaIO)^*$, a position $i \leq |w|$, and $\gamma \in \mathbbm{N}$, we say $i$ is \emph{$\gamma$-lagged} if the absolute value of the difference between the number of input and output symbols in $w[1:i]$ is $\gamma$, e.g., the last position $5$ in the word $abaaa$ is $3$-lagged.
Likewise, we define the notions \emph{$\glag{\gamma}$-lagged} and \emph{$\llag{\gamma}$-lagged}.

Given a word $w \in (\SigmaIO)^*$, a \emph{shift} of $w$ is a position $i \in \{1,\dots,|w|-1\}$ such that $w[i]\in\Sigma \leftrightarrow w[i+1]\notin \Sigma$.
Two shifts $i < j$ are \emph{consecutive} if there is no shift $\ell$ such that $i < \ell < j$.

Given a word $w \in (\SigmaIO)^*$, let $\mathit{shift}(w)$ be the number of shifts in $w$, let $\mathit{lag}(w)$ be the maximum lag of a position in $w$, and let $\mathit{shiftlag}(w)$  be the maximum $n \in \mathbbm{N}$ such that $w$ contains $n$ consecutive shifts which are $\glag{n}$-lagged.

As an example consider the word $w = aabaabbbbbbbaaab$: $\mathit{lag}(w)$ is $4$, because position twelve (at the end of the $b$-block) is $4$-lagged; $\mathit{shift}(w)$ is $5$; $\mathit{shiftlag}(w)$ is $2$, because the two consecutive shifts at positions five and twelve are $\glag2$-lagged.

We lift these notions to languages by taking the supremum in $\mathbbm N \cup \{\infty\}$, e.g., $\mathit{shift}(S) = \mathrm{sup}\{\mathit{shift}(w) \mid w \in S\}$, and likewise for $\mathit{lag}(S)$ and $\mathit{shiftlag}(S)$.
For example, the language $a^*b^*a^*$ has infinite lag and shiftlag of $2$, the language $a^*b^*a^*b^*$ has shiftlag of $3$.

\begin{theorem}[{\cite[Theorem~2]{conf/stacs/FigueiraL14}}]
  Let $S \subseteq (\SigmaIO)^*$ be a regular language.
  It is decidable whether 
  \begin{enumerate}
   \item $\mathit{lag}(S)$ is finite,
   \item $\mathit{shift}(S)$ is finite, and 
   \item $\mathit{shiftlag}(S)$ is finite.
  \end{enumerate}
\end{theorem}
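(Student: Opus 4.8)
The plan is to fix a nondeterministic finite automaton $A$ recognizing $S$ and to translate each of the three finiteness questions into a boundedness question about the runs of $A$. First I would pass to the \emph{trim} part of $A$, discarding every state that is not reachable from the initial state or cannot reach a final state; this changes neither $\llbracket S\rrbracket$ nor any of the three quantities, and it has the convenient effect that every cycle of $A$ lies on some accepting run and can therefore be pumped. I assign to each transition the weight $+1$ if it reads a letter of $\SigmaI$ and $-1$ if it reads a letter of $\SigmaO$, and I write $\beta(u)=|u|_{\SigmaI}-|u|_{\SigmaO}$ for the resulting \emph{balance} of a word $u$, so that the lag of a prefix is exactly $|\beta(\cdot)|$. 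All three quantities are now read off from the way balance and letter type evolve along accepting runs.

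For $\mathit{lag}(S)$ I would establish the criterion that $\mathit{lag}(S)$ is finite if and only if $A$ has no cycle of nonzero balance. If some cycle has balance $\neq 0$, pumping it makes $|\beta|$ of a prefix, hence the lag, arbitrarily large. Conversely, if every cycle has balance $0$, then any prefix-run reduces, without changing its balance, to a simple path, so $|\beta|$ is bounded by the number of states and the lag is bounded. Detecting a nonzero-balance cycle is decidable: inside each strongly connected component one tries to assign a potential $h$ with $h(q)-h(p)=\mathrm{weight}(p\to q)$ on every edge, and a nonzero-balance cycle exists exactly when this assignment is inconsistent. For $\mathit{shift}(S)$ the analogous criterion is that $\mathit{shift}(S)$ is finite if and only if no cycle of $A$ reads both an input and an output letter. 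A cycle carrying both letter types contains an internal type change, so pumping it produces unboundedly many shifts; if instead every non-trivial strongly connected component is monochromatic, then along an accepting run the letter type can change only on the (boundedly many) edges joining distinct components, which bounds the number of shifts. Since a component contains a two-type cycle exactly when it carries an edge of each type, this is again decidable by an SCC computation.

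The substantial case is $\mathit{shiftlag}(S)$, where the count of consecutive shifts and the lag threshold grow together, so the two previous phenomena must be \emph{correlated} along one run. The criterion I would prove is that $\mathit{shiftlag}(S)$ is infinite if and only if $A$ has a cycle $C'$ of nonzero balance together with a cycle $C$ reading both letter types such that $C'$ can reach $C$ (the case $C'=C$ included). For sufficiency I would pump $C'$ roughly $m$ times to drive $|\beta|$ above a target $N$, then traverse to $C$ and pump it $k\geq N$ times; since $C$ carries both types, each copy contributes at least one shift, and because the balance along $C$ is either preserved (if its balance is $0$) or pushed further from $0$ (otherwise), all these shifts sit at positions of lag $\glag{N}$. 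Choosing $m,k\approx N$ yields $N$ consecutive shifts that are $\glag{N}$-lagged, for every $N$. For necessity I would argue the contrapositive: if no nonzero-balance cycle can reach a two-type cycle, then every component that can reach a two-type component (the two-type components themselves included) is balance-consistent, so the balance of any prefix ending inside such a component is bounded by a constant $M$ depending only on $A$; consequently every shift occurring \emph{inside} a two-type component has lag at most $M$, while every remaining shift sits adjacent to one of the boundedly many inter-component edges. Hence for $N>\max(M,\text{\#inter-component edges})$ no window of $N$ consecutive $\glag{N}$-lagged shifts can exist, and $\mathit{shiftlag}(S)$ is finite. The criterion is then tested by an SCC computation combined with the potential test and a reachability test in the component DAG.

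The main obstacle is the necessity direction for $\mathit{shiftlag}$: the delicate point is the localization argument showing that \emph{every} component traversed up to and including a two-type component is balance-consistent, precisely because each of them reaches that component. This is what keeps the balance, and therefore the lag at interior shifts, bounded by a constant, forcing any long run of high-lag shifts to live on the few inter-component edges. Getting this bookkeeping right, rather than the individual pumping lemmas, is where the care is needed.
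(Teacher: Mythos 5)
The paper does not prove this theorem at all --- it is imported verbatim from Figueira and Libkin --- so there is no in-paper argument to compare yours against; I can only assess your proof on its own terms. Your three automaton-pattern characterizations are the standard ones and the argument is essentially sound: trimming, the ``no nonzero-balance cycle'' criterion for finite lag (with the potential/SCC test), the ``no two-type cycle'' criterion for finite shift (each shift not internal to a monochromatic component must touch one of the at most $|Q|-1$ inter-component edges of the run), and, for shiftlag, the criterion that a nonzero-balance cycle reaches a two-type cycle. Your necessity argument is the delicate part and it does work: every component that reaches a two-type component is balance-consistent, so any prefix ending in such a component has balance at most $|Q|$, so every shift internal to a two-type component is at most $|Q|$-lagged, and the remaining shifts are bounded in number by the inter-component edges; hence no window of $N$ consecutive $\glag{N}$-lagged shifts exists for large $N$.

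The one genuine imprecision is in the sufficiency direction for shiftlag: the claim that pumping $C$ after $C'$ leaves the balance ``either preserved or pushed further from $0$'' fails when $\beta(C)\neq 0$ has sign opposite to the imbalance accumulated by pumping $C'$ --- the balance then drifts back through $0$ and the middle copies of $C$ produce low-lagged shifts, so not \emph{all} of the $k$ shifts are $\glag{N}$-lagged. This is easily repaired and does not invalidate the criterion: if $\beta(C)\neq 0$, then $C$ is itself a nonzero-balance cycle reaching a two-type cycle (namely itself), so pump $C$ alone and take the shifts in its late copies; if $\beta(C)=0$, your argument is correct as stated since every position of $C^k$ stays within $|C|$ of the entry balance. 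With that repair the characterization and the resulting decision procedure (SCC computation, potential test, DAG reachability) are correct.
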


\subparagraph{Automata.}

A \emph{non-deterministic finite state automaton (NFA)} is a tuple $\mathcal A = (Q,\Sigma,q_0,\Delta,F)$, where $Q$ is a finite set of \emph{states}, $\Sigma$ is an alphabet, $q_0 \in Q$ is an \emph{initial state}, $\Delta \subseteq Q \times \Sigma \times Q$ is a \emph{transition relation}, and $F \subseteq Q$ is a set of \emph{final states}.
A \emph{run} $\rho$ of $\mathcal A$ from a state $p_0$ to a state $p_n$ on a non-empty word $w = a_1\cdots a_n$ is a non-empty sequence of transitions $(p_0,a_{1},p_{1})(p_1,a_2,p_2)\cdots(p_{n-1},a_n,p_n) \in \Delta^*$.
We write $\mathcal A\colon p_0 \xrightarrow{w} p_n$ if such a run exists.
A \emph{run} on the empty word is a single state, we write $\mathcal A\colon p \xrightarrow{\varepsilon} p$.
A run is \emph{accepting} if it starts in the initial state and ends in a final state.
For each state $q \in Q$, we denote by $\mathcal A_q$ the NFA that is obtained from $\mathcal A$ by setting the initial state to $q$.

The language \emph{recognized} by $\mathcal A$ is $L(\mathcal A) = \{ w\in\Sigma^* \mid \mathcal A\colon q_0 \xrightarrow{w} q \in F\}$.
The class of languages recognized by NFAs is the class of \emph{regular} languages.

An NFA is \emph{deterministic} (a DFA) if for each state $q \in Q$ and $a \in \Sigma$ there is at most one outgoing transition.
In this case, it is more convenient to express $\Delta$ as a (partial) function $\delta: Q \times \Sigma \rightarrow Q$.
Furthermore, let $\delta^*$ denote the usual extension of $\delta$ from symbols to words.

\subparagraph{Transducers.}

A \emph{non-deterministic finite state transducer (NFT)}, \emph{transducer} for short, is a tuple $\mathcal T =
(Q,\Sigma,\Gamma,q_0,\Delta,F,f)$, where $Q$ is finite set of states,
$\Sigma$ and $\Gamma$ are alphabets, $q_0 \in Q$ is an initial
state, $\Delta \subseteq Q \times \Sigma^* \times \Gamma^* \times Q$
is a finite set of transitions, $F \subseteq Q$ is set of final states, and $f\colon F \to \Gamma^*$ is a \emph{final output function}.
A \emph{non-empty run} $\rho$ is a non-empty sequence of transitions $(p_0,u_1,v_1,p_1)(p_1,u_2,v_2,p_2)\cdots (p_{n-1},u_{n},v_{n},p_n) \in \Delta^*$.
The \emph{input} (resp.\ \emph{output}) of $\rho$ is $u = u_0\cdots u_{n}$ (resp.\ $v = v_0\cdots v_{n}$).
Shorthand, we write $\mathcal T\colon p_0 \xrightarrow{u|v} p_n$.
An \emph{empty run} is a single state, we write $\mathcal T\colon p \xrightarrow{\varepsilon|\varepsilon} p$.
A run is \emph{accepting} if it starts in the initial state and ends in a final state.

The relation \emph{recognized} by $\mathcal{T}$ is
$R(\mathcal T) = \{ (u,v\cdot f(q)) \in \Sigma^* \times \Gamma^* \mid \mathcal T\colon q_0 \xrightarrow{u|v} q \in F\}$.
The class of relations recognized by NFTs is the class of \emph{rational} relations.
The transducer is \emph{subsequential} if each transition reads exactly one input symbol and the transitions are deterministic on the input, in other words, if
$\Delta$ corresponds to a (partial) function $Q \times \Sigma \to \Gamma^* \times Q$.
Then it defines a \emph{subsequential function}.

A transducer induces a synchronization language in a natural way.
A synchronization of a pair of words can be seen as a representation of how a transducer consumes the input and output word of a pair.
We denote the synchronization language of a transducer $\mathcal T$ by $S(\mathcal T)$ which is formally defined as the set
\[
\begin{array}{l}
  \{ u_1v_1u_2v_2\cdots u_nv_nf(p_n) \in (\SigmaIO)^* \mid \\ 
  \qquad\qquad\qquad (q_0,u_1,v_1,p_1)(p_1,u_2,v_2,p_2)\cdots (p_{n-1},u_{n},v_{n},p_n) \in \Delta^* \text{ and } p_n \in F\}.
\end{array}
\]
It is easy to see that an NFA for $S(\mathcal T)$ can be obtained from $\mathcal T$.
Note that this directly implies that $S(\mathcal T)$ is regular.
Given a regular language $S \subseteq (\SigmaIO)^*$, we say that $\mathcal T$ is \emph{$S$-controlled} if $S(\mathcal T)$ is $S$-controlled.
We say that $\mathcal T$ has \emph{finite lag} (\emph{shift} resp.\ \emph{shiftlag}) if $S(\mathcal T)$ has finite lag (shift resp.\ shiftlag).

Conversely, for a regular synchronization language $S$ one can easily construct a transducer $\mathcal T$ such that $S(\mathcal T) = S$.



\subparagraph{Characterization of relation classes via synchronization languages.}

In this work, we consider classes of relations that can be defined by regular synchronization languages, that is, subclasses of rational relations. Two standard subclasses of rational relations are automatic relations which are relations that can be defined in terms of automata that synchronously read input and output symbols, and recognizable relations which are relations that can be expressed as finite unions of products of regular languages (see, e.g., \cite{DBLP:journals/ita/CartonCG06}). 
We do not formalize these definitions here because we are working with a different characterization of these relation classes, as explained below.
We denote by \Rec, \Aut, and \Rat\ the classes of recognizable, automatic, and rational relations, respectively.
It is well-known that $\Rec \subsetneq \Aut \subsetneq \Rat$.

We recall the connection between these classes of relations and synchronization languages as established by \cite{conf/stacs/FigueiraL14}.
For a regular synchronization language $S \subseteq (\SigmaIO)^*$, let
\[
  \textnormal{\textsc{Rel}}(S) = \{ \llbracket T \rrbracket \!\mid\! T \subseteq (\SigmaIO)^* \text{ is a regular $S$-controlled language}\}
\]
be the set of relations that can be given by regular $S$-controlled synchronization languages.


\begin{proposition}[\cite{conf/stacs/FigueiraL14}]
The following properties hold.
 \begin{enumerate}
  \item $\textnormal{\textsc{Rel}}(\Sigma^*\Gamma^*) = \Rec$.
  \item $\textnormal{\textsc{Rel}}((\Sigma\Gamma)^*(\Sigma^* + \Gamma^*)) = \Aut$.
  \item $\textnormal{\textsc{Rel}}((\Sigma + \Gamma)^*) = \Rat$.
 \end{enumerate}
\end{proposition}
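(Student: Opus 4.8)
The plan is to prove the three equalities characterizing $\textsc{Rec}$, $\textsc{Aut}$, and $\textsc{Rat}$ as $\textsc{Rel}(S)$ for the three canonical synchronization languages, by showing each $\textsc{Rel}(S)$ is contained in the stated relation class and conversely. For each item I would first unwind the definition of $\textsc{Rel}(S)$: a relation $R$ lies in $\textsc{Rel}(S)$ iff there is a \emph{regular} $S$-controlled language $T \subseteq (\SigmaIO)^*$ with $\llbracket T \rrbracket = R$. Since any regular $T$ is itself a regular synchronization language, and the excerpt already notes that for any regular synchronization language one can construct a transducer $\mathcal T$ with $S(\mathcal T) = T$, every $R \in \textsc{Rel}(S)$ is rational; this immediately handles item~(3) in one direction. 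For the reverse inclusion of item~(3), given an arbitrary rational relation $R = R(\mathcal T)$, I would take $T = S(\mathcal T)$, which is regular and trivially $(\Sigma+\Gamma)^*$-controlled since $(\Sigma+\Gamma)^* = (\SigmaIO)^*$ is all of the universe; then $\llbracket T \rrbracket = R$, proving $\textsc{Rat} \subseteq \textsc{Rel}((\Sigma+\Gamma)^*)$.

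For item~(1), the governing observation is that an $S$-controlled language with $S = \Sigma^*\Gamma^*$ consists only of words of the form $uv$ with $u \in \Sigma^*$ and $v \in \Gamma^*$, so every synchronization reads the entire input before any output. The $\subseteq$ direction proceeds by taking a regular $T \subseteq \Sigma^*\Gamma^*$ and using a standard automaton decomposition: in a DFA for $T$, split states into those reachable by reading only $\Sigma$-symbols and the rest; each accepting path factors as a $\Sigma^*$-prefix followed by a $\Gamma^*$-suffix, and grouping paths by the state $q$ at which the first $\Gamma$-symbol is read (or the accepting state reached) exhibits $\llbracket T \rrbracket$ as a finite union $\bigcup_q A_q \times B_q$ of products of regular languages, i.e.\ a recognizable relation. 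For the reverse inclusion, given a recognizable $R = \bigcup_{i} A_i \times B_i$ with $A_i, B_i$ regular, I would set $T = \bigcup_i A_i B_i$ (concatenation of languages over disjoint alphabets), which is regular, is $\Sigma^*\Gamma^*$-controlled by construction, and satisfies $\llbracket T \rrbracket = \bigcup_i A_i \times B_i = R$ because $\pi_{\inp}(uv) = u$ and $\pi_{\outp}(uv) = v$ for $u \in \Sigma^*$, $v \in \Gamma^*$.

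For item~(2), the $S$-controlled words with $S = (\Sigma\Gamma)^*(\Sigma^* + \Gamma^*)$ are exactly those that strictly alternate input and output symbols until one tape is exhausted, which is precisely the ``convolution'' encoding of a pair $(u,v)$ used to define automatic relations. The $\supseteq$ direction is the conceptually cleanest: given an automatic relation recognized by a synchronous automaton over the convolution alphabet, I would translate each convolution letter into the corresponding length-one or length-two word in $(\Sigma\Gamma)^*(\Sigma^*+\Gamma^*)$, obtaining a regular $S$-controlled $T$ with $\llbracket T\rrbracket = R$. The $\subseteq$ direction reverses this translation, reading a regular $T \subseteq (\Sigma\Gamma)^*(\Sigma^*+\Gamma^*)$ and building a synchronous automaton whose runs recover the convolution of $\llbracket w \rrbracket$.

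The main obstacle is the clean handling of the alternation/decomposition bookkeeping in the $\subseteq$ directions of items~(1) and (2): one must verify that the $S$-control constraint genuinely forces the rigid word shape (all input before output for item~(1); strict alternation with a single unbalanced tail for item~(2)) and that the induced automaton construction respects regularity and yields exactly $\llbracket T\rrbracket$. Once the structural shape of $S$-controlled words is pinned down, each direction reduces to a routine automaton transformation, so I would front-load the argument with a lemma-style observation describing the normal form of words in each $S$, and then treat the automaton constructions as standard. Since the statement is attributed to \cite{conf/stacs/FigueiraL14}, I would keep the exposition brief and cite that work for the detailed constructions rather than reproving them in full.
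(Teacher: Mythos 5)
Your sketch is correct, but note that the paper does not prove this proposition at all: it is recalled verbatim from \cite{conf/stacs/FigueiraL14} as background, so there is no in-paper argument to compare against. The three reductions you outline (the state-splitting decomposition of a regular subset of $\Sigma^*\Gamma^*$ into $\bigcup_q A_q\times B_q$, the back-and-forth translation between the alternating encoding and the convolution alphabet, and the observation that $S(\mathcal T)$ witnesses $R(\mathcal T)\in\textsc{Rel}((\Sigma+\Gamma)^*)$) are exactly the standard arguments from the cited source, and deferring the routine automaton constructions to that reference is the appropriate level of detail here.
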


Let $\all$ be the class of all regular synchronization languages, $\fsl$ and $\fs$ be the class of regular synchronization languages with finite shiftlag and finite shift, respectively.
Clearly, $\fs \subsetneq \fsl \subsetneq \all$.
We have the following result.

\begin{theorem}[\cite{conf/stacs/FigueiraL14}]
 Let $S \subseteq (\SigmaIO)^*$ be a regular language, then:
 \begin{enumerate}
  \item $\textnormal{\textsc{Rel}}(S) \subseteq \Rec$ iff $S \in \fs$,
  \item $\textnormal{\textsc{Rel}}(S) \subseteq \Aut$ iff $S \in \fsl$,
  \item $\textnormal{\textsc{Rel}}(S) \subseteq \Rat$ iff $S \in \all$.
 \end{enumerate}
\end{theorem}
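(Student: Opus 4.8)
The plan is to establish the three equivalences one by one, treating item~3 as immediate. Since $\textsc{Rel}$ is monotone—if $S\subseteq S'$ then every regular $T\subseteq S$ is also an $S'$-controlled language—and since $S\subseteq(\SigmaIO)^*$, we get $\textsc{Rel}(S)\subseteq\textsc{Rel}((\SigmaIO)^*)=\Rat$ by the proposition above; as $S\in\all$ holds by assumption, item~3 follows. For items~1 and~2 I would prove each direction separately, and for the two ``finiteness $\Rightarrow$ subclass'' directions I would use a single uniform idea, \emph{resynchronization}: to show $\textsc{Rel}(S)\subseteq\Rec$ (resp.\ $\subseteq\Aut$) it suffices, by the proposition above, to transform every regular $S$-controlled $T$ into a regular $T'\subseteq\SigmaI^*\SigmaO^*$ (resp.\ $T'\subseteq(\SigmaI\SigmaO)^*(\SigmaI^*+\SigmaO^*)$) with $\llbracket T'\rrbracket=\llbracket T\rrbracket$, the point being that the reordering preserves regularity precisely because the relevant parameter is finite.

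For item~1, direction ``$\Leftarrow$'', assume $\mathit{shift}(S)\le k$. Every $w\in T$ decomposes into at most $k+1$ maximal monochromatic blocks that alternate between $\SigmaI$ and $\SigmaO$. For each of the finitely many block patterns $p$, the regular language $T\cap S_p$ (with $S_p$ the words having pattern $p$) can be split, using the states of an NFA for $T$ at the block boundaries, into a finite union of concatenations $L_1\cdots L_m$ of regular monochromatic languages. Gathering the $\SigmaI$-blocks and the $\SigmaO$-blocks separately turns each piece into a rectangle $A\times B$ with $A\subseteq\SigmaI^*$ and $B\subseteq\SigmaO^*$ regular, so $\llbracket T\rrbracket$ is a finite union of products of regular languages, i.e.\ recognizable. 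Direction ``$\Rightarrow$'' I would prove by contraposition. If $\mathit{shift}(S)=\infty$, then in a trimmed NFA for $S$ no bound on shifts can come from the acyclic structure, so some strongly connected component carries both a $\SigmaI$- and a $\SigmaO$-transition, yielding a cycle $\beta$ with $\pi_{\inp}(\beta)\neq\varepsilon\neq\pi_{\outp}(\beta)$ together with words $\alpha,\gamma$ reaching and leaving it. Then $T=\alpha\beta^*\gamma\subseteq S$ is regular and $\llbracket T\rrbracket$ is the graph of a bijection $n\mapsto(u_\alpha u_\beta^n u_\gamma,\,v_\alpha v_\beta^n v_\gamma)$ with $u_\beta,v_\beta\neq\varepsilon$; a pigeonhole argument over any purported finite union of products then produces a crossed pair with mismatched exponents, contradicting recognizability.

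For item~2 the scheme is parallel, with $\mathit{shift}$ replaced by $\mathit{shiftlag}$. For ``$\Leftarrow$'' I would resynchronize each $w\in T$ greedily, emitting an $\SigmaI\SigmaO$-pair whenever an input and an output symbol are both available and appending the leftover as a final $\SigmaI^*$- or $\SigmaO^*$-block, so that $T'\subseteq(\SigmaI\SigmaO)^*(\SigmaI^*+\SigmaO^*)$ and $\llbracket T'\rrbracket=\llbracket T\rrbracket$; finiteness of $\mathit{shiftlag}(S)$ is what keeps this realizable by a finite-state device, since a high lag can only be accompanied by boundedly many further shifts, so high-lag regions are essentially monochromatic and feed only the final block. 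For ``$\Rightarrow$'' I would again argue contrapositively: from $\mathit{shiftlag}(S)=\infty$ I would extract, by pumping, a factorization $\alpha\delta^*\beta\subseteq S$ in which iterating $\delta$ simultaneously increases both the lag and the number of shifts occurring at that lag, and then show that $\llbracket\alpha\delta^*\beta\rrbracket$ cannot be automatic by running an NFA for any candidate $T'\subseteq(\SigmaI\SigmaO)^*(\SigmaI^*+\SigmaO^*)$ on its synchronizations and deriving a pumping contradiction against the forced single terminal monochromatic block.

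The main obstacle is the ``$\Leftarrow$'' direction of item~2: proving that the greedy resynchronization of a finite-shiftlag language is again \emph{regular}. Unlike the finite-shift case, where the bounded number of blocks makes the rectangle decomposition transparent, here the lag—and hence the pairing buffer—can grow without bound, so one must argue carefully that the bound on high-lag shifts confines all unbounded growth to a single terminal block that a finite automaton can defer to the tail. Extracting the precise pumpable witness for the converse of item~2 directly from the definition of $\mathit{shiftlag}$ is the second delicate point.
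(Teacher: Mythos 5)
The paper does not actually prove this theorem: it is quoted from \cite{conf/stacs/FigueiraL14} and used as a black box, so there is no in-paper argument to compare yours against; I can only assess the sketch on its own terms. Item~3 is indeed immediate, and your treatment of item~1 is sound: the block decomposition along NFA states for ``$\Leftarrow$'' correctly yields a finite union of rectangles (the independent choice of blocks is exactly what makes $\llbracket L_1\cdots L_m\rrbracket$ the product of the concatenated $\SigmaI$-blocks with the concatenated $\SigmaO$-blocks), and for ``$\Rightarrow$'' a pigeonhole over alternate shift positions does produce a mixed cycle $\beta$ with $\pi_{\inp}(\beta)\neq\varepsilon\neq\pi_{\outp}(\beta)$, whence $\llbracket\alpha\beta^*\gamma\rrbracket$ is an infinite set of pairs with both coordinates strictly increasing in length and is therefore not a finite union of rectangles.

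Item~2 is where the proposal stops being a proof, and the converse direction is not merely ``delicate'': the witness you propose to extract need not exist. Iterating a single cycle $\delta$ increases the lag only if $|\pi_{\inp}(\delta)|\neq|\pi_{\outp}(\delta)|$, and infinite shiftlag does not by itself supply such a cycle. For $\SigmaI=\{a\}$, $\SigmaO=\{b\}$ the language $S=a^*(ba)^*b^*$ has infinite shiftlag (the word $a^{n+2}(ba)^nb^{n+2}$ has $2n$ consecutive shifts, all $\glag{n}$-lagged), yet every mixed cycle of its minimal DFA reads equally many $a$'s and $b$'s, so every $\alpha\delta^*\beta\subseteq S$ defines an automatic relation and your length-mismatch pigeonhole never fires; in fact every regular subset of this $S$ is a finite union of sets $a^I(ba)^Jb^K$ with ultimately periodic exponent sets, all of which define automatic relations. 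A correct proof of the converse must therefore exploit letter content rather than lengths: with $\SigmaI=\{a,a'\}$ the regular subset $a^*(ba')^*$ of $\SigmaI^*(\SigmaO\SigmaI)^*\SigmaO^*$ defines $\{(a^ia'^j,b^j)\}$, whose convolutions for $i\leq j$ are $(a,b)^i(a',b)^{j-i}(a',\#)^i$ and must match two separated block lengths, hence form a non-regular set. Extracting such witnesses in general, as well as justifying that your greedy resynchronization in the ``$\Leftarrow$'' direction is regular, both rest on the structural decomposition $S\subseteq L_{\leq\gamma}\cdot(\SigmaI^*+\SigmaO^*)^n$ that the paper quotes separately as \cref{lemma:form}; without proving and deploying that lemma, both halves of item~2 remain open in your write-up.
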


Note that the previous theorem is a statement about the class of relations $\textnormal{\textsc{Rel}}(S)$ defined by all regular subsets of $S$. In particular, if $S \in \fs$ then $\llbracket S \rrbracket \in \Rec$, if $S \in \fsl$ then $\llbracket S \rrbracket \in \Aut$. However, the converse generally does not hold, e.g., consider $S = (\SigmaIO)^*$, clearly $\llbracket S \rrbracket = (\SigmaI^* \times \SigmaO^*) \in \Rec$ but $S \notin \fs$.

Given a class $\mathcal C$ of regular languages over $\SigmaIO$, we say that a regular language $S_0\subseteq (\SigmaIO)^*$ is a \emph{canonical representative} of $\mathcal C$ if $\textnormal{\textsc{Rel}}(S_0) = \bigcup_{S \in \mathcal C} \textnormal{\textsc{Rel}}(S)$.
A canonical representative is called \emph{effective} if for each regular $S$-controlled $T \subseteq (\SigmaIO)^*$ for some $S \in \mathcal C$ there is a regular $S_0$-controlled $T' \subseteq (\SigmaIO)^*$ such that $\llbracket T \rrbracket = \llbracket T' \rrbracket$ and $T'$ can be constructed in finite time.

\begin{theorem}[\cite{conf/stacs/FigueiraL14}]\label{thm:canonical} \label{thm:canonical-representatives}
The following properties hold.
 \begin{enumerate}
  \item $\Sigma^*\Gamma^*$ is an effective canonical representative of $\fs$.
  \item $(\Sigma\Gamma)^*(\Sigma^* + \Gamma^*)$ is an effective canonical representative of $\fsl$.
  \item $(\Sigma + \Gamma)^*$ is an effective canonical representative of $\all$.
 \end{enumerate}
\end{theorem}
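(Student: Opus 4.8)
The plan is to prove each item in two parts: the set equality $\textsc{Rel}(S_0)=\bigcup_{S\in\mathcal C}\textsc{Rel}(S)$ that makes $S_0$ a canonical representative, and the effective resynchronisation behind the word ``effective''. For the equality I would argue uniformly, writing $\mathcal D$ for the relation class ($\Rec$, $\Aut$, $\Rat$) paired with $\mathcal C$ ($\fs$, $\fsl$, $\all$). The inclusion $\supseteq$ follows from the two preceding results: the characterisation theorem gives $\textsc{Rel}(S)\subseteq\mathcal D$ for every $S\in\mathcal C$, while the Proposition gives $\textsc{Rel}(S_0)=\mathcal D$; hence $\textsc{Rel}(S)\subseteq\textsc{Rel}(S_0)$ for each such $S$, and the whole union lies in $\textsc{Rel}(S_0)$. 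For $\subseteq$ it suffices to check $S_0\in\mathcal C$, so that $\textsc{Rel}(S_0)$ is itself a term of the union; this again follows from the Proposition and the characterisation theorem, since e.g.\ $\textsc{Rel}(\Sigma^*\Gamma^*)=\Rec\subseteq\Rec$ forces $\Sigma^*\Gamma^*\in\fs$, and likewise $(\Sigma\Gamma)^*(\Sigma^*+\Gamma^*)\in\fsl$ and $(\Sigma+\Gamma)^*\in\all$. I would stress that $\supseteq$ genuinely uses the semantic characterisation and not mere syntactic containment, since e.g.\ $\Sigma^*\Gamma^*\Sigma^*\in\fs$ is not a subset of $\Sigma^*\Gamma^*$.

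For effectiveness, fix a regular $T$ that is $S$-controlled for some $S\in\mathcal C$; then $T$ inherits the finiteness of $S$, e.g.\ $\mathit{shift}(T)\le\mathit{shift}(S)<\infty$ when $\mathcal C=\fs$. The target $T'$ is in fact forced, because each pair $(u,v)$ has a unique synchronisation inside $S_0$ — the word $uv$ for $\Sigma^*\Gamma^*$, and the letter-by-letter interleaving for $(\Sigma\Gamma)^*(\Sigma^*+\Gamma^*)$ — so $T'$ must be exactly the set of these canonical synchronisations of the pairs in $\llbracket T\rrbracket$. Item 3 is then immediate: $(\Sigma+\Gamma)^*$ contains every synchronisation, so $T'=T$ works. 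For item 1 I would use that, with $k=\mathit{shift}(T)$, every $w\in T$ splits into at most $k+1$ maximal monochromatic blocks, and permuting these blocks changes neither $\pi_{\inp}(w)$ nor $\pi_{\outp}(w)$, hence not $\llbracket w\rrbracket$. I would partition $T$ into the finitely many regular pieces $T\cap B_\tau$, one per block pattern $\tau\in\{\inp,\outp\}^{\le k+1}$ (with $B_\tau$ the regular set of words of exactly that block shape), and for each $\tau$ realise the fixed permutation sorting all input blocks before all output blocks by a guessing construction: a new automaton reads the blocks in sorted order while guessing the boundedly many ($\le k+1$) states visited at the block boundaries in the original order and checking consistency against an automaton for $T\cap B_\tau$. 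The resulting image $T'_\tau$ is regular and effective, and $T'=\bigcup_\tau T'_\tau\subseteq\Sigma^*\Gamma^*$ finishes item 1.

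The main obstacle is item 2, where $T$ may have unbounded shift \emph{and} unbounded lag, so no bounded-block permutation applies. Here I would reduce effectiveness to the effective construction of a synchronous (convolution) automaton for $\llbracket T\rrbracket$ out of the finite-shiftlag NFA for $T$, since the canonical interleavings of an automatic relation form precisely the regular language read by its convolution automaton. A naive letter-by-letter transduction $w\mapsto\mathsf{auto}(\llbracket w\rrbracket)$ is hopeless, as the global alignment pairing the $j$-th input with the $j$-th output letter is displaced by an unbounded amount across any long monochromatic block. Instead the plan is to exploit the quantitative meaning of finite shiftlag: such a word decomposes into a bounded number of long monochromatic blocks (the high-lag part, whose number is controlled by the shiftlag bound) interspersed with bounded-lag parts in which input and output stay balanced up to a constant. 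On the bounded-lag parts a finite buffer of size the lag bound re-pairs input and output letters synchronously, whereas the long blocks are routed entirely to one tape; assembling these across the boundedly many block boundaries yields the convolution automaton, and hence the effective $T'$. Formalising this decomposition and its finite-state realisation is the crux of the argument, and is exactly the content of the structural analysis in \cite{conf/stacs/FigueiraL14}.
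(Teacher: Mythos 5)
The paper offers no proof of this theorem at all --- it is imported from \cite{conf/stacs/FigueiraL14} --- so your attempt can only be compared against that source and against the surrounding material. Your derivation of the set equality $\textsc{Rel}(S_0)=\bigcup_{S\in\mathcal C}\textsc{Rel}(S)$ from the preceding Proposition (which gives $\textsc{Rel}(S_0)=\mathcal D$) together with the characterization theorem (which gives $\textsc{Rel}(S)\subseteq\mathcal D$ iff $S\in\mathcal C$, hence both the inclusion of every term of the union and the membership $S_0\in\mathcal C$) is correct and is an economical way to dispose of the non-effective half using only what the paper has already stated. Items 1 and 3 of the effectiveness claim are also handled completely: $T'=T$ for $\all$, and for $\fs$ the block-sorting construction with guessed boundary states is sound. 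One imprecision: the sentence ``permuting these blocks changes neither $\pi_{\inp}(w)$ nor $\pi_{\outp}(w)$'' is false for arbitrary permutations (swapping two input blocks reorders $\pi_{\inp}(w)$); it holds only for permutations that preserve the relative order of the input blocks among themselves and of the output blocks among themselves, which the stable sort you actually apply does, so this is fixable wording rather than an error.

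The substantive point is item 2, which is the only hard part of the theorem. You correctly identify the needed structure --- the decomposition $T\subseteq L_{\leq\gamma}\cdot(\SigmaI^*+\SigmaO^*)^n$ that the paper later records as \cref{lemma:form}, a $\gamma$-bounded buffer to re-pair letters on the low-lag prefix, and a bounded number of monochromatic blocks whose letters appear as order-preserved subsequences of the convolution and can be handled by guessing the $\leq n$ boundary states and running parallel simulations, exactly as in your item-1 argument --- but you then explicitly defer the finite-state assembly to \cite{conf/stacs/FigueiraL14}, which is the very paper the theorem is attributed to. So what you have is an accurate reconstruction of the proof architecture with the crux left as a pointer rather than a self-contained argument; completing it requires carrying out the product construction you describe, and nothing in your sketch suggests it would fail.
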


\section{Uniformization problems}\label{sec:unif}

In this section, we take a systematic look at uniformization problems defined by synchronization languages in the classes $\all, \fsl$, and $\fs$. Roughly speaking, these problems are about deciding whether a given relation (given by a synchronization language $S$), has a uniformization by a subsequential transducer whose synchronization language is contained in a given synchronization language $T$.

We begin with an example of a rational relation that is uniformizable by a subsequential transducer.
Afterwards, we formally define our problem settings.

\begin{example}\label{ex:intro2}
Consider the rational relation over the input alphabet $\SigmaI = \{a,b\}$ and the output alphabet $\SigmaO = \{c\}$
\[
  R_1 = \{(aw,c^{|w|_a}) \mid w \in \SigmaI^*\} \cup \{(wb,c^{|w|_b}) \mid w \in \SigmaI^*\}
\]
 defined by the synchronization language $ac(ac + b)^* + (a+bc)^*bc$.
The relation is uniformized by the subsequential transducer depicted in \cref{fig:intro2} on the left-hand side. If the input starts with $a$, then the transducer produces output $c$ for each input $a$ and no output for an input $b$, and vice versa otherwise. 
\end{example}

\begin{figure}[t]
    \centering
      \begin{tikzpicture}[thick,node distance=4em]
        \tikzstyle{every state}+=[inner sep=4pt, minimum size=3pt];
        \node[state, initial] (0) {};
        \node[state, right of= 0, accepting] (1) {};
        \node[state, below of= 0, accepting] (2) {};
        \node[state, right of= 2] (3) {};

        \draw[->] (0) edge                node[]        {$a|c$} (1);
        \draw[->] (0) edge                node[swap]  {$b|c$} (2);
        \draw[->] (1) edge[loop above]    node        {$a|c, b|\varepsilon$} ();
        \draw[->] (2) edge[loop left]    node        {$b|c$} ();
        \draw[->] (2) edge[bend left]               node[]  {$a|\varepsilon$} (3);
        \draw[->] (3) edge[bend left]                node[]  {$b|c$} (2);
        \draw[->] (3) edge[loop above]    node        {$a|\varepsilon$} ();
        \draw [->] (1) -- +(2em,0) node[midway]{$\varepsilon$};
        \draw [->] (2) -- +(0,-2em) node[midway,swap]{$\varepsilon$};

        \begin{scope}[xshift=15em]
        \node[state, initial] (0) {};
        \node[state, right of= 0, xshift=2em, accepting] (1) {};

        \draw[->] (0) edge[loop above]    node        {$a|\varepsilon$} ();
        \draw[->] (0) edge                node[]        {$b|d, c|e$} (1);
        \draw[->] (1) edge[loop above]    node[]  {$a|d$} ();
        \draw [->] (1) -- +(2em,0) node[midway]{$\varepsilon$};
        \end{scope}
       \end{tikzpicture}
      \caption{
       The subsequential transducer on the left uniformizes the relation $R_1$ from \cref{ex:intro2},
       the subsequential transducer on the right uniformizes the relation $R_2$ from \cref{ex:intro}.
       }
     \label{fig:intro2}
    \end{figure}
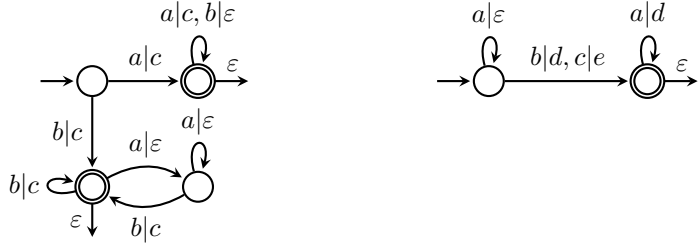 

The general problem whether a rational relation has a uniformization by a subsequential transducer is undecidable, see \cite[Theorem~17]{CarayolL14}. We consider the decision problem that introduces a parameter $T$ for the synchronizations that the subsequential transducer can use:

\begin{definition}[Resynchronized uniformization problem]\label{def:resync-unif}
 The \emph{resynchronized uniformization problem} asks, given a regular source language $S \subseteq (\SigmaIO)^*$ and a regular target language $T \subseteq (\SigmaIO)^*$, whether there exists a $T$-controlled subsequential transducer that uniformizes $\llbracket S \rrbracket$.
\end{definition}

We start by giving an example.

\begin{example}\label{ex:intro}
Consider the rational relation over the input alphabet $\SigmaI = \{a,b,c\}$ and output alphabet $\SigmaO = \{d,e\}$
\[
  R_2 = \{(a^iba^j,d(d+e)^k) \mid i,j,k \geq 0 \} \cup \{(a^ica^j,e(d+e)^k) \mid i,j,k \geq 0 \}
\]
defined by the synchronization language $S = da^*ba^*(d+e)^* + ea^*ca^*(d+e)^*$.
Furthermore, consider $T = \SigmaI^*(\SigmaI\SigmaO)^+$.
The relation is uniformized by the $T$-controlled subsequential transducer depicted in \cref{fig:intro2} on the right-hand side.
The uniformization realized by the transducer maps inputs of the form $a^iba^j$ to $d^{j+1}$ and inputs of form $a^ica^j$ to $ed^{j}$ for all $i,j \geq 0$.

$R_2$ can also be uniformized by a subsequential transducer that is $(\SigmaI^*\SigmaO^*)$-controlled.
To be more specific, $R_2$ can be uniformized by a subsequential transducer that is $(\SigmaI^*\SigmaO)$-controlled. Such a transducer can first read the whole input, and then output $d$ if the input is of the form $a^*ba^*$, and output $e$ if the input is of the form $a^*ca^*$.

But $R_2$ cannot be uniformized by a subsequential transducer that is $(\SigmaI\SigmaO)^*$-controlled. Such a transducer would have to make the first output after reading the first input symbol. If this input symbol is $a$, then it is not yet decided if the output has to start with $d$ or with~$e$.
\end{example}

\subsection{Overview of the results}\label{sec:overview}

\begin{table}[t]
\begin{center}
\begin{tabular}{|l|c|c|c|} \hline
 \backslashbox[50mm]{target}{source} & \parbox{25mm}{rational\\$S \in \all$} & \parbox{25mm}{automatic\\$S \in \fsl$} & \parbox{25mm}{recognizable\\$S \in \fs$} \\ \hline 
 \hline

 \parbox{40mm}{\vspace{1mm}unrestricted\\$T = (\SigmaIO)^* \in \all$\vspace{1mm}} &\parbox{22mm}{\vspace{1mm}\centering undec.\\{\centering \cite[Theorem~17]{CarayolL14}}\vspace{1mm}} &  \parbox{22mm}{\vspace{1mm}\centering dec.\\{\centering \cite[Theorem~18]{CarayolL14}}\vspace{1mm}} & {\color{gray}always}\\ \hline

 \parbox{45mm}{\vspace{1mm}synchronous\\$T\! =\! (\SigmaI\SigmaO)^*(\SigmaI^*\! +\! \SigmaO^*)\! \in\! \fsl$\vspace{1mm}} 	& \parbox{20mm}{\centering undec.\\{\centering (\cabref{thm:undec-automatic-sync})}} &  \parbox{26mm}{\vspace{1mm}\centering dec.\\{\centering \cite{buechi}}\vspace{1mm}} & dec. \\ \hline

 \parbox{40mm}{\vspace{1mm}input before output\\$T = \SigmaI^*\SigmaO^*\in \fs$\vspace{1mm}} & \parbox{20mm}{\centering dec.\\{\centering (\cabref{thm:unif-by-rec})}}  &  \parbox{22mm}{\vspace{1mm}\centering dec.\\{\centering \cite[Proposition~20]{CarayolL14}}\vspace{1mm}} & {\color{gray}always}\\ \hline 
 \hline

 $T \in \all$ & undec. & \parbox{20mm}{\vspace{1mm}\centering undec.\\{\centering (\cabref{cor:undec-aut-sync})}\vspace{1mm}} & \parbox{25mm}{\vspace{1mm}\centering dec.\\{\centering\cite[Theorem~25]{DBLP:conf/icalp/Winter18}}\vspace{1mm}}\\ \hline

 $T \in \fsl$ & undec. & \parbox{25mm}{\vspace{1mm}\centering dec.\\{\centering\cite[Theorem~10]{DBLP:conf/icalp/Winter18}}\vspace{1mm}} & dec.\\ \hline
 
 \parbox{40mm}{\vspace{2mm}$T \in \fs$\vspace{1mm}} & open & dec. & dec. \\ \hline 
 \hline

 $\exists T \in \all$ & \multicolumn{3}{c|}{\color{gray}reduces to $T = (\SigmaIO)^*$}\\ \hline

 $\exists T \in \fsl$ & undec. (\cabref{thm:undec-automatic-fsl}) & dec. (\cabref{thm:automatic-fsl}) & {\color{gray}always}\\ \hline

 $\exists T \in \fs$ & dec. (\cabref{cor:unif-finiteshift}) & dec. & {\color{gray}always}\\ \hline
\end{tabular}
\end{center}
\caption[]{Overview of results for instances of the resynchronized uniformization problem (upper and middle rows) and variants (lower rows).
The results are described in \cref{sec:overview}.
}
\label{tab:overview}
\end{table}

\cref{tab:overview} provides an overview of known and new (un)decidability results for the resynchronized uniformization problem for different types of source relations and target synchronization relations.
The entries for the new results contain references to the corresponding statement in this paper, which are proved in \cref{sec:dec-unif} and \cref{sec:undec-unif}. Some entries contain a reference to the literature. The results by \cite[]{DBLP:conf/icalp/Winter18} (see \cite[Theorems 10 and 25]{DBLP:conf/icalp/Winter18}) use the same terminology as in this paper, so they do not need further explanation. Some other results from the literature are not stated in terms of synchronization languages, so we explain their connection to our setting below, and we also briefly explain the table entries without a further reference, which are either trivial or are direct consequences of other entries in the table.

As classes for source relations we consider rational, automatic, and recognizable relations, represented by the columns in the table. Formally, these relations are specified by regular languages from the classes $\all$, $\fsl$, and $\fs$. But note that precise representations of the source relations are not important because according to \cref{thm:canonical-representatives} we can always assume that the source relation is encoded using the canonical synchronizations for rational, automatic, and recognizable relations.

The rows of the table correspond to different variations concerning the target synchronization language $T$ in the problem. The upper three rows show the cases where $T$ is fixed to the canonical synchronizations for rational, automatic, and recognizable relations, respectively.

The first row is about the problem whether a given relation has a $(\SigmaIO)^*$-controlled subsequential uniformization, which is equivalent to asking whether it has a uniformization by an arbitrary subsequential transducer. This problem has been considered by \cite{CarayolL14}:
For rational source relations the problem is undecidable \cite[Theorem~17]{CarayolL14} and for automatic source relations it is decidable \cite[Theorem~18]{CarayolL14}.
The same authors have also shown that it is decidable whether an automatic relation has a uniformization by a recognizable relation, which corresponds to a $\SigmaI^*\SigmaO^*$-controlled subsequential uniformization in our table (third row, middle column), see \cite[Proposition~20]{CarayolL14}.

The question whether an automatic relation has a uniformization by a synchronous subsequential transducer (second row, middle column) is a classical problem of reactive synthesis that has been studied intensively over infinite words and was first solved by \cite{buechi}. The result can easily be adapted from infinite to finite words.

Let us now look at first three rows and the rightmost column (recognizable relations). First, we explain why a recognizable relation always has a uniformization by a $\Sigma^*\Gamma^*$-controlled subsequential transducer, and thus also a by $(\SigmaIO)^*$-controlled one.
  Recall that a recognizable relation $R$ can be effectively represented as $\bigcup_{i = 0}^n U_i \times V_i$ where $U_i \subseteq \SigmaI^*$ and $V_i \subseteq \SigmaO^*$ are regular languages for each $i$. Since regular languages are closed under all Boolean operations, one can find such a representation in which all the $U_i$ are pairwise disjoint.
  Now it suffices to pick some $v_i \in V_i$ for each $i$, then the relation $R_f = \bigcup_{i = 0}^n U_i \times \{v_i\}$ is functional and uniformizes $R$.  
  Clearly, $R_f$ is realizable by a $\Sigma^*\Gamma^*$-controlled subsequential transducer based on a product DFA for the $U_i$.
  Secondly, the result that it is decidable whether a recognizable relation has a uniformization by a synchronous subsequential transducer follows directly from the fact that it is decidable whether an automatic relation has a uniformization by a synchronous sequential transducer because every recognizable relation is also automatic (in general, decidability results propagate to the right in each row of the table).

As already mentioned, the first three rows deal with the fixed target synchronization languages that are the canonical representatives for the classes $\all$, $\fsl$, $\fs$. The three rows in the middle are about the problem where a regular target synchronization language from one of these classes is given as input to the problem. 
The two undecidability results for rational source relations in this part of the table directly follow  for the choices of $T = (\SigmaIO)^*$ resp.\ $T = (\SigmaI\SigmaO)^*(\SigmaI^* + \SigmaO^*)$ from the results in the upper rows. The decidability results in this part of the table follow directly from the decidability results by \cite{DBLP:conf/icalp/Winter18} since $\fs \subseteq \fsl \subseteq \all$.
We conjecture that the case of rational source relations and target synchronization languages of finite shift is decidable, but we were not able to prove it, so this case is still open.

The last three rows consider the version of the problem where the target synchronization language is not fixed or given, but we ask for the existence of a uniformization by a subsequential transducer whose synchronization language is in $\all$, $\fsl$, resp.\ $\fs$. Clearly, for the class $\all$ this reduces to the first row because both problems do not impose any restriction on the synchronization language of the subsequential transducer. Concerning recognizable source relations, we have seen above that a recognizable relation always has $\Sigma^*\Gamma^*$-controlled sequential uniformization, thus, every entry is ``always''. The decidability result for automatic source relations in the last row is a direct consequence of the decidability result one column to the left. 

We now show the remainder of the results, first the decidability results in \cref{sec:dec-unif}, then the undecidability results in \cref{sec:undec-unif}.

\subsection{Decidability results}\label{sec:dec-unif}

To begin with, we show the following:

\begin{theorem}\label{thm:automatic-fsl}
It is decidable whether a given automatic relation has a uniformization by a subsequential transducer with finite shiftlag.
\end{theorem}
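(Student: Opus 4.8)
The plan is to reduce this existence question to the fixed-target problem solved by \cite[Theorem~10]{DBLP:conf/icalp/Winter18}, namely the decidability of $T$-controlled subsequential uniformization of an automatic relation for a \emph{given} $T \in \fsl$. Fix the automatic source relation $R = \llbracket S \rrbracket$ with $S \in \fsl$; by \cref{thm:canonical-representatives} we may take $S$ in canonical form. The first task is to understand which functions a finite-shiftlag subsequential transducer $\mathcal T$ can compute. Since $S(\mathcal T) \in \fsl$, its synchronizations $\sigma_1 w_1 \sigma_2 w_2 \cdots f(p)$ have finite shiftlag, and I would argue from this that the lag stays inside a bounded window during the whole ``active'' part of the run, except for a terminal region with only boundedly many shifts: for a subsequential transducer the lag can grow past a bound only inside a monotone run of input symbols (outputs per transition are bounded), and bringing it back below the bound would require the output to overtake the input across many consecutive highly lagged shifts, which finite shiftlag forbids. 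Hence once the lag leaves a bounded window it must stay outside, and only boundedly many further shifts can occur. This is strictly more general than the synchronous, delay-$\le 1$ case of \cite{buechi}, which is why a separate argument is needed.

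The key step is a \emph{bounding lemma}: if $R$ has \emph{any} finite-shiftlag subsequential uniformization, then it has one $\mathcal T'$ whose body lag and terminal shift count are both bounded by an effectively computable $N = N(R)$, depending only on the source and not on the hypothetical uniformizer. I would prove this by a pumping argument on the product of a hypothetical uniformizer $\mathcal T$ with the automaton for the canonical source synchronization. If the body lag ever exceeds $N$ (chosen from the product state count), a configuration of source state, transducer state, and bounded lag snapshot repeats along a stretch where $\mathcal T$ reads far ahead before committing output; the intervening loop is excised to shorten the look-ahead, while the bounded shiftlag of the source provides the realignment needed to keep every produced output a correct image. A symmetric excision on the terminal region removes redundant alternations, bounding the number of terminal shifts. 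Iterating yields the normalized uniformizer $\mathcal T'$.

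With the bound in hand, the second step is to build a single regular target $T_N \in \fsl$ that captures the normalized behaviour, namely $T_N = \{\, w \mid \mathit{lag}(w) \le N \,\} \cdot \{\, w \mid \mathit{shift}(w) \le N \,\}$. The first factor is regular because the signed difference between output and input symbols can be tracked by a counter ranging over $\{-N,\dots,N\}$, and the second factor is regular because bounding the number of alternations is a finite-memory condition; their concatenation is therefore regular. A direct check gives $\mathit{shiftlag}(T_N) \le N$, so $T_N \in \fsl$. Note that $T_N$ over-approximates the subsequential behaviour (it also admits unbounded output tails that no subsequential transducer can realize), but this is harmless since \cite[Theorem~10]{DBLP:conf/icalp/Winter18} quantifies only over subsequential transducers. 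Consequently $R$ has a finite-shiftlag subsequential uniformization iff it has a $T_N$-controlled one: the direction $\Leftarrow$ is immediate because $T_N \in \fsl$, and $\Rightarrow$ is exactly the bounding lemma. Invoking \cite[Theorem~10]{DBLP:conf/icalp/Winter18} on the instance $(R, T_N)$ then decides the problem.

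The main obstacle is the bounding lemma. The difficulty is that each excision must simultaneously preserve three properties of the transducer: determinism on the input (subsequentiality), totality of the domain (every $u \in \mathrm{dom}(R)$ still receives an image), and correctness (each produced output remains a valid image under $R$). Pumping a subsequential transducer is delicate precisely because shortening one input branch can clash with input-determinism or drop inputs from the domain, so the realignment has to be organized along the deterministic input structure and must exploit the bounded shiftlag of the source to re-certify the outputs; arguing that the result is a genuine total uniformizer rather than a partial transducer is where the real work lies. A secondary, more routine obstacle is establishing the body/terminal dichotomy of the first paragraph rigorously, so that the normalized behaviour is indeed contained in $T_N$ and the reduction to \cite[Theorem~10]{DBLP:conf/icalp/Winter18} is sound.
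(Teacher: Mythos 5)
Your overall reduction scheme---find a computable bound $N$, package it into a regular target $T_N = L_{\leq N}\cdot\{w \mid \mathit{shift}(w)\leq N\} \in \fsl$, and invoke \cite[Theorem~10]{DBLP:conf/icalp/Winter18}---is sound in principle, and the body/terminal dichotomy you describe for finite-shiftlag languages is essentially \cref{lemma:form}. But the proof has a genuine gap at exactly the point you flag as the main obstacle, and in fact at two places. First, the bound: your sketch chooses $N$ ``from the product state count'' of the hypothetical uniformizer with the source automaton, so as stated $N$ depends on the uniformizer, not only on $R$. That is fatal for the reduction, since $T_N$ must be constructed from the input alone; enumerating $N=1,2,\dots$ only gives a semi-decision procedure, as you can never certify a negative answer. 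Second, even granting a target value of $N$, the excision argument is not carried out, and the difficulty you yourself identify---that removing a loop from an input-deterministic transducer perturbs all inputs routed through that loop, threatening totality of the domain and correctness of other outputs---is precisely where such pumping arguments on subsequential transducers tend to break down. Nothing in the sketch resolves it.

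The paper takes a much shorter route that sidesteps both problems: it observes that the decision procedure of \cite{CarayolL14} for the \emph{unrestricted} problem (does an automatic relation have any subsequential uniformization?) is a reduction to a safety game, and a winning strategy in that game translates into a $(\Sigma^k\Gamma^k)^*\Sigma^*\Gamma^*$-controlled subsequential transducer for a $k$ computable from $R$. Since $(\Sigma^k\Gamma^k)^*\Sigma^*\Gamma^*$ has finite shiftlag, the existence of \emph{any} subsequential uniformization already implies the existence of a finite-shiftlag one, so the finite-shiftlag question coincides with the unrestricted one and is decidable by \cite[Theorem~18]{CarayolL14}. Note that this observation is exactly the bounding lemma you need, with a bound computable from $R$ and with no pumping on a hypothetical uniformizer: you could repair your proof by replacing the excision argument with an appeal to the structure of the construction in \cite{CarayolL14}, after which your reduction to \cite[Theorem~10]{DBLP:conf/icalp/Winter18} becomes an unnecessary detour.
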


\begin{proof}
This is consequence of the proof that it is decidable whether an automatic relation has a uniformization by a subsequential transducer presented by \cite{CarayolL14}. It turns out that if an automatic relation has a uniformization by a subsequential transducer, then the relation also has a uniformization by a subsequential transducer with finite shiftlag. A bit more formally, \cite{CarayolL14} have reduced the problem to deciding the existence of a winning strategy in a safety game.
Such a winning strategy can be translated into a $(\Sigma^k\Gamma^k)^*\Sigma^*\Gamma^*$-controlled subsequential transducer for a computable~$k$. In other words, as long as the needed lookahead on the input to produce the next output is at most $k$, the subsequential transducer for the uniformization alternates between reading input words of length $k$ and producing output words of length~$k$.
Once the transducer needs a lookahead of more than $k$, it can read the remainder of the input and produce one final output word.
The language $(\Sigma^k\Gamma^k)^*\Sigma^*\Gamma^*$ has finite shiftlag.
\end{proof}

Now we show that it is decidable whether a given rational relation has a uniformization by a $\SigmaI^*\SigmaO^*$-controlled subsequential transducer. The functions computed by $\SigmaI^*\SigmaO^*$-controlled subsequential transducers are clearly recognizable relations, and conversely, each functional recognizable relation can be computed by a $\SigmaI^*\SigmaO^*$-controlled subsequential transducer (as already explained in \cref{sec:overview}).

We prove that it is decidable whether a given rational relation has a uniformization by a recognizable relation by a reduction from the boundedness problem for distance automata.
A distance automaton is a finite state automaton that additionally maps transitions to distances of $0$, $1$, or $\infty$. Then a run is associated with the sum of distances seen along the run.
Such an automaton is said to be bounded if there exists a bound such that each accepted word has a run with a distance less or equal to this bound.
We define this formally.

A \emph{finite state automaton with a distance function} (\emph{distance automaton}) $\mathcal A $ is a tuple $(Q,\Sigma,q_0,\Delta,F,d)$, where the first five components are the same as for finite state automata, and $d\colon \Delta \rightarrow \{0,1,\infty\}$ is a distance function that assigns a value to each transition.
The concepts of run, accepting run, and recognized language are defined as for finite state automata.
The \emph{distance} $d(\rho)$ of a run of the form $(p_0,a_1,p_1)\dots(p_{n-1},a_n,p_n)$ of $\mathcal A$ on $w = a_1\dots a_n$ is defined as $\sum_{i=1}^n d((p_{i-1},a_i,p_i))$, where for any integer $i$, $i < \infty$, and $i + \infty = \infty + i = \infty$.
The \emph{distance} $d(w)$ of a word $w$ is defined as $\infty$ if $w \notin L(\mathcal A)$ and otherwise as the minimal distance of an accepting run of $\mathcal A$ on $w$, that is, $\mathrm{min}\{ d(\rho) \mid \rho \text{ is an accepting run of $\mathcal A$ on $w$}\}$.
Let $D(\mathcal A)$ denote the supremum of distances associated with $L(\mathcal A)$, that is, the supremum of $\{ d(w) \mid w \in L(\mathcal A)\}$.
A distance automaton $\mathcal A$ is \emph{bounded} if $D(\mathcal A)$ is finite.
Hashiguchi has shown that the boundedness problem for distance automata is decidable, see \cite{DBLP:journals/jcss/Hashiguchi82,hashiguchi1990improved}, see also \cite{Kirsten05}.

Now that we have introduced distance automata, we are ready to prove the result.

\begin{theorem}\label{thm:unif-by-rec}
 It is decidable whether a given rational relation has a uniformization by a recognizable relation.
\end{theorem}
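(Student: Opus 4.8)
The plan is to reduce the problem to the boundedness problem for distance automata, which is decidable by Hashiguchi's theorem. The reduction rests on a structural characterization of which rational relations admit a recognizable uniformization.

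First I would analyze what a \emph{functional} recognizable relation looks like. Recall that a recognizable relation has the form $\bigcup_{i=1}^{n} U_i \times V_i$ with $U_i \subseteq \Sigma^*$ and $V_i \subseteq \Gamma^*$ regular. If such a relation is a function and $U_i \neq \emptyset$, then every $u \in U_i$ is paired with all of $V_i$, forcing $V_i$ to be a singleton; hence a recognizable function has finite image. Conversely, any function with finite image and regular fibers is recognizable. For a rational $R$ and a fixed $v \in \Gamma^*$, the fiber $D_v := \{u \mid (u,v) \in R\}$ is the domain of the rational relation $R \cap (\Sigma^* \times \{v\})$, hence regular and effectively computable. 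Putting these together, $R$ has a recognizable uniformization iff there is a finite set $V \subseteq \Gamma^*$ with $\bigcup_{v \in V} D_v = \mathrm{dom}(R)$: given such a $V$, one orders it and takes set differences to obtain disjoint regular fibers, yielding a recognizable uniformizing function; conversely the finite image of a recognizable uniformization is such a $V$. Finally, bounding the lengths of the words in $V$ shows that this is in turn equivalent to the \emph{minimal output length} $g(u) := \min\{|v| \mid (u,v) \in R\}$ being bounded on $\mathrm{dom}(R)$: if $g(u) \le N$ for all $u$, then $V = \Gamma^{\le N}$ works, and any finite covering $V$ bounds $g$ by $\max_{v \in V}|v|$.

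It then remains to decide whether $g$ is bounded, and here I would construct a distance automaton over $\Sigma$. Starting from an NFT for $R$, I normalize it so that each transition reads at most one input letter and emits at most one output letter, pushing the final output function into $\varepsilon$-input transitions. This yields a weighted automaton over the tropical semiring whose transition weights (the number of emitted output letters) lie in $\{0,1\}$, possibly with $\varepsilon$-input transitions. I remove the $\varepsilon$-input transitions by the standard min-plus $\varepsilon$-elimination: since all weights are non-negative, the weight folded onto a remaining $\Sigma$-transition is a shortest-$\varepsilon$-path value, which uses each state at most once and is therefore at most $|Q|$. The result is an $\varepsilon$-free min-plus automaton $\mathcal{W}$ over $\Sigma$ with $L(\mathcal{W}) = \mathrm{dom}(R)$ and minimal run weight $g(u)$, but with transition weights up to $|Q|$. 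Since I only need the boundedness question, I collapse $\mathcal{W}$ to a genuine distance automaton $\mathcal{B}$ by replacing every weight $\ge 1$ with $1$. If $\hat g$ denotes the resulting minimal distance, then on nonempty inputs $\hat g(u) \le g(u) \le |Q| \cdot \hat g(u)$, because along any $\varepsilon$-free run each transition contributes between $\hat g$'s $\{0,1\}$-weight and $|Q|$ times that weight. Hence $\mathcal{B}$ is bounded iff $g$ is bounded, and by the above iff $R$ has a recognizable uniformization. Decidability follows from Hashiguchi's theorem.

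The hard part will be the structural characterization, specifically isolating minimal output length as the right quantity and verifying that ``recognizable uniformization exists'' is exactly its boundedness. The construction of the distance automaton is mostly routine, but two points need care: the $\varepsilon$-input transitions that emit output (these are what make the weights exceed $1$ and are handled by min-plus $\varepsilon$-elimination together with the collapse to $\{0,1\}$, which is sound precisely because we only ask about boundedness), and the corner cases of the empty input word and of the final-output contribution. These corner cases only perturb $g$ by a bounded additive or multiplicative constant and therefore do not affect whether $g$ is bounded.
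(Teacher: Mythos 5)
Your proposal is correct and follows essentially the same route as the paper: both characterize the existence of a recognizable uniformization by the boundedness of the minimal output length over the domain (using that a functional recognizable relation has finite image, and conversely that a uniform length bound yields a recognizable uniformizer), and both then reduce this to the boundedness problem for distance automata, accounting for the same multiplicative distortion when output blocks are collapsed to weight~$1$ (your factor $|Q|$ versus the paper's factor $|\Delta|$). The only difference is cosmetic: the paper folds each maximal output block into the preceding input transition directly on the NFA over $\Sigma\cup\Gamma$, while you normalize the transducer and perform min-plus $\varepsilon$-elimination.
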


\begin{proof}
 We give a reduction to the boundedness problem for distance automata.
 
 Let $R \subseteq \SigmaI^* \times \SigmaO^*$ be a rational relation given by an NFA $\mathcal A$ over $\SigmaIO$ that defines $R$, that is, $\llbracket L(\mathcal{A}) \rrbracket = R$. Our goal is to construct a distance automaton $\mathcal B$ such that such that $\mathrm{dom}(\mathcal B) = \mathrm{dom}(R)$ and $D(\mathcal B)$ is bounded iff $R$ has a uniformization by a recognizable relation.

 Note that a uniformization of $R$ by a recognizable relation only produces finitely many different possible output words. Hence, there is a bound on the length on these output words. And vice versa, if there is a bound $n$ such that for each word $u$ in the domain of $R$ there is an output word $v$ such that $|v| \le n$ and $(u,v) \in R$, then the relation assigning to each input the smallest output in length-lexicographic order is a recognizable uniformization of $R$.

 So we use the reduction to boundedness of distance automata to check such a bound on the length of shortest output words for each input exists. 
 The idea behind the construction is to replace transitions labeled with output symbols with transitions that increase the distance value of a run. Formally, we take all input transitions of $\mathcal{A}$ with distance value $0$, and insert transitions with distance value $1$ that correspond to runs consuming one input symbol and a non-empty sequence of output symbols, as described below. 
 
Let $\mathcal B = (Q_\mathcal A,\SigmaI,q_0^\mathcal A,\Delta,F_\mathcal A,d)$ be a distance automaton, where $\Delta$ and $d$ are defined such that
\[
  (p,a,q) \in \Delta \text{ and } d((p,a,q)) = 0 \text{ iff } \mathcal A\colon p \xrightarrow{a} q \text{ and}
\]
\[
  (p,a,q) \in \Delta \text{ and } d((p,a,q)) = 1 \text{ iff } \mathcal A\colon p \xrightarrow{a} q' \xrightarrow{w} q \text{ for } q' \in Q_{\mathcal{A}}, w \in \SigmaO^+, \text{ and not } \mathcal A\colon p \xrightarrow{a} q.
\]
It is easy to see that $L(\mathcal B) = \mathrm{dom}(R)$. 
We show that $R$ has a uniformization by a recognizable relation iff $D(\mathcal B)$ is bounded.
 
Assume $R$ is uniformized by a recognizable relation $\bigcup_{i=1}^n U_i \times \{v_i\}$.
Let $\ell$ be the length of the longest $v_i$.
Then, for each $u \in \mathrm{dom}(R)$ there exists a $v \in \SigmaO^{\leq \ell}$ such that there is synchronization $w$ of $(u,v)$ that is accepted by $\mathcal A$.
A run of $\mathcal A$ on $w$ can be easily translated into a run of $\mathcal B$ on $u$.
It directly follows that $d(u)$ is at most $\ell$.
Thus, $d(\mathcal B)$ is bounded.

Assume $d(\mathcal B)$ is bounded by $n$.
Thus, for each $u \in \mathrm{dom}(R)$ we can pick a $v \in \SigmaO^*$ of length at most $|\Delta|\cdot n$ such that there exists a synchronization $w$ of $(u,v)$ that is accepted by $\mathcal A$.
The factor $|\Delta|$ was introduced because a number of consecutive transitions with output labels that occur in $\mathcal A$ increase the distance of a corresponding run in $\mathcal B$ only by one.
Since there are only finitely many words of a fixed length it follows directly that $R$ can be uniformized by recognizable relation.

From the decidability of the boundedness problem for distance automata we obtain decidability of the problem whether a rational relation is uniformizable by a recognizable relation.
\end{proof}

Turning to the question whether a rational relation has a subsequential uniformization inside the class of languages with finite shift, we obtain the following.

\begin{corollary}\label{cor:unif-finiteshift}
 It is decidable whether a given rational relation has a uniformization by a subsequential transducer with finite shift.
\end{corollary}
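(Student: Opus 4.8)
The plan is to reduce the problem about finite-shift subsequential uniformizations to the finite-shift case of recognizable uniformization that was just established in \cref{thm:unif-by-rec}. The key observation is that a subsequential transducer with finite shift produces a synchronization language that, after reading the entire input, can only switch finitely often between consuming input and emitting output. Concretely, if $S(\mathcal{T})$ has finite shift, say bounded by $m$, then every accepting run of $\mathcal{T}$ realizes a synchronization word whose letters group into at most $m+1$ maximal blocks alternating between $\SigmaI$-symbols and $\SigmaO$-symbols. Since the transducer is deterministic on the input and reads exactly one input symbol per transition, the total number of $\SigmaI$-blocks is bounded, which means the output is emitted in at most a bounded number of ``bursts'' positioned at finitely many fixed cut-points relative to the input.

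The central claim I would aim to prove is that a rational relation $R$ has a uniformization by a finite-shift subsequential transducer if and only if it has a uniformization by a recognizable relation. The direction from recognizable to finite-shift is immediate: as explained in \cref{sec:overview}, a functional recognizable relation $\bigcup_i U_i \times \{v_i\}$ is realizable by a $\SigmaI^*\SigmaO^*$-controlled subsequential transducer, and $\SigmaI^*\SigmaO^*$ trivially has finite shift (at most one shift). The interesting direction is the converse: I must show that if some finite-shift subsequential transducer uniformizes $R$, then $R$ already admits a recognizable uniformization. For this I would argue that a finite-shift subsequential uniformization, being a subsequential function whose synchronization language has boundedly many shifts, produces for each input an output assembled from at most $m+1$ blocks; but a subsequential transducer that is deterministic on the input and emits output only at a bounded number of input-position ``phases'' still computes a function whose image might be infinite, so one must be more careful.

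The cleaner route, and the one I would actually pursue, is to observe that the existence of \emph{any} finite-shift subsequential uniformization forces a length bound on shortest outputs, exactly the quantity controlled by the distance automaton $\mathcal{B}$ from the proof of \cref{thm:unif-by-rec}. If $\mathcal{T}$ is a finite-shift subsequential transducer uniformizing $R$, then for each input $u \in \mathrm{dom}(R)$ the output $f_{\mathcal{T}}(u)$ satisfies $(u, f_{\mathcal{T}}(u)) \in R$, so there is an $\mathcal{A}$-accepting synchronization of $(u, f_{\mathcal{T}}(u))$ and hence $d_{\mathcal{B}}(u) \le |f_{\mathcal{T}}(u)|$. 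The finite-shift condition is what guarantees this output length is uniformly bounded: with at most $m$ shifts and the input consumed one symbol per transition, the number of distinct output bursts is bounded, and the only way the subsequential transducer can produce unbounded output with boundedly many shifts is to have an output-emitting loop read over a single input symbol, which is impossible since each transition consumes exactly one input letter and output is emitted alongside input consumption except possibly at the final output. Thus every burst has length bounded by the transducer's size times the remaining input, forcing the total shortest output length to be uniformly bounded over $\mathrm{dom}(R)$, i.e., $D(\mathcal{B})$ is finite, which by the proof of \cref{thm:unif-by-rec} yields a recognizable uniformization. Conversely a recognizable uniformization gives a $\SigmaI^*\SigmaO^*$-controlled, hence finite-shift, uniformization. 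The decision procedure is then identical to that of \cref{thm:unif-by-rec}: build $\mathcal{B}$ and test its boundedness.

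The main obstacle I anticipate is the careful argument in the last step that finite shift genuinely forces a \emph{uniform} length bound on the outputs (rather than merely a per-input bound). The subtlety is that a finite-shift subsequential transducer can still have output-labeled loops, so a single output burst could a priori be long; I would need to rule out that such a long burst is \emph{necessary}, by arguing that if a shortest $\mathcal{A}$-synchronization had an output block whose length exceeds $|\Delta_{\mathcal{A}}|$ times the number of input symbols in scope, then a pumping argument on $\mathcal{A}$ produces a shorter output, contradicting minimality. This pumping bookkeeping, tying the shift bound of the \emph{transducer} to a length bound recoverable from the \emph{defining automaton} $\mathcal{A}$, is the delicate part; once it is in place the corollary follows immediately from \cref{thm:unif-by-rec}.
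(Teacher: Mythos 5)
Your top-level reduction is the same as the paper's: both proofs establish that a rational relation has a finite-shift subsequential uniformization if and only if it has a recognizable uniformization, and then invoke \cref{thm:unif-by-rec}. Where you differ is in the proof of the forward direction. The paper dispatches it in one line: if $\mathcal{T}$ is a subsequential uniformizer with $S(\mathcal{T}) \in \fs$, then by \cref{thm:canonical-representatives} the regular finite-shift language $S(\mathcal{T})$ has an effective $\SigmaI^*\SigmaO^*$-controlled representation, so $\llbracket S(\mathcal{T})\rrbracket$ is itself a functional recognizable relation with domain $\mathrm{dom}(R)$ --- it \emph{is} the desired recognizable uniformization. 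No detour through the distance automaton is needed for this direction. Your route instead re-enters the proof of \cref{thm:unif-by-rec} by bounding $D(\mathcal{B})$ via a uniform bound on the output lengths of $\mathcal{T}$; that works in principle and avoids citing the canonical-representative theorem, but it is longer, and the ``pumping bookkeeping'' on $\mathcal{A}$ that you anticipate as the delicate part is in fact unnecessary.

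The one step you must repair is your bound on burst lengths. You write that ``every burst has length bounded by the transducer's size times the remaining input''; a bound depending on the remaining input is not uniform over $\mathrm{dom}(R)$ and would not yield $D(\mathcal{B}) < \infty$. The correct (and simpler) observation is that in a subsequential transducer every transition reads exactly one input symbol, so in the synchronization word $a_1 v_1 a_2 v_2 \cdots a_n v_n f(q)$ each maximal output block is the output $v_i$ of a \emph{single} transition (or $v_nf(q)$ at the very end), hence of length at most a constant $C$ depending only on $\mathcal{T}$; moreover each non-empty $v_i$ with $i<n$ contributes a shift, so a shift bound of $k$ on $S(\mathcal{T})$ allows at most $k+1$ such blocks. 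This gives $|f_{\mathcal{T}}(u)| \le (k+1)\cdot 2C$ uniformly over $\mathrm{dom}(R)$, from which the boundedness of $D(\mathcal{B})$ follows as you intend. With that fix your argument is correct, just less direct than the paper's.
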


\begin{proof}
 We show that if a rational relation is uniformizable by a subsequential transducer whose synchronization language has finite shift, then it is uniformizable by a recognizable relation which is decidable by \cref{thm:unif-by-rec}.

 Given a rational relation $R$ and a subsequential transducer $\mathcal T$ that uniformizes $R$ such that $S(\mathcal T) \in \fs$.
 Recall that $\Sigma^*\Gamma^*$ is an effective canonical representation of $\fs$.
 Clearly, the $\Sigma^*\Gamma^*$-controlled representation of $S(\mathcal T)$ describes a functional recognizable relation, and as explained earlier, functional recognizable relation can be computed by $\Sigma^*\Gamma^*$-controlled subsequential transducers.
\end{proof}

\subsection{Undecidability results}\label{sec:undec-unif}

First, we introduce a technical result by \cite{CarayolL14} of which \cref{thm:undec-automatic-sync,thm:undec-automatic-fsl} are easy consequences.
There, it is shown that is undecidable whether a given rational relation has a uniformization by a subsequential transducer by showing the following stronger statement.

\begin{lemma}[\cite{CarayolL14}]\label{lemma:automatic}
There exists a rational relation $R_M$, constructed from a Turing machine $M$, such that $R_M$ has a uniformization by the identity function if $M$ does not halt for the empty input, and $R_M$ has no uniformization by any subsequential transducer if $M$ halts for the empty input.
\end{lemma}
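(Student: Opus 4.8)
The plan is to encode the halting problem for a Turing machine $M$ on the empty input into a rational relation $R_M$ whose uniformizability by a subsequential transducer mirrors whether $M$ halts. I would use the standard encoding of Turing machine computations as sequences of configurations separated by a delimiter, say $\#$, over a suitable tape alphabet. The key asymmetry I want to exploit is that checking whether a purported computation $C_0 \# C_1 \# \cdots \# C_n$ is a \emph{valid halting computation} of $M$ requires verifying local consistency between consecutive configurations $C_i$ and $C_{i+1}$, and such local consistency checks are exactly what a rational relation (two heads moving at possibly different speeds, one lagging one configuration behind the other) can perform, whereas committing \emph{early} to an output is exactly what a subsequential transducer is forced to do.

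Concretely, I would design $R_M$ so that its domain is all of $\SigmaI^*$ (or some large regular set), and the relation pairs each input with outputs in such a way that a correct image can be selected only if one already knows whether $M$ halts. The cleanest way to arrange the dichotomy demanded by the statement is to make $R_M$ contain the diagonal pairs $(u,u)$ for every input $u$ that is \emph{not} a valid halting computation of $M$, so that when $M$ does not halt, no input is a valid halting computation, the whole domain is covered by the identity, and the identity function uniformizes $R_M$. When $M$ does halt, there is a (unique) valid halting computation $u_0$; for this particular input I would remove the diagonal pair and instead force the relation to offer two or more distinct outputs that can only be distinguished by reading $u_0$ essentially to its end. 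The subsequential transducer, being deterministic on the input and committed to producing output online, cannot defer this decision, and a pumping-style argument shows that any subsequential transducer attempting to uniformize $R_M$ must commit to an output on a prefix of $u_0$ and then fail either on $u_0$ itself or on a diagonal input sharing that prefix.

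First I would fix the precise format of configuration encodings and the regular language of ``syntactically well-formed computation candidates,'' then describe the transducer-level (rational) acceptor witnessing that $R_M$ is rational: one reads the input once to copy it (giving the diagonal part), and another component of the nondeterministic transducer reads the input while checking the two-configuration local consistency that defines a valid halting run, rejecting valid runs from the diagonal and steering them to the alternative outputs. Because consecutive-configuration comparison only needs a bounded, sliding window and the two heads can run with a one-configuration lag, this is a genuine rational relation; I would verify rationality by exhibiting the transducer explicitly rather than by closure arguments, since rational relations are not closed under the Boolean operations I am implicitly using. The halting direction (if $M$ halts then no subsequential uniformization exists) is where the real work lies: I would argue that a subsequential transducer has bounded memory and a deterministic input behavior, so on the valid halting computation $u_0$ it is forced to output a specific string determined by a bounded prefix; I then construct, from a long enough prefix of $u_0$, a companion input that is \emph{not} a valid halting computation, agrees with $u_0$ on that prefix, and whose only admissible images under $R_M$ are incompatible with the committed output, yielding a contradiction.

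The main obstacle I anticipate is ensuring that the ``forcing'' gadget at the unique halting computation $u_0$ genuinely defeats \emph{every} subsequential transducer, including those allowed arbitrary (unbounded) lookahead through delayed output and a final output via the function $f$. Since subsequential transducers here may read far ahead before emitting symbols, I cannot simply say the first output symbol is decided after one step; instead I would need a more careful argument showing that the decision about which of the two incompatible outputs to produce cannot be postponed past the point where the relevant distinguishing information in $u_0$ appears, or alternatively that postponing it forces an unbounded delay that the finite-state transducer cannot realize consistently across the (infinitely many) diagonal inputs. Reconciling the transducer's unbounded-lookahead capability with the requirement that it still commit before seeing the decisive part of $u_0$ is the delicate point, and I would handle it by placing the decisive information late in $u_0$ while making the alternative outputs diverge early, so that any correct image for $u_0$ must begin with a symbol the transducer has already been forced to emit (or suppress) on a shared diagonal prefix.
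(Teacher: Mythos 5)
The paper does not actually prove this lemma (it is imported from \cite{CarayolL14}), so I am comparing your proposal against the cited construction and against the closely analogous construction the paper does carry out in \cref{thm:undec-aut-control}. Your proposal has a genuine gap, and it sits exactly at the point where those proofs do their real work. You propose to put into $R_M$ the diagonal pairs $(u,u)$ for every $u$ that is \emph{not} a valid halting computation, and to ``remove the diagonal pair'' at the unique halting computation $u_0$. As a reduction this is not effective: deciding whether such a $u_0$ exists, let alone what it is, is precisely the halting problem, so no algorithm can output a transducer for the relation you describe. To make the construction effective, membership of $(u,u)$ in $R_M$ must be expressed by a condition a finite transducer can check, and here your key technical claim fails: a transducer checking $(u,u)$ has one (read-only) input head and one (write-only) output head, so when the output is constrained to be an exact copy of the input, the ``lag by one configuration'' trick buys nothing --- detecting a successor defect between $c_i$ and $c_{i+1}$ requires correlating two positions an unbounded distance apart on the single input tape. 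The relation $\{(u,u) \mid \text{some } c_{i+1} \neq \mathit{succ}(c_i)\}$ is not rational. The standard repair is to put the two configurations to be compared on \emph{different} tapes, i.e., to define the defect between the input and a \emph{shifted} output ($u_{i+1}$ versus $\mathit{succ}(v_i)$); that is rational, but it changes the relation away from your diagonal design.

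Once you make that repair, a second gap opens that your proposal does not address: the relation $\{(u,v) \mid \exists i:\ u_{i+1} \neq \mathit{succ}(v_i)\} \cup \cdots$ has the right diagonal, but it is trivially uniformizable (a subsequential transducer can output a fixed ``defect witness'' on every input, e.g., a word with the wrong number of configurations). So the heart of the proof is a \emph{forcing} mechanism ensuring that the only outputs admissible for the non-halting inputs compel the transducer to reproduce the input (or its successor-shift) online with bounded delay --- only then does the argument ``it has already committed on a shared prefix'' go through against the halting computation. Your proposal acknowledges this is ``the delicate point'' but offers no concrete mechanism, and the bounded-final-output observation alone does not supply one. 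In \cref{thm:undec-aut-control} this forcing occupies most of the proof (the suffix condition via $f^M_{\mathit{succ}}$ together with the length and parity constraints in \cref{i:5,i:6,i:7} and the control language); the cited proof of the present lemma needs an analogous device. Without it, and without an effective replacement for the non-rational diagonal restriction, the proposed construction does not establish the lemma.
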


Clearly, the identity function can be realized by a $(\Sigma\Gamma)^*$-controlled subsequential transducer.
Since $(\Sigma\Gamma)^* \subseteq (\Sigma\Gamma)^*(\Sigma^* + \Gamma^*) \in \fsl$, it is now easy to see that

\begin{corollary}\label{thm:undec-automatic-sync}
It is undecidable whether a given rational relation has a uniformization by a synchronous subsequential transducer, that is, a $(\Sigma\Gamma)^*(\Sigma^* + \Gamma^*)$-controlled subsequential transducer.
\end{corollary}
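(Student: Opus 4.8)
The plan is to derive this corollary directly from \cref{lemma:automatic}, essentially by observing that the relation $R_M$ produced in that lemma already witnesses undecidability for the more restrictive target class of synchronous subsequential transducers. Recall that \cref{lemma:automatic} gives a rational relation $R_M$, computable from a Turing machine $M$, with the dichotomy: if $M$ does not halt on empty input, then $R_M$ is uniformized by the identity function; and if $M$ halts on empty input, then $R_M$ has no uniformization by \emph{any} subsequential transducer whatsoever. Since the halting problem is undecidable, this yields undecidability of whatever uniformization question we can phrase so that the two cases fall on opposite sides.

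The key observation is that the identity function is realizable by a synchronous subsequential transducer. Concretely, the identity reads one input symbol and immediately outputs the corresponding output symbol, so its synchronization language is contained in $(\Sigma\Gamma)^*$. As noted just before the statement, $(\Sigma\Gamma)^* \subseteq (\Sigma\Gamma)^*(\Sigma^* + \Gamma^*)$, so the identity is in particular $(\Sigma\Gamma)^*(\Sigma^* + \Gamma^*)$-controlled, i.e.\ synchronous in the sense of the corollary. (This presupposes that $R_M$ is a relation over a single alphabet where identity makes sense as a uniformization; this is guaranteed by the construction in \cref{lemma:automatic}, which is exactly why the lemma is phrased in terms of the identity function.)

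Putting the two cases together: if $M$ does not halt, then $R_M$ has a uniformization by the identity, which is a synchronous subsequential transducer, so the answer to the synchronous uniformization question is ``yes''. If $M$ halts, then $R_M$ has no subsequential uniformization at all; a fortiori it has no \emph{synchronous} subsequential uniformization, so the answer is ``no''. Thus a decision procedure for synchronous subsequential uniformizability of rational relations would decide halting of $M$ on empty input, a contradiction. This establishes that the synchronous uniformization problem is undecidable.

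There is essentially no hard step here, since all the difficulty is absorbed into \cref{lemma:automatic}; the only point requiring care is the direction of the implication that must be checked. The ``yes'' instances need the \emph{positive} witness (identity) to lie inside the target class $\fsl$ canonical language, which we have verified, while the ``no'' instances exploit only that synchronous transducers form a \emph{subclass} of all subsequential transducers, so the non-existence of any subsequential uniformization immediately rules out a synchronous one. The potential pitfall would be a mismatch where identity failed to be synchronous or where ``no subsequential uniformization'' did not obviously entail ``no synchronous subsequential uniformization''; both are dispatched by the inclusion $(\Sigma\Gamma)^* \subseteq (\Sigma\Gamma)^*(\Sigma^* + \Gamma^*)$ and by the containment of synchronous transducers among all subsequential transducers, respectively.
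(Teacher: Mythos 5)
Your proposal is correct and follows essentially the same route as the paper: the paper derives the corollary from \cref{lemma:automatic} by noting that the identity function is realizable by a $(\Sigma\Gamma)^*$-controlled subsequential transducer and that $(\Sigma\Gamma)^* \subseteq (\Sigma\Gamma)^*(\Sigma^* + \Gamma^*)$, while the halting case excludes any subsequential uniformization at all. You have merely spelled out the two directions of the reduction more explicitly than the paper does.
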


\begin{corollary}\label{thm:undec-automatic-fsl}
It is undecidable whether a given rational relation has a uniformization by a subsequential transducer with finite shiftlag.
\end{corollary}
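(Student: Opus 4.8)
The plan is to derive \cref{thm:undec-automatic-fsl} as an immediate consequence of \cref{lemma:automatic}, in exactly the same spirit as \cref{thm:undec-automatic-sync}. The key observation is that the positive instance in \cref{lemma:automatic} witnesses uniformizability by the \emph{identity function}, and I want to show that the identity function is realized by a subsequential transducer whose synchronization language has finite shiftlag.

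First I would recall the rational relation $R_M$ from \cref{lemma:automatic}: if the Turing machine $M$ does not halt on the empty input, then $R_M$ is uniformized by the identity function, and if $M$ halts, then $R_M$ has no uniformization by \emph{any} subsequential transducer. The reduction for \cref{thm:undec-automatic-fsl} is then to take $R_M$ directly as the input instance. Second, I would exhibit a concrete subsequential transducer for the identity: it reads each input symbol $a \in \SigmaI$ and immediately outputs the corresponding symbol (using the convention that input and output alphabets are disjoint copies), so its synchronization language is contained in $(\SigmaI\SigmaO)^*$. Since $(\SigmaI\SigmaO)^*$ has shiftlag bounded by a constant — indeed $\mathit{shiftlag}((\SigmaI\SigmaO)^*)$ is finite because the lag never exceeds $1$, so no two consecutive shifts can be $\glag{n}$-lagged for $n \geq 2$ — the identity function is realizable by a subsequential transducer with finite shiftlag.

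The two directions of the reduction then follow. If $M$ does not halt, then $R_M$ is uniformized by the identity function, which by the previous paragraph is a finite-shiftlag subsequential uniformization, so the answer is \emph{yes}. Conversely, if $M$ halts, then $R_M$ has no subsequential uniformization at all (by \cref{lemma:automatic}), and in particular none of finite shiftlag, so the answer is \emph{no}. Thus an algorithm deciding whether a given rational relation has a finite-shiftlag subsequential uniformization would decide the halting problem for $M$, yielding undecidability.

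I do not anticipate a genuine obstacle here, since the result is a corollary that reuses the hard construction packaged in \cref{lemma:automatic}. The only point requiring a moment's care is confirming that the synchronization language of the identity transducer genuinely has finite shiftlag; this is routine given that the lag along any synchronization in $(\SigmaI\SigmaO)^*$ stays bounded by $1$, so I would state it briefly rather than belabor it. Everything else is a direct reduction mirroring the argument already given for \cref{thm:undec-automatic-sync}.
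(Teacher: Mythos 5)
Your proposal is correct and follows essentially the same route as the paper: both derive the corollary from \cref{lemma:automatic} by observing that the identity function is realized by a $(\SigmaI\SigmaO)^*$-controlled subsequential transducer, whose synchronization language has finite shiftlag (the paper notes $(\SigmaI\SigmaO)^* \subseteq (\SigmaI\SigmaO)^*(\SigmaI^* + \SigmaO^*) \in \fsl$, while you argue the lag bound directly — an immaterial difference). No gaps.
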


In the remainder of this section, we show that for a given automatic relation and a given regular synchronization language it is undecidable whether the relation has a subsequential uniformization according to the synchronization language.

First, we show a stronger result.
Recall that a regular synchronization language $S \subseteq (\SigmaIO)^*$ is a fine-grained way to describe the possible behavior of a transducer.
A little less fine-grained is a so-called regular \emph{control language}, that is a regular language $L \subseteq \{1,2\}$.
Such a language describes how a transducer may process input and output symbols, but does not specify which input and output symbols occur, e.g., $12211$ means that a transducer reads one input symbol, subsequently produces two output symbols, and then reads two input symbols. This notion of control language is the one used by \cite{conf/stacs/FigueiraL14}. Each control language $L$ naturally corresponds to a synchronization language by substituting $1$ with $\SigmaI$ and $2$ with $\SigmaO$, and to each synchronization language we can associate the control language obtained by the projection that replaces symbols from $\SigmaI$ with $1$, and symbols from $\SigmaO$ with $2$. 
So the definitions and notations introduced in \cref{sec:prelims} regarding synchronization languages can be applied in a natural way to control languages (in fact, these notions have originally been defined for control languages by \cite{conf/stacs/FigueiraL14}).

We show that for a given automatic relation and a given control language it is undecidable whether the relation has uniformization according to the control language.

\begin{theorem}\label{thm:undec-aut-control}
It is undecidable for a given automatic relation whether it has a uniformization by an $L$-controlled subsequential transducer for a given regular control language $L \subseteq \{1,2\}^*$.
\end{theorem}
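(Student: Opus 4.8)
The plan is to prove undecidability by reduction from the halting problem, reusing the machinery of \cref{lemma:automatic}. The difficulty is that \cref{lemma:automatic} produces a \emph{rational} relation $R_M$, whereas here I must work with an \emph{automatic} source relation. The key observation that makes this possible is the extra freedom granted by the control language $L$: once $L$ is allowed as part of the input, I can shift the ``rational'' complexity of $R_M$ out of the relation and into the resynchronization pattern prescribed by $L$. Concretely, I would encode the rational relation $R_M$ as an automatic relation $R'_M$ together with a control language $L_M$ such that the $L_M$-controlled subsequential uniformizations of $R'_M$ correspond exactly to the subsequential uniformizations of $R_M$. Then the undecidable dichotomy of \cref{lemma:automatic} (identity uniformization if $M$ does not halt, no subsequential uniformization if $M$ halts) transfers directly.

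First I would examine the structure of $R_M$ from \cref{lemma:automatic}. A rational relation is defined by a transducer whose runs may read input and output at varying speeds, i.e.\ its synchronization language is an arbitrary regular subset of $(\SigmaIO)^*$. The idea is to ``automatize'' this: I replace each pair $(u,v) \in R_M$ by a pair $(u, v')$ in which the output word $v$ is padded (say with a fresh blank symbol, or interleaved with markers) so that input and output have matching length and can be read synchronously, making $R'_M$ automatic. The information about \emph{how} the original transducer interleaved its reading and writing — which is precisely the rational (non-synchronous) behavior — is then recorded in the control language $L_M$, which dictates the permitted movement pattern of the heads. Since $R_M$ is fixed (it is a single relation whose rationality is built from $M$), both $R'_M$ and $L_M$ can be computed from $M$.

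The core correctness argument proceeds in two directions. If $M$ does not halt, the identity uniformization of $R_M$ lifts to an $L_M$-controlled subsequential uniformization of $R'_M$, because the synchronous padded encoding of the identity respects the prescribed control pattern. Conversely, if $M$ halts, then $R_M$ has no subsequential uniformization at all; I must argue that this rules out any $L_M$-controlled subsequential uniformization of $R'_M$ as well, since such a uniformization would project back (by undoing the padding) to a subsequential uniformization of $R_M$, contradicting \cref{lemma:automatic}. The main obstacle, and the part requiring the most care, is designing the padding and the control language $L_M$ so that this projection between subsequential uniformizations of $R'_M$ and of $R_M$ is faithful in both directions: the $L_M$-control must be permissive enough to allow the identity-derived uniformization when $M$ does not halt, yet restrictive enough that any $L_M$-controlled uniformizer of $R'_M$ genuinely simulates a subsequential uniformizer of $R_M$. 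Getting the interleaving constraints encoded in $L_M$ to match exactly the degrees of freedom of the original rational transducer, while keeping $R'_M$ automatic, is the delicate bookkeeping at the heart of the proof.
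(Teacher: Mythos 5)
There is a genuine gap, and it sits exactly where you locate "the delicate bookkeeping at the heart of the proof": the construction of $R'_M$ and $L_M$ from the rational relation $R_M$ of \cref{lemma:automatic} is not carried out, and the generic recipe you sketch does not work. Padding the output of a rational relation does not in general produce an automatic relation. A relation is automatic when the language of convolutions $u\otimes v$ is regular, and for the relations underlying \cref{lemma:automatic} the obstruction is not a length mismatch that blanks at the end (or interleaved markers) can repair: the relation inherently compares a position of the input with a position of the output that is shifted by an unbounded amount (an output configuration against the \emph{next} input configuration), and any faithful padded copy of that comparison is still non-regular under the synchronous reading. If instead you make the padded output essentially equal to the synchronization word of the original transducer, then checking that its input projection equals $u$ is itself a non-synchronous condition, so $R'_M$ is again not automatic. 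Moreover, the control language $L$ constrains the \emph{uniformizer}, not the source relation, so the "rational complexity" of $R_M$ cannot simply be relocated into $L_M$: the non-uniformizability of $R_M$ in the halting case rests on which output words the rational relation pairs with each input, and altering those pairs to force automaticity invalidates the argument of \cref{lemma:automatic} rather than transferring it.

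The paper's proof does not reuse \cref{lemma:automatic} at all; it builds a fresh reduction from the halting problem. The automatic relation $R_M$ there only imposes conditions that are checkable synchronously, namely a letter-by-letter comparison of aligned prefixes, constraints on the last four letters of the input versus the last one or two letters of the output via a local successor function $f^M_{\mathit{succ}}$, and length comparisons modulo $2$. The global condition $c'_{i+1}=\mathit{succ}(c_i)$, which would be the natural but non-automatic requirement, is never stated in the relation; instead the control language $L_M=1222211(1(2+22))^*12$ forces any uniformizer into a lock-step production of successor configurations, and the end-of-word and parity checks guarantee that any deviation from $f^M_{\mathit{succ}}$ is caught on some continuation of the input. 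This interplay, a relation with purely local checks combined with a control language that forces global simulation, is the idea your proposal is missing; without it, the claimed correspondence between $L_M$-controlled uniformizers of $R'_M$ and unrestricted subsequential uniformizers of $R_M$ cannot be established.
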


\begin{proof}
 We give a reduction from the halting problem for Turing machines (TMs).
 Given a TM $M$ with a right-infinite tape, our goal is to describe an automatic relation $R_M$ and a regular control language $L_M$ such that $R_M$ can only be uniformized by an $L_M$-controlled subsequential transducer if $M$ does not halt for the empty input.
 
 Let $Q_M$ denote the state set of $M$, $q_0^M$ denotes the initial state of $M$, $\Gamma_M$ denotes the tape alphabet of $M$ including the blank symbol $\blank$, $\delta_M$ denotes the transition function of $M$, and $\vdash$ denotes the tape delimiter.
 We represent a configuration $c$ of $M$ in the usual way as $\vdash\! a_1\cdots a_kqb_1\cdots b_k$, where $a_1,\dots,a_k,b_1,\dots,b_\ell \in \Gamma_{M}$, $a_1\cdots a_kb_1\cdots b_\ell$ is the content of the tape of $M$, $q \in Q_M$ is the current control state of $M$, and the head of $M$ is on $b_1$.
 We assume that a configuration is represented without additional blanks, meaning a configuration representation does not start nor end with a blank.
 Let $\mathcal C^M$ denote the set of all configurations of $M$.
 For a configuration $c \in \mathcal C^M$ let $\mathit{succ}(c)$ denote its successor configuration.
 Without loss of generality, we assume that $\mathit{succ}(c)$ is at least as long as $c$, we call this non-erasing.
 Every TM can be made non-erasing by introducing a new tape symbol $B$ that is written instead of $\blank$ and the TM behaves on $B$ the same as on $\blank$.
 Furthermore, we call a configuration $c \in \mathcal C^M$ whose last letter is some $q \in Q_M$ and there is a transition of the form $\delta_M(q,\blank) = (q',a,R)$ extending configuration, because $|\mathit{succ}(c)| = |c| + 1$.
 We denote by $\mathcal C^M_E$ the set of extending configurations.

Let $\Sigma = Q_M \cupcdot \Gamma_M \cupcdot \{\$,\#\}$.
We define an automatic relation $R_M \subseteq \Sigma^* \times \Sigma^*$ whose domain consists of words of the form $\$c_1\$\cdots \$c_m\#$ where $c_1,\dots,c_{m-1}$ are configurations of $M$ and $c_m$ is a (prefix of a) configuration of $M$.
The idea is that we want to force a uniformizer to produce an output sequence that begins with $\$\vdash\!q_0^M\$\mathit{succ}(c_1)\$\dots \$\mathit{succ}(c_{m-1})$ for such an input.
The control language is defined in such a way that this is indeed possible (the subsequential transducer can produce $\mathit{succ}(c_i)$ while reading $c_i$).
And this should define a uniformization iff there is no halting computation of $M$ for the empty input.
That is, we have to define $R_M$ in such a way that the pair $(u,v) \in R_M$, where $u = \$c_1\$\cdots \$c_m\#$ and $v$ starts with $\$\vdash\!q_0^M\$\mathit{succ}(c_1)\$\cdots \$\mathit{succ}(c_{m-1})$ iff $c_1\$\dots \$c_{m}$ is not a halting computation of~$M$.

We describe a function $f^M_\mathit{succ}$ defined over all four-letter infixes of words from the domain of $R_M$ which defines how a subsequential transducer should build successor configurations.
First, given a configuration $c \in \mathcal C^M$, note that in order to output $\mathit{succ}(c)[1]\dots\mathit{succ}(c)[j]$, it suffices to have read $c[1]\cdots c[j+2]$. 
Furthermore, in order to determine the letter $\mathit{succ}(c)[j]$, it suffices to remember $c[j-1]c[j]c[j+1]c[j+2]$.
For example, consider the configuration $c$ of the form $a_1a_2qb_1b_2$ and the transition $\delta_M(q,b_1) = (q',c,L)$, then the successor configuration $\mathit{succ}(c)$ is $a_1q'a_2cb_2$.
To determine $\mathit{succ}(c)[2]$, $c[3]c[4]$ is needed.
To determine $\mathit{succ}(c)[3]$, $c[2]$ is needed, because the applied transition moves the head to the left. 

The function $f^M_\mathit{succ}$ extends this idea to infixes from words of the domain of $R_M$.
Formally, the type of the partial function $f^M_\mathit{succ}$ is $\Sigma\Sigma\Sigma\Sigma \to \Sigma \cup \Sigma\Sigma$.
We omit a formal definition of the mapping, instead, we provide some examples:
\begin{itemize}
  \item $a_1a_2qb_1$ maps to $q'$ if $\delta_M(q,b_1) = (q',c,L)$
  \item $a_2qb_1b_2$ maps to $c$ if $\delta_M(q,b_1) = (q',c,L)$
  \item $b_2b_3\$a_1$ maps to $b_3$ \quad\emph{configuration suffix is connected via $\$$ with configuration prefix}
  \item $b_3\$a_1q$ maps to $\$$
  \item $a_1a_2q\$$ maps to $a_2c$ if $\delta_M(q,\blank) = (q',c,R)$ \quad\emph{suffix of an extending configuration and $\$$ is seen, this is the only case the function has a two-letter target}
  \item $a_2q\$a_1$ maps to $q'$ if $\delta_M(q,\blank) = (q',c,R)$
\end{itemize}
To ensure that $f^M_\mathit{succ}$ is totally defined, we define $\mathit{succ}(c) = c$ for halting configurations $c$.

We design the relation $R_M$ and the control language $L_M$ such that a potential uniformizer is forced to behave according to $f^M_\mathit{succ}$. 
To ensure this, we introduce two constraints (formally below).
The output must end according to $f^M_\mathit{succ}$ applied to the suffix of the input, and we have to enforce that the potential uniformizer produces two letters only if the delimiter $\$$ after an extending configuration is seen.
The latter condition can not directly be enforced by $R_M$ nor $L_M$, but in $R_M$ it suffices to compare the length of the last configuration in the input and the last configuration of the output modulo two to detect deviations (explained in the correctness proof).

Before we define the relation, we introduce one shorthand notation.
Given $w \in \Sigma^*$, let $\mathit{last}_i(w)$ denote the letter $w[|w|-i]$ for $i \in \mathbbm N$, that is, the $i$th to last letter of $w$, and let $\mathit{last}_{i:j}(w)$ denote $w[|w|-i] \cdots w[|w|-j]$ for $i,j \in \mathbbm N$ with $i \geq j$.

Formally, the relation $R_M \subseteq \Sigma^* \times \Sigma^*$ contains a pair $(u,v)$ iff
\begin{enumerate}
  \item\label{i:1} $u = \$c_1\$\dots \$c_m\#$, where $c_1,\dots,c_{m-1} \in \mathcal C^M$, $c_m \in \prefs{\mathcal C^M}$, $m > 1$,
  \item\label{i:2} $v = \$c'_1\$\dots \$c'_{\ell}\#$, where $c'_1,\dots,c'_{\ell-1}\in \mathcal C^M$, $c_\ell \in \prefs{\mathcal C^M}$, $\ell > 1$,
  \item\label{i:3} $c'_1 :=\ \vdash\!q_0^M$ is the initial configuration of $M$ on the empty tape, 
  \item\label{i:4} $c_1\$\dots \$c_{m-1} \neq c'_1\$\dots \$c'_{\ell-2}$ or $c'_{\ell-1}$ is not a halting configuration of $M$,
  \item\label{i:5} $f^M_\mathit{succ}(\mathit{last}_{4:1}(u)) = 
    \begin{cases}
      \mathit{last}_{1}(v) & \text{if } f^M_\mathit{succ}(\mathit{last}_{4:1}(u)) \in \Sigma\\
      \mathit{last}_{2:1}(v) & \text{if } f^M_\mathit{succ}(\mathit{last}_{4:1}(u)) \in \Sigma\Sigma,
    \end{cases}$
  \end{enumerate}

  \emph{\cref{i:6,i:7} are designed under the assumption that if $|c_m| \leq 1$, then a uniformizer has not finished to produce $\mathit{succ(c_{m-1})}$.
  If $|c_m| = 2$, then $\$$ -- the delimiter after $\mathit{succ(c_{m-1})}$ -- is produced, and if $|c_m| \geq 3$, then the production of $\mathit{succ(c_{m})}$ has started.}
  \begin{enumerate} \setcounter{enumi}{5}
  \item\label{i:6} \begin{itemize}
    \item if $|c_m| > 3$, then $|c'_\ell| > 1$, \quad\emph{$c'_\ell$ refers to a prefix of $\mathit{succ(c_{m})}$ of length at least $2$}
    \item if $|c_m| = 3$, then $|c'_\ell| = 1$, \quad\emph{$c'_\ell$ refers to the first letter of $\mathit{succ(c_{m})}$}
    \item if $|c_m| = 2$, then $|c'_\ell| = 0$, \quad\emph{delimiter $\$$ after $\mathit{succ(c_{m-1})}$ is produced, thus, $c'_\ell = \varepsilon$}
    \item if $|c_m| \leq 1$, then $|c'_\ell| > 1$, \quad\emph{$c'_\ell$ refers to $\mathit{succ(c_{m-1})}$ which has length at least $2$}
    \end{itemize}
  \item\label{i:7} \begin{itemize}
    \item  if $|c_m| \geq 3$, then $(|c_m| - |c'_\ell|)\!\mod{2} = 0$,
    
    \item if $|c_m| = 2$, then
  $(|c_{m-1}| - |c'_{\ell-1}|)\!\mod{2} = 
  \begin{cases}
    1 & \text{if } c_{m-1} \in \mathcal C^M_E\\
    0 & \text{otherwise},
  \end{cases}$

    \item if $|c_m| \leq 1$, then
  $(|c_{m-1}| - |c'_{\ell}|)\!\mod{2} = 
  \begin{cases}
    1 & \text{if } c_{m-1} \in \mathcal C^M_E\\
    0 & \text{otherwise}.
  \end{cases}$
  \end{itemize}
\end{enumerate}

In combination with the control language $L_M$ defined below, \cref{i:5,i:6,i:7} ensure that an $L_M$-controlled subsequential uniformizer of $R_M$ has to output $\mathit{succ}(c_i)$ while reading $c_i$ according to $f^M_\mathit{succ}$ (as will be explained later).
It is not hard to see that $R_M$ is an automatic relation. For \cref{i:7} note that $(x-y)\mod{2} =\mathit{sign}((x\!\mod{2})-(y\!\mod{2}))$.
Replacing \cref{i:5,i:6,i:7} by $c'_{i+1} = \mathit{succ}(c_i)$ would make the relation non-automatic, in general.

The control language $L_M$ is defined by the regular expression $1222211(1(2+22))^*12$, we explain its use further below.

We claim that $R_M$ can be uniformized by an $L_M$-controlled subsequential transducer iff $M$ does not halt on the empty tape.
 
If $M$ does not halt, then an $L_M$-controlled subsequential transducer that produces
\[
 \$\underbrace{\vdash\!q_0^M}_{c_1'}\$f^M_\mathit{succ}(u[1:4])f^M_\mathit{succ}(u[2:5])\cdots f^M_\mathit{succ}(\mathit{last}_{4:1}(u))\#
\]
for input $u$ of the form $\$c_1\$\dots \$c_m\#$ uniformizes $R_M$:
First, note that this output guarantees that $c'_{i+1} = \mathit{succ}(c_i)$ for all $i \leq m-1$, and $c_{\ell-1} = \mathit{succ}(c_{m-1})$.
We show that the conditions of $R_M$ are satisfied:
We begin with \cref{i:5,i:6,i:7}.

Clearly, $f^M_\mathit{succ}(\mathit{last}_{4:1}(u))$ has the correct value.

The next statements are easy consequences of the definition of $f^M_\mathit{succ}$:
If $|c_m| \leq 1$, then the delimiter $\$$ after $\mathit{succ}(c_{m-1})$ is not included in the output, and $c'_\ell$ refers to $\mathit{succ}(c_{m-1})$.
Every configuration has length at least 2, thus, $|c'_\ell| \geq 2$.
If $|c_m| = 2$, then the output ends with $\$\#$, i.e., $c'_\ell = \varepsilon$.
If $|c_m| = 3$, then $|c'_\ell| = 1$ and $c'_\ell$ is the first letter of $\mathit{succ}(c_m)$.
Lastly, if $|c_m| > 3$, then $|c'_\ell| > 1$ and $c'_\ell$ is a prefix of $\mathit{succ}(c_m)$.
Hence, it is easy to see that all length constraints regarding $c'_\ell$ are satisfied.

We turn to the modulo constraints.
After $c_{m-1}$, $\$$ follows, thus, $f^M_\mathit{succ}$ at the end of $c_{m-1}\$$ yields two letters iff $c_{m-1} \in \mathcal C^M_E$.
Hence, $\mathit{succ}(c_{m-1})$ is one letter longer than $c_{m-1}$ iff $c_{m-1} \in \mathcal C^M_E$.
We have explained in the above paragraph that if $|c_m| \leq 1$, then $c'_\ell$ refers to $\mathit{succ}(c_{m-1})$ and if $|c_m| = 2$, then $c'_{\ell-1}$ refers to $\mathit{succ}(c_{m-1})$.
Thus, the condition if $|c_m| \leq 1$, then $(|c_{m-1}| - |c'_\ell|)\!\mod{2} = 1$ iff $c_{m-1} \in \mathcal C^M_E$ is satisfied.
Also, the condition if $|c_m| = 2$, then $(|c_{m-1}| - |c'_{\ell-1}|)\!\mod{2} = 1$ iff $c_{m-1} \in \mathcal C^M_E$ is satisfied.
After $c_m$, no $\$$ follows, thus $f^M_\mathit{succ}$ applied to shifting four-letter windows of $c_m$ (with three letters known of $c_m$ when the first letter of $c'_\ell$ is produced) yields that $|c_m|$ and $|c'_\ell|$ are either both even or both odd.
Thus, the condition if $|c_m| \geq 3$, then $(|c_m| - |c'_\ell|)\!\mod{2} = 0$ is satisfied. 

We turn to \cref{i:1,i:2,i:3,i:4}.
If $c_1$ is not equal to the initial configuration $q_0^M$, then clearly $c_1\$\dots \$c_{m-1} \neq c'_1\$\dots \$c'_{\ell-2}$.
Thus, assume $c_1$ is the initial configuration.
Then either there exists an $i \leq m-1$ such that $\mathit{succ}(c_i) \neq c_{i+1}$ or if no such $i$ exists then $\mathit{succ}(c_i)$ cannot be a halting configuration for all $i$ because $M$ does not halt.
In the former case we have that $c_{i+1} \neq \mathit{succ}(c_i) = c'_{i+1}$ which implies $c_1\$\dots \$c_{m-1} \neq c'_1\$\dots \$c'_{\ell-2}$.
In the latter case we have that $c'_{\ell-1} = \mathit{succ}(c_{m-1})$ is not a halting configuration.

We argue that it is possible for a subsequential transducer to realize
\[
  \$\vdash\!q_0^M\$f^M_\mathit{succ}(u[1:4])f^M_\mathit{succ}(u[2:5])\cdots f^M_\mathit{succ}(\mathit{last}_{4:1}(u))\#
\]
in an $L_M$-controlled fashion.
The idea is that the transducer, after producing $\$c_1'\$ := \$\vdash\!q_0^M\$$, behaves according to $f^M_\mathit{succ}$ shifting the considered four-letter window letter-by-letter.
Let us break down the intended use of $L_M$:
 \[
   \underbrace{12222}_{\substack{\text{read first $\$$ and}\\\text{output $\$\vdash\!q_0^M\$$}}}\hspace*{-1.5em}
  \overbrace{11}^{\substack{\text{lookahead for computing}\\\text{succ.\ config.\ correctly}}}\hspace*{-1.5em}
  \underbrace{(1(2+22))^*}_{\substack{\text{produce succ.\ config.\ with $(12)^*$}\\\text{use $122$ at \$ if tape}\\\text{content becomes longer}}}\hspace*{-0.5em}
  \overbrace{1}^{\text{final } \#}\hspace*{-1.5em}
  \phantom{\quad}\underbrace{2}_{\text{final } \#}
  \]
With this in mind it is easy to see that it is possible to realize the desired sequence in an $L_M$-controlled fashion such that all conditions of $R_M$ are satisfied.
  
Now assume that $M$ does halt for the empty input, and let $c_1,\ldots, c_m$ be the sequence of configurations corresponding to a halting computation.
Clearly, an $L_M$-controlled subsequential transducer that produces for the input $u$ of the form $\$c_1\$\cdots \$c_m\#$ the output
\[
  \$\vdash\!q_0^M\$f^M_\mathit{succ}(u[1:4])f^M_\mathit{succ}(u[2:5])\cdots f^M_\mathit{succ}(\mathit{last}_{4:1}(u))\#
\]
is not a uniformizer: $c_1\$\cdots\$c_{m-1} = c'_1\$\cdots\$c'_{\ell-2}$, and $c_{\ell-1}$ is a halting configuration of $M$.

We now show that any $L_M$-controlled subsequential uniformizer of $R_M$ behaves according to $f^M_\mathit{succ}$ (after producing $\$c_1'\$ := \$\vdash\!q_0^M\$$) which implies that if $R_M$ is uniformizable by an $L_M$-controlled subsequential transducer then $M$ does not halt on the empty input.
Assume such a uniformizer has so far behaved according to $f^M_\mathit{succ}$, and now produces something different.
Let $u'$ denote the current input sequence.
We distinguish two cases.

First, assume that the uniformizer in its last computation step has produced output $o$ such that $o \neq f^M_\mathit{succ}(\mathit{last}_{3:0}(u'))$, but $|o| = |f^M_\mathit{succ}(\mathit{last}_{3:0}(u'))|$.
Then, the next and last input symbol is $\#$.
The whole input sequence is $u = u'\#$.
Clearly, the condition $f^M_\mathit{succ}(\mathit{last}_{4:1}(u))$ of $R_M$ is violated.

Secondly, assume that the uniformizer in its last computation step has produced output $o$ such that $|o| \neq |f^M_\mathit{succ}(\mathit{last}_{3:0}(u'))|$.
As in the previous case, the next and last input symbol is $\#$.
The whole input sequence is $u = u'\#$.
The uniformizer must make one last single-letter output to be $L_M$-controlled, this output must be $\#$, otherwise the output is clearly not according to $R_M$.

To start, we assume that $\mathit{last}_{1}(u) = \$$.
Then $c_m = \varepsilon$.
Furthermore, $|f^M_\mathit{succ}(\mathit{last}_{4:1}(u))| = 2$ iff $c_{m-1} \in \mathcal C^M_E$.
The relation $R_M$ requires that $|c'_\ell| > 1$ if $|c_m| < 1$, hence, the output $o$ of the uniformizer does not contain $\$$.
Since $o$ was the first output not according to $f^M_\mathit{succ}$, we obtain that $|c_{m-1}| - |c'_{\ell}|$ is even iff $c_{m-1} \in \mathcal C^M_E$.
This violates the constraint if $|c_m| \leq 3$, then $(|c_{m-1}| - |c'_{\ell}|)\!\mod{2} = 1$ iff $c_{m-1} \in \mathcal C^M_E$.

We now assume that $\mathit{last}_{1}(u) \neq \$$.
Then $|f^M_\mathit{succ}(\mathit{last}_{4:1}(u))| = 1$, and $|o| = 2$, because $L_M$ ensures that an output is either one or two letters.
Depending on the length of $c_m$, different conditions of $R_M$ must be satisfied.

Assume that $|c_m| > 3$, then $o$ can not include $\$$, because that would start a new configuration prefix $c'_\ell$ which violates the length constraints of $c'_\ell$.
Thus, since $|o| = 2$, we obtain that $c_m$ is even iff $c'_\ell$ is odd, which violates that $(|c_m| - |c'_\ell|)\!\mod{2} = 0$ if $|c_m| \geq 3$.

Assume that $|c_m| = 3$.
This implies that the output produced before $o$ was $\$$, otherwise the subsequential transducer would have not behaved according to $f^M_\mathit{succ}$ in the computation step before $o$.
Hence, $o$ can not be of the form $\$\$$ or begin with $\$$, because $\$\$$ is not part of a valid output according to $R_M$.
If $o$ ends with $\$$, then $|c'_\ell| = 0$ which violates the length constraint.
Thus, $o$ contains no $\$$, meaning that $|c'_\ell| = 2$ which violates the length constraint.

Assume that $|c_m| = 2$.
The uniformizer must ensure that $|c'_\ell| = 0$, the only way to achieve this with an output of length two is either $\$\$$ or $o$ ends with $\$$ (and begins with some other letter).
However, $\$\$$ is not part of a valid output sequence.
In the other case, since all previous outputs were according to $f^M_\mathit{succ}$, we obtain that $(|c_{m-1}| - |c'_{\ell-1}|)\!\mod{2} = 0$ iff $c_{m-1} \in \mathcal C^M_E$, which violates that $(|c_{m-1}| - |c'_{\ell-1}|)\!\mod{2} = 1$ iff $c_{m-1} \in \mathcal C^M_E$.

Assume that $|c_m| \leq 1$.
The uniformizer must ensure that $|c'_\ell| > 1$, thus, $o$ can not contain~$\$$.
Thus, since $|o| = 2$ and all previous outputs were according to $f^M_\mathit{succ}$, we obtain that $(|c_{m-1}| - |c'_{\ell}|)\!\mod{2} = 0$ iff $c_{m-1} \in \mathcal C^M_E$, which violates that $(|c_{m-1}| - |c'_{\ell}|)\!\mod{2} = 1$ iff $c_{m-1} \in \mathcal C^M_E$.

We have proven that every $L_M$-controlled uniformizer of $R_M$ behaves according to $f^M_\mathit{succ}$ after the output $\$\vdash\!q_0^M\$$.
Together with $(u,v) \in R_M$ iff $M$ halts on the empty input, where
\[
 u = \$c_1\$\cdots\$c_m\# \in \mathrm{dom}(R_M), \text{ and}
\]
\[
 v = \$\vdash\!q_0^M\$f^M_\mathit{succ}(u[1:4])f^M_\mathit{succ}(u[2:5])\cdots f^M_\mathit{succ}(\mathit{last}_{4:1}(u))\#
\]
we obtain that $R_M$ is uniformizable by an $L_M$-controlled subsequential transducer iff $M$ halts on the empty input.
\end{proof}

We have seen that the problem whether an automatic relation has a uniformizer that behaves according to a given control language is undecidable, thus, it is also undecidable if we specify a synchronization language.

\begin{corollary}\label{cor:undec-aut-sync}
  It is undecidable whether a given automatic relation has a uniformization by an $S$-controlled subsequential transducer for a given regular synchronization language $S \subseteq (\SigmaIO)^*$.
\end{corollary}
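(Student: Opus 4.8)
The plan is to derive \cref{cor:undec-aut-sync} directly from \cref{thm:undec-aut-control} by exhibiting a translation from control languages to synchronization languages that preserves the existence of an $L$-controlled (respectively $S$-controlled) subsequential uniformizer. Recall from the discussion preceding \cref{thm:undec-aut-control} that every control language $L \subseteq \{1,2\}^*$ naturally induces a synchronization language over $\SigmaIO$ by substituting the symbol $1$ with the alphabet $\SigmaI$ and the symbol $2$ with the alphabet $\SigmaO$. Formally, I would define
\[
  S_L = \{ w \in (\SigmaIO)^* \mid \text{the projection of } w \text{ onto } \{1,2\} \text{ (replacing each } \SigmaI\text{-letter by }1\text{ and each }\SigmaO\text{-letter by }2\text{) lies in } L\}.
\]
Since this projection is a length-preserving letter substitution and $L$ is regular, $S_L$ is regular as well, and an NFA for $S_L$ is easily obtained from one for $L$.

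The crux is that the two control notions agree on subsequential transducers. First I would observe that for any subsequential transducer $\mathcal T$ over $\SigmaI$ and $\SigmaO$, its synchronization language $S(\mathcal T)$ is $S_L$-controlled if and only if $\mathcal T$ is $L$-controlled in the sense of \cref{thm:undec-aut-control}. This holds because the control language associated with $\mathcal T$ is precisely the projection of $S(\mathcal T)$ onto $\{1,2\}$, so $S(\mathcal T) \subseteq S_L$ exactly when the projection of every word of $S(\mathcal T)$ lies in $L$. Consequently, an $L$-controlled subsequential transducer uniformizes $\llbracket S \rrbracket$ if and only if an $S_L$-controlled one does: they are literally the same transducers. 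Thus the map $L \mapsto S_L$ reduces the control-language problem of \cref{thm:undec-aut-control} to the synchronization-language problem of \cref{cor:undec-aut-sync}, preserving yes- and no-instances.

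Putting this together: given an instance $(R_M, L_M)$ of the undecidable problem of \cref{thm:undec-aut-control}, I would output the instance $(R_M, S_{L_M})$ of the synchronization-language uniformization problem. The automatic relation $R_M$ is unchanged, and $S_{L_M}$ is computed from $L_M$ by the effective substitution described above. By the equivalence just established, $R_M$ has an $L_M$-controlled subsequential uniformizer if and only if it has an $S_{L_M}$-controlled one, and since the former is undecidable by \cref{thm:undec-aut-control}, so is the latter. This yields the undecidability claimed in \cref{cor:undec-aut-sync}.

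I do not anticipate a serious obstacle here, since the result is a routine reduction: the main point to verify carefully is the claimed equivalence between $L$-control and $S_L$-control, which rests on the fact that the control language of a transducer is exactly the $\{1,2\}$-projection of its synchronization language. The only subtlety worth spelling out is that the substitution $1 \mapsto \SigmaI$, $2 \mapsto \SigmaO$ and the inverse projection $\SigmaI \mapsto 1$, $\SigmaO \mapsto 2$ are mutually inverse on the relevant languages, so that $S(\mathcal T) \subseteq S_L$ and $\pi_{\{1,2\}}(S(\mathcal T)) \subseteq L$ are genuinely equivalent conditions.
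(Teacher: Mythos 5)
Your proposal is correct and follows essentially the same route as the paper: the paper's proof also takes the pair $(R_M, L_M)$ from \cref{thm:undec-aut-control} and sets $S = \{w \in (\SigmaIO)^* \mid w \text{ is } L_M\text{-controlled}\}$, which is exactly your $S_{L_M}$. You merely spell out in more detail the equivalence between $L$-control and $S_L$-control that the paper dismisses with ``it follows directly.''
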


\begin{proof}
 Consider the automatic relation $R_M$, based on a Turing machine $M$, and $L_M \subseteq \{1,2\}^*$ as in the proof of \cref{thm:undec-aut-control}.
 Let $S = \{ w \in (\SigmaIO)^* \mid w \text{ is $L_M$-controlled}\}$.
 It follows directly that it is undecidable whether $R_M$ has an $S$-controlled uniformization.
\end{proof}

\section{Definability problems}\label{sec:def}
In the previous section we considered uniformization problems in the context of synchronization languages. We now turn to the definability problem, so the question whether a given relation can be defined inside a given target language of synchronizations.

More formally, we are interested in the following problem.

\begin{definition}[Resynchronized definability problem]\label{def:definability}
  The \emph{resynchronized definability problem} asks given a regular source resp.\ target language $S$ resp.\ $T \subseteq (\SigmaIO)^*$ whether $\llbracket S \rrbracket \in \textnormal{\textsc{Rel}}(T)$.
\end{definition}
We say that $T' \subseteq T$ is a definition of $\llbracket S \rrbracket$ in $T$ if $T'$ is regular and $\llbracket S \rrbracket = \llbracket T' \rrbracket$, so if $T'$ is a witness for $\llbracket S \rrbracket \in \textnormal{\textsc{Rel}}(T)$.
We assume all source and target languages to be regular without explicitly mentioning it in the remainder.


We start by giving an example.

\begin{example} \label{ex:definability}
Let $\Sigma = \{a\}$ and $\Gamma = \{b\}$.
Consider
\[
  S = (ab)^* + (ab)^*(aa^+ + bb^+), \text{ and}
\]
\[
  T = a^*b^* + \underbrace{(ab)^*(aa^+ + bb^+)}_{U}.
\]
The relation $\llbracket S \rrbracket$ contains pairs of words where the words have the same length, or the difference between the length of the words is at least two. The target synchronization language contains all pairs of words with difference at least two in the synchronous encoding (denoted by the set $U$). The difference with $S$ is that the pairs of words of same length are contained in $T$ in the completely asynchronous encoding $a^*b^*$. Clearly, this set of words $a^nb^n$ for all $n$ is not a regular subset of $a^*b^*$. However, in order to define $\llbracket S \rrbracket$ in $T$, we can also select a larger subset of $a^*b^*$ that also contains pairs of words that are already covered by $U$. We illustrate below that this is indeed possible, that is, $\llbracket S \rrbracket \in \textnormal{\textsc{Rel}}(T)$.
We define the set 
\[
  T' = \underbrace{\{ uv \in a^*b^* \mid u \in a^*, v\in b^*, (|u|-|v|)\ \mathrm{mod}\ 2 = 0\}}_{M \subseteq a^*b^*}\ \cup\ U.
\]
Clearly, $T'$ is regular and $T' \subseteq T$.
The relation $\llbracket M \rrbracket$ captures all pairs of words where the difference between the length of the words is even which implies that $\llbracket (ab)^* \rrbracket \subseteq \llbracket M \rrbracket$.
Note that $\llbracket M \rrbracket \cap \llbracket U \rrbracket$ contains pairs of words where the difference between the length of the words is at least two and even.
Moreover, $(\llbracket M \rrbracket \setminus \llbracket (ab)^* \rrbracket) \subseteq \llbracket U \rrbracket$.
Consequently, $\llbracket T' \rrbracket = \llbracket S \rrbracket$.

\end{example}

\subsection{Overview and simple results}\label{sec:overview-def}

\begin{table}[t]
\begin{center}
\begin{tabular}{|l|c|c|c|} \hline
 \backslashbox[50mm]{target}{source} & \parbox{18mm}{rational\\$S \in \all$} &  \parbox{21mm}{automatic\\$S \in \fsl$} & \parbox{21mm}{recognizable\\$S \in \fs$} \\  \hline \hline

 \parbox{40mm}{\vspace{1mm}unrestricted\\$T = (\SigmaIO)^* \in \all$\vspace{1mm}} &{\color{gray}always} & {\color{gray}always} & {\color{gray}always}\\  \hline

 \parbox{45mm}{\vspace{1mm}synchronous\\$T\! =\! (\SigmaI\SigmaO)^*(\SigmaI^*\! +\! \SigmaO^*)\! \in\! \fsl$\vspace{1mm}} 	& \parbox{20mm}{\vspace{1mm}\centering undec.\\{\centering \cite{DBLP:journals/tcs/FrougnyS93}}\vspace{1mm}} & {\color{gray}always} & {\color{gray}always}\\  \hline

 \parbox{40mm}{\vspace{1mm}input before output\\$T = \SigmaI^*\SigmaO^*\in \fs$\vspace{1mm}} & \parbox{20mm}{\vspace{1mm}\centering undec.\\{\centering \cite{berstel2009}}\vspace{1mm}} & dec. \cite{DBLP:journals/ita/CartonCG06}  & {\color{gray}always}\\  \hline \hline

 $T \in \all$ 		& \parbox{20mm}{\vspace{1mm}\centering undec.\\{\centering (\cabref{prop:toRAT})}\vspace{1mm}} & \parbox{20mm}{\vspace{1mm}\centering undec.\\{\centering (\cabref{prop:toRAT})}\vspace{1mm}} & \parbox{20mm}{\vspace{1mm}\centering undec.\\{\centering (\cabref{prop:toRAT})}\vspace{1mm}}  \\  \hline

 $T \in \fsl$ 		& undec. &   \parbox{30mm}{\vspace{1mm}\centering open\\{\centering \small (dec.\ if target is unambiguous~(\cabref{prop:injectiveTarget}))}\vspace{1mm}}  	   & \parbox{22mm}{\vspace{1mm}\centering dec.~{\centering (\cabref{prop:RECtoAUT})}\\{\centering\small ($\llbracket S \rrbracket \in \textnormal{\textsc{Rel}}(T)$ iff~$\llbracket S \rrbracket \subseteq \llbracket T \rrbracket$)}\vspace{1mm}}\\  \hline

 $T \in \fs$ 		& undec. & \parbox{20mm}{\vspace{1mm}\centering dec.\\{\centering (\cabref{prop:AUTtoREC})}\vspace{1mm}}  & \parbox{22mm}{\vspace{1mm}\centering dec.~(\cabref{prop:RECtoAUT})\\{\centering\small ($\llbracket S \rrbracket \in \textnormal{\textsc{Rel}}(T)$ iff $\llbracket S \rrbracket \subseteq \llbracket T \rrbracket$)}\vspace{1mm}}\\  \hline \hline

 $\exists T \in \all$ & \multicolumn{3}{l|}{\color{gray}reduces to $T = (\SigmaIO)^*$}\\ \hline

 $\exists T \in \fsl$ 	& \multicolumn{3}{l|}{\color{gray}reduces to $T = (\SigmaI\SigmaO)^*(\SigmaI^* + \SigmaO^*)$}\\ \hline

 $\exists T \in \fs$ & \multicolumn{3}{l|}{\color{gray}reduces to $T = \SigmaI^*\SigmaO^*$}\\ \hline
\end{tabular}
\end{center}
\caption[]
{
Overview of results for instances of the resynchronized definability problem (upper and middle rows) and variants (lower rows).
The results are described in \cref{sec:overview-def}.
}
\label{tab:overviewdefinabilty}
\end{table}

In \cref{tab:overviewdefinabilty} we give an overview of known and new results. The table is organized in the same way as \cref{tab:overview} for the uniformization problems. As classes for source relations we consider rational, automatic, and recognizable relations. The upper three rows correspond to the question whether a relation is rational, automatic, and recognizable, respectively, because the target languages are effective canonical representations of their respective relation classes, see \cref{thm:canonical}.
In the context of definability, the lower three rows correspond to the same problems as in the upper three rows, respectively (while in the context of uniformization by subsequential transducers there are differences, see \cref{tab:overview}).

The three middle rows describe the instances of the resynchronized definability problem where a target synchronization language is given as part of the input. The undecidability and decidability results shown in the table are not very difficult to obtain, and are explained below.
The most interesting case, whether given automatic relation is definable inside a given target language of finite shiftlag, is considered in \cref{subsec:def-fsl}. The general case remains open, however, we solve some special cases of the problem.

Let us now turn to the undecidability results in the middle three rows. For the first column and the second and third middle row, undecidability follows directly for the choices of $T = (\SigmaI\SigmaO)^*(\SigmaI^* + \SigmaO^*)$ and $\SigmaI^*\SigmaO^*$, respectively.

The undecidability results for the first row in the middle part are a direct consequence of the undecidability of the universality problem for rational relations, see e.g., \cite{berstel2009}.

\begin{proposition}\label{prop:toRAT}
For every fixed source language $S$ with $\llbracket S \rrbracket = \SigmaI^* \times \SigmaO^*$ the resynchronized definability problem for given $T \in \all$ is undecidable.
\end{proposition}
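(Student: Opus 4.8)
The plan is to reduce the (undecidable) universality problem for rational relations to the resynchronized definability problem for this fixed source $S$. The guiding observation is that, since the source relation $\llbracket S \rrbracket = \SigmaI^* \times \SigmaO^*$ is the largest possible relation over the two alphabets, definability of $\llbracket S \rrbracket$ inside a target $T$ collapses to a universality condition on $\llbracket T \rrbracket$ itself.

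First I would establish the equivalence
\[
  \llbracket S \rrbracket \in \textsc{Rel}(T) \quad\Longleftrightarrow\quad \llbracket T \rrbracket = \SigmaI^* \times \SigmaO^*.
\]
For the direction from right to left, if $\llbracket T \rrbracket = \SigmaI^* \times \SigmaO^*$ then $T' := T$ is itself a regular $T$-controlled witness with $\llbracket T' \rrbracket = \SigmaI^* \times \SigmaO^* = \llbracket S \rrbracket$, so $\llbracket S \rrbracket \in \textsc{Rel}(T)$. For the converse, any definition $T' \subseteq T$ of $\llbracket S \rrbracket$ satisfies $\SigmaI^* \times \SigmaO^* = \llbracket S \rrbracket = \llbracket T' \rrbracket \subseteq \llbracket T \rrbracket \subseteq \SigmaI^* \times \SigmaO^*$, which forces $\llbracket T \rrbracket$ to be universal. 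This step uses only monotonicity of $\llbracket \cdot \rrbracket$ together with the fact that $\SigmaI^* \times \SigmaO^*$ is the maximal relation; in particular it is insensitive to which fixed $S$ we chose, as long as $\llbracket S \rrbracket = \SigmaI^* \times \SigmaO^*$.

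Then I would set up the reduction. Given an arbitrary rational relation $R$ presented by a transducer $\mathcal T$, its synchronization language $S(\mathcal T)$ is regular, effectively computable from $\mathcal T$, and satisfies $\llbracket S(\mathcal T) \rrbracket = R$ (each accepting run contributes exactly the pair consisting of its input and output). Taking $T := S(\mathcal T)$ as the target of the definability instance, the equivalence above shows that $\llbracket S \rrbracket \in \textsc{Rel}(T)$ holds iff $R = \SigmaI^* \times \SigmaO^*$, that is, iff $R$ is universal. Since the universality problem for rational relations is undecidable (see \cite{berstel2009}), the resynchronized definability problem for given $T \in \all$ with this fixed $S$ is undecidable.

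There is no genuinely hard step here; the only point requiring care is the right-to-left direction of the equivalence, where one must resist searching for a \emph{small} or structured witness $T'$ and instead observe that $T$ itself already witnesses definability as soon as $\llbracket T \rrbracket$ is universal. The remaining subtlety is purely bookkeeping: the source $S$ is fixed while $T$ is the problem input, and because the equivalence does not depend on the particular choice of $S$, the undecidability holds uniformly for every fixed $S$ with $\llbracket S \rrbracket = \SigmaI^* \times \SigmaO^*$.
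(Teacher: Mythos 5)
Your proposal is correct and follows essentially the same route as the paper: both reduce from the undecidable universality problem for rational relations via the observation that, for the maximal source relation, definability in $T$ is equivalent to universality of $\llbracket T \rrbracket$, using the effective correspondence between transducers and regular synchronization languages. You simply spell out the two directions of the equivalence in more detail than the paper's one-line argument.
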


\begin{proof}
If $\llbracket S \rrbracket = \SigmaI^* \times \SigmaO^*$, then the question if $S$ can be defined in $T$ is equivalent to asking whether $\llbracket T \rrbracket$ is universal. This corresponds to the universality problem for rational relations because there is a one-to-one correspondence between finite state transducers and regular synchronization languages (as explained in \cref{sec:prelims}).
\end{proof}



The decidability results in the middle rows of \cref{tab:overviewdefinabilty} are based on the following simple lemma.
\begin{lemma}\label{lemma:recognizable-subset}
Given a recognizable relation $R$ and $T \in \fs$, the set $\{ w \in T \mid \llbracket w \rrbracket \in R\}$ is regular.
\end{lemma}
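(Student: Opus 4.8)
The claim is: given a recognizable relation $R \subseteq \SigmaI^* \times \SigmaO^*$ and a synchronization language $T$ with finite shift, the set $T_R := \{ w \in T \mid \llbracket w \rrbracket \in R\}$ is regular. Let me think about why this should be true and how to prove it.

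First, let me understand the objects. $R$ is recognizable, so $R = \bigcup_{i=1}^n U_i \times V_i$ where $U_i \subseteq \SigmaI^*$ and $V_i \subseteq \SigmaO^*$ are regular languages. The set $T_R$ consists of synchronization words $w \in T$ such that when we project $w$ to its input part $\pi_\inp(w) = u$ and output part $\pi_\outp(w) = v$, the pair $(u,v) \in R$.

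The key fact about $T$ having finite shift: $\mathit{shift}(T)$ is finite means there's a bound $N$ such that every word $w \in T$ has at most $N$ shifts, i.e., at most $N$ alternations between reading input blocks and output blocks. So every $w \in T$ has the form of a bounded-length interleaving: $w = x_1 y_1 x_2 y_2 \cdots$ where $x_j \in \SigmaI^*$, $y_j \in \SigmaO^*$, and the total number of nonempty blocks is bounded by roughly $N+1$.

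**The proof strategy.**

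Here's how I'd approach it. Since $T$ has finite shift, say $\mathit{shift}(T) \le N$, every word in $T$ decomposes into at most $N+1$ maximal blocks of a single alphabet type. So $T$ can be partitioned (as a regular language) according to the "shape" — the sequence of block-types (input or output). There are finitely many such shapes since the number of blocks is bounded. For a fixed shape, say $\inp\,\outp\,\inp\,\outp$ meaning $w = x_1 y_1 x_2 y_2$ with $x_j \in \SigmaI^+$, $y_j \in \SigmaO^+$, the projections are $\pi_\inp(w) = x_1 x_2$ and $\pi_\outp(w) = y_1 y_2$.

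**The regularity argument via a product automaton.**

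The core idea: I want to build an NFA recognizing $T_R$. Start with a DFA $\mathcal{A}_T$ for $T$. I also have DFAs $\mathcal{A}_{U_i}$ for each $U_i$ and $\mathcal{A}_{V_i}$ for each $V_i$. The membership test $(u,v)\in R$ is $\bigvee_i (u \in U_i \wedge v \in V_i)$. The problem is that a synchronization word $w$ interleaves the letters of $u$ and $v$, so I need an automaton that, while reading $w$, simultaneously tracks the state of each $\mathcal{A}_{U_i}$ on the input-letters-so-far and the state of each $\mathcal{A}_{V_i}$ on the output-letters-so-far. Concretely, I run a product automaton: its state records (the $\mathcal{A}_T$-state, the tuple of $\mathcal{A}_{U_i}$-states, the tuple of $\mathcal{A}_{V_i}$-states). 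On an input letter $a \in \Sigma$, advance each $\mathcal{A}_{U_i}$ by $a$ and leave the $V_i$-components fixed; on an output letter $a \in \Gamma$, advance each $\mathcal{A}_{V_i}$ by $a$ and leave the $U_i$-components fixed; and in both cases advance $\mathcal{A}_T$ by the read letter. A state is accepting iff the $\mathcal{A}_T$-component is final (so $w \in T$) and for some $i$ both the $U_i$-component and $V_i$-component are final. This product automaton recognizes exactly $T_R$, proving regularity. Note that this argument does not even need the finite-shift hypothesis — the product construction works for arbitrary regular $T$, since each projection can be tracked independently while reading $w$ letter by letter, and recognizability of $R$ makes the membership condition a Boolean combination of the finitely many regular-language membership tests that the product tracks.

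**The main obstacle and a remark on hypotheses.**

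The main conceptual point is recognizing that the interleaving is harmless: because $R$ is recognizable (a finite union of products), membership depends only on which $U_i$ contains $u$ and which $V_i$ contains $v$ \emph{independently}, and these two facts can be tracked in parallel by running the respective DFAs on the disjoint letter-subsequences that the single scan of $w$ naturally separates (input letters update only input-automata, output letters only output-automata). If instead $R$ were merely automatic or rational, membership could depend on the \emph{correlation} between positions of $u$ and $v$, which a single left-to-right scan with bounded memory could not verify — that is exactly where recognizability is essential. I expect the finite-shift hypothesis is not strictly needed for this particular regularity claim (the product construction suffices in general); it is presumably included because the intended application uses $T \in \fs$, and the lemma is phrased to match that context. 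The proof I would write is just the explicit product-automaton construction together with the verification that its accepted language equals $\{ w \in T \mid \llbracket w \rrbracket \in R\}$, the latter being a direct unwinding of the definitions of $\pi_\inp$, $\pi_\outp$, and the recognizable representation $R = \bigcup_i U_i \times V_i$.
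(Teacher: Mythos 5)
Your proposal is correct and takes essentially the same route as the paper: decompose $R$ as a finite union $\bigcup_i U_i \times V_i$ and build a product NFA from automata for $T$ and the $U_i, V_i$ that advances the input-automata on $\Sigma$-letters and the output-automata on $\Gamma$-letters (the paper merely calls this construction ``routine'' rather than spelling it out). Your observation that the finite-shift hypothesis on $T$ is not actually used is also accurate.
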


\begin{proof}
Since $R$ is recognizable it can be effectively expressed as a finite union of products of regular languages, say $R = \bigcup_{i = 0}^n U_i \times V_i$ with regular languages $U_i \subseteq \SigmaI^*$ and $V_i \subseteq \SigmaO^*$.
An NFA for $\{ w \in S \mid \llbracket w \rrbracket \in R\}$ has to accept some $w \in (\SigmaIO)^*$ if $w \in T$, $\pi_\inp(w) \in U_i$ and $\pi_\outp(w) \in V_i$ for some $i$.
It is routine to construct such an NFA from NFAs for $T$ and the $U_i,V_i$.
\end{proof}


\begin{proposition}\label{prop:RECtoAUT}
The resynchronized definability problem is decidable if the given source language has finite shift and the given target language has finite shiftlag.
\end{proposition}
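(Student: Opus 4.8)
The plan is to reduce the definability question to an inclusion test, by proving the equivalence recorded in \cref{tab:overviewdefinabilty}:
\[
\llbracket S \rrbracket \in \textsc{Rel}(T) \quad\text{iff}\quad \llbracket S \rrbracket \subseteq \llbracket T \rrbracket,
\]
and then arguing that the right-hand side is decidable under the stated hypotheses. The forward direction is immediate: any definition $T' \subseteq T$ of $\llbracket S \rrbracket$ satisfies $\llbracket S \rrbracket = \llbracket T' \rrbracket \subseteq \llbracket T \rrbracket$.

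For the backward direction, I would set $R = \llbracket S \rrbracket$, which is recognizable because $S \in \fs$, and fix a representation $R = \bigcup_i U_i \times V_i$ with regular $U_i \subseteq \SigmaI^*$ and $V_i \subseteq \SigmaO^*$. The candidate witness is $T' = \{ w \in T \mid \llbracket w \rrbracket \in R\}$. I would verify $\llbracket T' \rrbracket = R$ directly: the inclusion $\llbracket T' \rrbracket \subseteq R$ holds by definition of $T'$; conversely, any $(u,v) \in R$ lies in $\llbracket T \rrbracket$ by the assumed inclusion, so some $w \in T$ has $\llbracket w \rrbracket = (u,v) \in R$, whence $w \in T'$ and $(u,v) \in \llbracket T' \rrbracket$.

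It then remains to argue that $T'$ is regular and that the inclusion is decidable. Regularity of $T'$ follows from the construction underlying \cref{lemma:recognizable-subset}: one has $T' = T \cap \bigcup_i \bigl(\pi_\inp^{-1}(U_i) \cap \pi_\outp^{-1}(V_i)\bigr)$, and since inverse projections and Boolean operations preserve regularity, this holds for \emph{every} regular target $T$. In particular the finite-shift hypothesis of that lemma plays no role in its proof, so the construction applies verbatim to $T \in \fsl$. For decidability of $\llbracket S \rrbracket \subseteq \llbracket T \rrbracket$, I would use that $S \in \fs$ gives $R \in \Rec \subseteq \Aut$ while $T \in \fsl$ gives $\llbracket T \rrbracket \in \Aut$; by \cref{thm:canonical-representatives} one can effectively compute synchronous (automatic) representations of both relations. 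As automatic relations are effectively closed under Boolean operations and have decidable emptiness, the inclusion is decided by testing $\llbracket S \rrbracket \cap \overline{\llbracket T \rrbracket} = \emptyset$.

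Combining these pieces gives the decision procedure: compute automatic representations of $\llbracket S \rrbracket$ and $\llbracket T \rrbracket$, test the inclusion, and answer ``yes'' exactly when it holds, with $T'$ serving as an effective witness. I expect the only point requiring care is the explicit observation that \cref{lemma:recognizable-subset} does not actually rely on finite shift of $T$, so that the construction of $T'$ remains valid for finite-shiftlag targets; the remaining steps are routine closure and emptiness reasoning for automatic relations.
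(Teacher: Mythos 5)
Your proof is correct and follows essentially the same route as the paper: reduce to the inclusion $\llbracket S \rrbracket \subseteq \llbracket T \rrbracket$, witness definability by $\{w \in T \mid \llbracket w \rrbracket \in \llbracket S \rrbracket\}$ via \cref{lemma:recognizable-subset}, and decide the inclusion using automatic relations. Your explicit observation that the proof of \cref{lemma:recognizable-subset} never uses the finite-shift hypothesis on $T$ is a welcome clarification, since the paper applies that lemma to a target in $\fsl$ without comment.
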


\begin{proof}
  Let $S \in \fs$ and $T \in \fsl$. We show that $\llbracket S \rrbracket \in \textnormal{\textsc{Rel}}(T)$ iff $\llbracket S \rrbracket \subseteq \llbracket T \rrbracket$. Clearly, if $\llbracket S \rrbracket \not\subseteq \llbracket T \rrbracket$, then $\llbracket S \rrbracket \not\in \textnormal{\textsc{Rel}}(T)$.
For the other direction, let $U := \{ w \in T \mid  \llbracket w \rrbracket \in \llbracket S \rrbracket\}$ be the set of all synchronizations in $T$ that synchronize a pair in $\llbracket S \rrbracket$. 
The relation $\llbracket S \rrbracket$ is recognizable, because $S$ is regular and has finite shift. Hence, $U$ is regular according to \cref{lemma:recognizable-subset}. If $\llbracket S \rrbracket \subseteq \llbracket T \rrbracket$, then each pair in $\llbracket S \rrbracket$ has at least one synchronization in $T$, and therefore $\llbracket U \rrbracket = \llbracket S \rrbracket$, which means that $\llbracket S \rrbracket \in \textnormal{\textsc{Rel}}(T)$.

Since $\llbracket S \rrbracket$ and $\llbracket T \rrbracket$ are automatic relations, the inclusion $\llbracket S \rrbracket \subseteq \llbracket T \rrbracket$ is decidable. 
\end{proof}

\begin{proposition}\label{prop:AUTtoREC}
The resynchronized definability problem is decidable if the given source language has finite shiftlag and the given target language has finite shift.
\end{proposition}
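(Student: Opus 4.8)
The plan is to reduce the membership $\llbracket S \rrbracket \in \textsc{Rel}(T)$, much as in \cref{prop:RECtoAUT}, to two conditions on the relations $\llbracket S \rrbracket$ and $\llbracket T \rrbracket$ that can each be checked. Write $R := \llbracket S \rrbracket$. Since $S \in \fsl$ the relation $R$ is automatic, and since $T \in \fs$ the relation $\llbracket T \rrbracket$ is recognizable (both by the characterization of \cite{conf/stacs/FigueiraL14} recalled above). The key statement I would establish is
\[
  R \in \textsc{Rel}(T) \iff R \text{ is recognizable and } R \subseteq \llbracket T \rrbracket .
\]

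For the forward direction, if $R = \llbracket T' \rrbracket$ for some regular $T' \subseteq T$, then $R \in \textsc{Rel}(T) \subseteq \Rec$ because $T \in \fs$, so $R$ is recognizable, and $R = \llbracket T' \rrbracket \subseteq \llbracket T \rrbracket$. For the backward direction I would argue as in \cref{prop:RECtoAUT}: assuming $R$ is recognizable and $R \subseteq \llbracket T \rrbracket$, set $U := \{ w \in T \mid \llbracket w \rrbracket \in R\}$, which is regular by \cref{lemma:recognizable-subset} (this is exactly where the hypotheses $R \in \Rec$ and $T \in \fs$ are used). By definition $\llbracket U \rrbracket \subseteq R$; conversely every $(u,v) \in R \subseteq \llbracket T \rrbracket$ admits a synchronization $w \in T$ with $\llbracket w \rrbracket = (u,v) \in R$, hence $w \in U$ and $(u,v) \in \llbracket U \rrbracket$. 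Therefore $\llbracket U \rrbracket = R$, so $U$ is a regular $T$-controlled definition witnessing $R \in \textsc{Rel}(T)$.

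It then remains to decide the two conditions on the right-hand side. Deciding whether the automatic relation $R$ is recognizable is known to be decidable \cite{DBLP:journals/ita/CartonCG06}; if the answer is negative I report $R \notin \textsc{Rel}(T)$. For the inclusion, both $R$ and $\llbracket T \rrbracket$ are automatic ($R$ by assumption, $\llbracket T \rrbracket$ because recognizable relations are automatic), and inclusion of automatic relations is decidable since they are effectively closed under Boolean operations and have decidable emptiness. Requiring both tests to succeed decides $R \in \textsc{Rel}(T)$.

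The only point beyond routine bookkeeping is the backward direction of the characterization: one must observe that the hypothesis $R \subseteq \llbracket T \rrbracket$ already forces the regular set $U$ to decode onto all of $R$, so that the canonical candidate $U$ is in fact an optimal $T$-controlled witness and no cleverer choice is needed. The finite-shift hypothesis on $T$ is essential twice here, making $\llbracket T \rrbracket$ recognizable and making $U$ regular via \cref{lemma:recognizable-subset}. Unlike in \cref{prop:RECtoAUT}, the extra recognizability test on $R$ cannot be dropped: for $S \in \fsl$ the relation $R$ is only guaranteed to be automatic, whereas every relation in $\textsc{Rel}(T)$ with $T \in \fs$ is recognizable.
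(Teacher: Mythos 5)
Your proposal is correct and follows essentially the same route as the paper: first decide recognizability of $\llbracket S \rrbracket$ via \cite{DBLP:journals/ita/CartonCG06}, then reduce the remaining question to the inclusion $\llbracket S \rrbracket \subseteq \llbracket T \rrbracket$ using the canonical witness $U = \{w \in T \mid \llbracket w \rrbracket \in \llbracket S \rrbracket\}$ and \cref{lemma:recognizable-subset}. The only cosmetic difference is that the paper routes the second step through \cref{prop:RECtoAUT} after re-encoding $\llbracket S \rrbracket$ as a finite-shift source, whereas you inline that same argument directly.
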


\begin{proof}
Given $S \in \fsl$ and $T \in \fs$.
The relation $\llbracket S \rrbracket$ is automatic, and for automatic relations it is decidable whether they are recognizable, see \cite{DBLP:journals/ita/CartonCG06}.
If $\llbracket S \rrbracket$ is not recognizable, then $\llbracket S \rrbracket \notin \textnormal{\textsc{Rel}}(T) \subseteq \Rec$.
If $\llbracket S \rrbracket$ is recognizable, it can be effectively represented as a recognizable relation, i.e., as some $S' \in \fs$.
Then, the problem reduces to the resynchronized definability problem for regular source and target languages with finite shift, which is decidable according to \cref{prop:RECtoAUT}.
\end{proof}

\subsection{Automatic source relations and targets with finite shiftlag}
\label{subsec:def-fsl}
We do not know whether the resynchronized definability problem for a given automatic relation and a given regular synchronization language with finite shiftlag is decidable. However, we provide some partial answers and show decidability for some special cases in \cref{sssec:minmax,sssec:injective,sssec:prefixrec}. We start by explaining how the problem is related to another open problem, namely the problem whether two given disjoint automatic relations $R_1$ and $R_2$ can be separated by a recognizable relation, that is, asking whether there exists a recognizable relation $R$ such that $R_1 \subseteq R$ and $R \cap R_2 = \emptyset$, see \cref{prop:sepability}.

This problem seems to lie on the border between decidability and undecidability, we give a brief, rather informal, explanation.
The following problem was recently shown to be undecidable by \cite{DBLP:conf/lics/Kopczynski16}: Given two visibly pushdown languages $L_1$ and $L_2$ that are disjoint, does there exist a regular language $L$ that separates $L_1$ and $L_2$, that is, $L_1 \subseteq L$ and $L_2 \cap L = \emptyset$?
Although this is an undecidable problem in general, we explain how to reduce the problem whether two disjoint automatic relations are separable by a recognizable relation to a more restricted variant.

An automatic relation can be translated into a so-called one-turn visibly pushdown language.
Think of a one-turn visibly pushdown language obtained from an automatic relation $R$ as a $\Sigma^*\Gamma^*$-controlled language of the form $\{ u^rv \mid (u,v) \text{ is in the relation } R\}$.
The idea is that in a visibly pushdown automaton, after reading $u^r$, the stack content is $u$, then while reading $v$ the stack is emptied and membership of $(u,v)$ in the relation is verified as follows:
Let $S$ be the $(\SigmaI\SigmaO)^*(\SigmaI^* \cup \SigmaO^*)$-controlled representation of $R$.
The visibly pushdown automaton repeats these steps: a letter from the input $u$ is read, and a letter from the stack content $v$ is read (and popped).
The pair of letters is used to simulate two steps in a DFA that recognizes the set $S$.

Turning to the separability problem, if there is a regular language separating two such one-turn visibly pushdown languages obtained from two disjoint automatic relations, then there is such a regular language that is $\Sigma^*\Gamma^*$-controlled (since all relevant words are in $\Sigma^*\Gamma^*$).
Consequently, the automatic relations that served as a starting point are separable by a recognizable relation of the form $\{ (u^r,v) \mid uv \text{ is in the separating language with }u\in\Sigma^* \text{ and } v\in \Gamma^*\}$.

\begin{proposition}\label{prop:sepability}
  The problem whether two automatic relations are separable by a recognizable relation can be reduced to the resynchronized definability problem for given source and target languages with finite shiftlag.
\end{proposition}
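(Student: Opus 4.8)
The plan is to give a direct reduction. From two disjoint automatic relations $R_1,R_2 \subseteq \Sigma^* \times \Gamma^*$ I would construct a source language $S$ and a target language $T$, both of finite shiftlag, such that $\llbracket S \rrbracket \in \textsc{Rel}(T)$ if and only if $R_1$ and $R_2$ are separable by a recognizable relation. The guiding idea is that inside a finite-shiftlag target one can combine an ``input-before-output'' part, whose regular subsets define exactly the recognizable relations, with a synchronous part, whose regular subsets define automatic relations. A definition of $\llbracket S \rrbracket$ then decomposes into a recognizable piece and an automatic piece, and the construction is arranged so that the recognizable piece is forced to be a separator.

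Concretely I would set the source to $\llbracket S \rrbracket := \overline{R_2}$ (the complement of $R_2$), which is automatic and hence representable by some $S \in \fsl$. For the target I take $T := T_A \cup T_B$, where $T_A := \Sigma^*\Gamma^*$ is the canonical recognizable synchronization language, and $T_B := \{ w \in (\Sigma\Gamma)^*(\Sigma^* + \Gamma^*) \mid \llbracket w \rrbracket \in \overline{R_1}\}$ collects the synchronous synchronizations of the pairs in $\overline{R_1}$. Since $\overline{R_1}$ is automatic, $T_B$ is regular, and both $T_A$ and $T_B$ have finite shiftlag, so $T \in \fsl$; all of these objects are clearly computable from $R_1$ and $R_2$.

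I would then prove the two directions. For the ``if'' direction, given a recognizable separator $R$ with $R_1 \subseteq R \subseteq \overline{R_2}$, I represent $R$ by a regular $T_A$-controlled language $T'_A$ (possible because, by \cref{thm:canonical-representatives}, $\Sigma^*\Gamma^*$ is the canonical representative of $\fs$ and $R$ is recognizable), and represent the automatic relation $R'' := \overline{R_2} \setminus R$ by a regular $T_B$-controlled language $T'_B$; the key containment here is $R'' \subseteq \overline{R_1}$, which holds because $R_1 \subseteq R$. Setting $T' := T'_A \cup T'_B$ yields $\llbracket T' \rrbracket = R \cup R'' = \overline{R_2} = \llbracket S \rrbracket$, where the step $R \cup (\overline{R_2}\setminus R) = \overline{R_2}$ uses exactly that $R \subseteq \overline{R_2}$. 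For the ``only if'' direction, from any regular $T' \subseteq T$ with $\llbracket T' \rrbracket = \overline{R_2}$ I extract $R := \llbracket T' \cap T_A \rrbracket$; this is recognizable, since a regular subset of $\Sigma^*\Gamma^*$ has finite shift. It avoids $R_2$ because $R \subseteq \llbracket T' \rrbracket = \overline{R_2}$, and it contains $R_1$ because every pair of $R_1$ lies in $\llbracket T'\rrbracket$ yet cannot be synchronized by any word of $T_B$ (which synchronizes only pairs of $\overline{R_1}$), so each such pair must be synchronized within $T_A$, i.e.\ lies in $R$.

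The main obstacle — and the reason I take the source to be $\overline{R_2}$ rather than, say, $R_1$ — is to pin down the containments so that the converse direction reconstructs $\llbracket S \rrbracket$ \emph{exactly} and not a proper superset: a recognizable separator may contain pairs outside $R_1$, and only the choice $\llbracket S\rrbracket = \overline{R_2}$ guarantees that every separator is contained in the source, so that $R \cup R''$ collapses back to $\llbracket S\rrbracket$. The second delicate point is arranging the target so that $R_1$ is forced into the recognizable part; this is achieved by restricting the synchronous part $T_B$ to $\overline{R_1}$, which is legitimate precisely because $\overline{R_1}$ is automatic (finite shiftlag), whereas one could \emph{not} analogously restrict $T_A$ to $\overline{R_2}$, since $\overline{R_2}$ need not be recognizable. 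The remaining steps — regularity and finite shiftlag of $T_A$, $T_B$, and of the chosen representations $T'_A,T'_B$ — are routine consequences of closure of automatic relations under Boolean operations and of the canonical-representative results recalled in \cref{thm:canonical-representatives}.
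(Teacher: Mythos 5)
Your reduction is correct and essentially the same as the paper's: same source $\llbracket S\rrbracket=\overline{R_2}$, target $=\SigmaI^*\SigmaO^*$ united with a synchronous part restricted away from $R_1$, and the same extraction $R=\llbracket T'\cap\SigmaI^*\SigmaO^*\rrbracket$ in the converse direction. The only (immaterial) difference is that you restrict the synchronous part to $\overline{R_1}$ where the paper uses $\overline{R_1}\cap\overline{R_2}$; both force $R_1$ into the recognizable synchronization and the argument goes through identically.
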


\begin{proof}
  Let $R_1 \subseteq \SigmaI^* \times \SigmaO^*$ and $R_2 \subseteq \SigmaI^* \times \SigmaO^*$ be two automatic relations with $R_1 \cap R_2 = \emptyset$.
  For the reduction, we define a source language $S$ and a target language $T$, both with finite shiftlag, such that
$\llbracket S \rrbracket \in \textnormal{\textsc{Rel}}(T)$ iff there is a recognizable relation that contains $R_1$ and has empty intersection with $R_2$.
  
  Let $S$ be the $(\SigmaI\SigmaO)^*(\SigmaI^* \cup \SigmaO^*)$-controlled representation of $\overline{R_2}$, that is, of the complement of $R_2$. 
  The set $S$ can be obtained starting from a $(\SigmaI\SigmaO)^*(\SigmaI^* \cup \SigmaO^*)$-controlled representation of $R_2$ via complementation and intersection.
  The target language $T$ is defined such that a definition of $\llbracket S \rrbracket$ as a subset of $T$ must contain all pairs from $R_1$ in the recognizable synchronization $\SigmaI^*\SigmaO^*$. The pairs that are neither in $R_1$ nor in $R_2$ can be contained using the recognizable or the automatic synchronization (or both). This is achieved by choosing $T := \SigmaI^*\SigmaO^* \cup M$, where $M$ is the  
  $(\SigmaI\SigmaO)^*(\SigmaI^*+\SigmaO^*)$-controlled representation of $\overline{R_1} \cap \overline{R_2}$.
  We show that $\llbracket S \rrbracket \in \textnormal{\textsc{Rel}}(T)$ iff $R_1$ and $R_2$ are separable by a recognizable relation.
  
  Assume $\llbracket S \rrbracket \in \textnormal{\textsc{Rel}}(T)$, and let $U \subseteq T$ be regular with $\llbracket U \rrbracket = \llbracket S \rrbracket = \overline{R_2}$. Let $U' = U \cap \SigmaI^*\SigmaO^*$. Then $\llbracket U'\rrbracket$ is a recognizable relation. Because  $\llbracket U'\rrbracket \subseteq \llbracket U \rrbracket = \overline{R_2}$, we obtain that $R_2 \cap \llbracket U'\rrbracket = \emptyset$.
  Further, $R_1 \subseteq \overline{R_2} = \llbracket U \rrbracket$. Since $\llbracket M \rrbracket \subseteq \overline{R_1}$, synchronizations in $M$ cannot contribute anything from $R_1$. From $U' = U \cap \SigmaI^*\SigmaO^* = U \setminus M$, we thus obtain that $R_1 \subseteq \llbracket U'\rrbracket$. Hence, $\llbracket U'\rrbracket$ is a recognizable relation that separates $R_1$ and $R_2$.
  
  Assume $R$ is a recognizable relation that separates $R_1$ and $R_2$.
  We have $\overline{R_2} = R \cup \llbracket M \rrbracket$, thus we can define $S$ in $T$ by $M \cup N$, where $N$ is the $\SigmaI^*\SigmaO^*$-controlled representation of~$R$. 
\end{proof}

\subsubsection{Special types of definitions}\label{sssec:minmax}

The reduction in the proof of \cref{prop:sepability} suggests that the difficulty of the problem is coming from the fact that for certain pairs in the source relation, there are different synchronizations available in $T$. In the reduction, the pairs that are neither in $R_1$ nor in $R_2$ are contained in $T$ with their  $(\SigmaI\SigmaO)^*(\SigmaI^* \cup \SigmaO^*)$-controlled representation and their $\SigmaI^*\SigmaO^*$-controlled representation. These two representations are the two extreme cases in the sense that one representation is completely asynchronous (input before output), and the other representation is as synchronous as possible (alternating between input and output as long as possible). The definition of $\llbracket S \rrbracket$ in $T$ has to select some of the asynchronous representations but, in general, not all of them. This is also illustrated in \cref{ex:definability}.

In this section we show that this is indeed the main source of the difficulty of the problem. We define special types of definitions of $\llbracket S \rrbracket$ in $T$ that correspond to the selection of, roughly speaking, the most asynchronous representation of $\llbracket S \rrbracket$ in $T$ (called $\mathit{minsync}(S,T)$), the most synchronous representations of $\llbracket S \rrbracket$ in $T$ (called $\mathit{maxsync}(S,T)$), and the full representation of $\llbracket S \rrbracket$ in $T$ (called $\mathit{allsync}(S,T)$). We show that regularity of these representations of $\llbracket S \rrbracket$ in $T$ is decidable. In \cref{sssec:injective,sssec:prefixrec} we then give applications of these results.

For the formal definitions we need a measure of how ``(a)synchronous'' a word $w$ is in a synchronization language $T$ of finite shiftlag.
 A characterization for regular synchronization languages of finite shiftlag by \cite{conf/stacs/FigueiraL14} (formally stated in \cref{lemma:form} below) shows that such languages can be expressed as a finite union of concatenations of a regular language with finite lag and a regular language with finite shift.

Given $\gamma \in \mathbbm{N}$, we denote by $L_{\leq \gamma}$ the regular set of words over $\SigmaIO$ with $\leqlag{\gamma}$-lagged positions, i.e., $L_{\leq \gamma} = \{ u \in (\SigmaIO)^* \mid \mathit{lag}(u) \leq \gamma\}$.

\begin{lemma}[\cite{conf/stacs/FigueiraL14}]\label{lemma:form}
 Given a regular language $S \subseteq (\SigmaIO)^*$ with $\mathit{shiftlag}(S) < n$.
 It holds that $S \subseteq L_{\leq \gamma} \cdot (\SigmaI^*+\SigmaO^*)^n$ with $\gamma$ chosen as $2\left(n(|Q|+1)+1\right)$, where $Q$ is the state set of an NFA recognizing $S$.
\end{lemma}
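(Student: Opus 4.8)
The plan is to prove the contrapositive at the level of individual words while crucially exploiting that $S$ is accepted by an NFA $\mathcal{A}$ with only $|Q|$ states. Note first that the claimed inclusion can fail for a single word: the word $a^{\gamma+1}b^{\gamma+1}(ab)^N$ has $\mathit{shiftlag}\le 2$ but is not in $L_{\le\gamma}\cdot(\SigmaI^*+\SigmaO^*)^n$ for large $N$. Hence regularity must enter, and the role of the factor $|Q|+1$ in $\gamma$ is to guarantee a repeated state inside any sufficiently long monotone block, enabling a pumping argument. So I would assume, towards a contradiction, that some $w\in S$ with accepting run $\rho$ satisfies $w\notin L_{\le\gamma}\cdot(\SigmaI^*+\SigmaO^*)^n$, and build from $\rho$ another accepting run whose word has $n$ consecutive shifts that are all $\glag{n}$-lagged, contradicting $\mathit{shiftlag}(S)<n$.

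To read off the relevant structure, track the signed lag $d(i)=|w[1{:}i]|_{\SigmaI}-|w[1{:}i]|_{\SigmaO}$, so that $\mathit{lag}$ is $|d|$, the maximal $\SigmaI$- and $\SigmaO$-blocks are the ascents and descents of $d$, and shifts are exactly its turning points. Since the prefixes of $w$ with lag $\le\gamma$ form an initial segment (lag is the running maximum), the maximal such prefix $x$ leaves a suffix $y$ with at least $n$ shifts; in particular $d$ reaches absolute value $\gamma+1$ at some first position $q$, and at least $n$ turning points of $d$ occur at positions $\ge q$. By symmetry assume $d(q)=\gamma+1$ with $d$ ascending at $q$.

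Starting from the peak at lag $\gamma+1$, I consider the first $n$ turning points $t_1<\dots<t_n$ at positions $\ge q$; these are consecutive among all shifts of $w$. If all of $|d(t_1)|,\dots,|d(t_n)|$ already exceed $n$ we are done. Otherwise some turning point dips to lag $\le n$; the monotone block connecting the previous high turning point (lag $>n$) to such a dip then has length $>\gamma+1-n>|Q|$, so its run through $\mathcal{A}$ repeats a state and contains a loop over a block of a single type. Removing such a loop \emph{shortens} that block, hence \emph{raises} every later turning point by the loop length, while keeping the resulting word in $S$ (the excised loop runs between two equal states and lies strictly inside one block, so no blocks merge). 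Iterating this loop removal on every descent or ascent that would otherwise leave the band, I compress the walk into a narrow high band in which all of $t_1,\dots,t_n$ stay above lag $n$; since each removed loop has length at most $|Q|$ and at most $n$ blocks must be corrected, starting from height $\gamma+1=2(n(|Q|+1)+1)$ leaves more than enough vertical room. This yields a word $w'\in S$ with $n$ consecutive $\glag{n}$-lagged shifts, so $\mathit{shiftlag}(S)\ge n$, the desired contradiction.

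The main obstacle is exactly this last compression step: one must choose which loops to remove and bound the resulting heights precisely, tracking the sign of $d$ (turning points may lie below zero, where lag is $-d$) and making sure that shortening one block does not push a neighbouring turning point below lag $n$. The constant $\gamma=2(n(|Q|+1)+1)$ is tailored to this bookkeeping: the $|Q|+1$ absorbs the length of a single pumpable loop, the factor $n$ accounts for correcting up to $n$ blocks, and the factor $2$ supplies the slack needed because lag is an absolute value and corrections may be required on both sides of zero.
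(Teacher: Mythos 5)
First, a point of comparison: the paper does not prove this lemma at all --- it is imported from \cite{conf/stacs/FigueiraL14} --- so your argument can only be judged on its own terms. Your overall strategy is the right one, and your preliminary observations are correct: the inclusion cannot be proved word-by-word (so $|Q|$ must enter via pumping), the first position $q$ with lag $\gamma+1$ is followed by at least $n$ shifts $t_1<\dots<t_n$ whenever $w\notin L_{\le\gamma}\cdot(\SigmaI^*+\SigmaO^*)^n$, the first of these satisfies $|d(t_1)|\ge\gamma+1$, and removing a run-loop from a monotone block keeps the word in $S$ while shifting all later turning points. The gap is exactly where you place it, in the compression step, and the one quantitative claim you make there is false: you assert that a block connecting a turning point of lag $>n$ to a turning point of lag $\le n$ has length $>\gamma+1-n>|Q|$. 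Having lag $>n$ at the left endpoint does not make the block long. Concretely, the signed lags at $t_1,t_2,t_3,t_4$ could be $\gamma+1$, $n+1$, $n+2$, $n$: every earlier turning point has lag $>n$, yet the offending block from $t_3$ to $t_4$ has length $2$ and need not contain any loop, so the scheme ``remove loops only from the blocks that would leave the band'' gets stuck --- the correction would have to be applied to the long \emph{first} block, which your left-to-right selective procedure has already declared innocent. The budget analysis built on this (``at most $n$ blocks must be corrected,'' the factor $2$ for sign corrections) therefore does not go through as described.

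The repair is simpler than the bookkeeping you anticipate, and needs none of the sign-tracking you worry about. Assume $d(q)=\gamma+1>0$, so $t_1$ is a local maximum with $d(t_1)\ge\gamma+1$. Shorten \emph{every} descending block between consecutive $t_i$'s --- not just the offending ones --- by iterated loop removal down to some length in $[1,|Q|]$ (any block of length $\ge|Q|+1$ visits $|Q|+1$ states among its first $|Q|+1$ positions, hence contains a removable loop of length $\le|Q|$ whose excision keeps the block non-empty), and leave ascending blocks untouched. Each ascending block increases $d$ and each shortened descending block decreases it by at most $|Q|$, so the value of $d$ at the image of $t_j$ is at least $\gamma+1-(j-1)|Q|\ge 2n(|Q|+1)+3-(n-1)|Q|>n$ for all $j\le n$. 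The modified word lies in $S$, its images of $t_1,\dots,t_n$ are still $n$ consecutive shifts, and all of them are $>\!n$-lagged, contradicting $\mathit{shiftlag}(S)<n$. With this uniform shortening the walk never approaches the lower half-plane, so the ``absolute value'' complication you invoke to justify the factor $2$ in $\gamma$ never arises.
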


Intuitively, the finite lag part of this representation of $T$ corresponds to the synchronous part.
So the idea is that we consider for $w \in T$ the longest prefix $x$ of $w$ such that $x^{-1}T$ is not of finite shift. The longer such a prefix, the ``more synchronous'' the word $w$ is inside $T$.
Formally, these ideas are captured by a partial order on words in $T$, as defined below. 

%

Given a synchronization language $T \in \fsl$, we define a partial order on $T$.
Given $w, w' \in T$ such that $\llbracket w \rrbracket = \llbracket w' \rrbracket$, we let $w \preceq_T w'$ if $(w'[1,i])^{-1}T \in \fs$  implies $(w[1,i])^{-1}T \in \fs$ for all $i$. As usual,
we let $w \approx_T w'$ if $w \preceq_T w'$ and $w' \preceq_T w$, and $w \prec_T w'$ if  $w \preceq_T w'$ and $w \not\approx_T w'$.
We omit the index $T$ if it is clear from the context.

Given two synchronization languages $S,T \in \fsl$, we define three sets of synchronizations, namely,
\begin{align*}
  & \mathit{minsync}(S,T) =  \\
  &\quad \{ w \in T \mid \llbracket w \rrbracket \in \llbracket S \rrbracket \text{ and } w \preceq_T w' \text{ for all } w' \in T \text{ such that } \llbracket w \rrbracket = \llbracket w' \rrbracket\}, \\
  & \mathit{maxsync}(S,T) =  \\
  &\quad \{ w \in T \mid \llbracket w \rrbracket \in \llbracket S \rrbracket \text{ and } w \succeq_T w' \text{ for all } w' \in T \text{ such that } \llbracket w \rrbracket = \llbracket w' \rrbracket\}, \text{ and} \\
  & \mathit{allsync}(S,T) = \{ w \in T \mid \llbracket w \rrbracket \in \llbracket S \rrbracket\}.
\end{align*}

If one of the above sets is regular, and $\llbracket S \rrbracket \subseteq \llbracket T \rrbracket$, then this gives a $T$-controlled definition of $\llbracket S \rrbracket$ because $\mathit{minsync}(S,T)$ and $\mathit{maxsync}(S,T)$ contain exactly one representation of every element in $\llbracket S \rrbracket$, and $\mathit{allsync}(S,T)$ contains all representations of $\llbracket S \rrbracket$-elements that are in $T$.
Our main goal is to show that it is decidable whether these sets are regular, see \cref{lemma:minmaxsync}. 
Towards that we introduce some additional notations, definitions, and lemmas.

\subparagraph{Additional notations.}

Given a DFA $\mathcal A$ over $\SigmaIO$, we denote by $Q_\mathcal A^\mathsf{FS} \subseteq Q_\mathcal A$ the set that contains a state $q$ if $L(\mathcal A_q) \in \fs$, where $\mathcal A_q$ denotes $\mathcal A$ with initial state $q$. The following lemma is a simple consequence of this definition.

\begin{lemma}\label{remark:FS}
Let $T \in \fsl$ and $\mathcal A$ be a DFA with $L(\mathcal A) = T$. For all $w \in (\SigmaIO)^*$
we have that $w^{-1}T \in \fs$ iff $\delta_\mathcal A^*(w) \in Q_\mathcal A^\mathsf{FS}$.
\end{lemma}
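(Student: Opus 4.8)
The plan is to reduce the residual language $w^{-1}T$ to the language accepted from a single state, exploiting that $\mathcal A$ is deterministic. The key observation is the residual identity
\[
  w^{-1}T = L(\mathcal A_{\delta_\mathcal A^*(w)}),
\]
valid whenever $\delta_\mathcal A^*(w)$ is defined. Once this is established, the lemma is immediate from the defining property of $Q_\mathcal A^\mathsf{FS}$: by definition a state $q$ lies in $Q_\mathcal A^\mathsf{FS}$ exactly when $L(\mathcal A_q) \in \fs$, so putting $q := \delta_\mathcal A^*(w)$ gives $w^{-1}T \in \fs$ iff $L(\mathcal A_q) \in \fs$ iff $q \in Q_\mathcal A^\mathsf{FS}$.

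To prove the residual identity I would unfold both sides on an arbitrary $v \in (\SigmaIO)^*$. We have $v \in w^{-1}T$ iff $wv \in T = L(\mathcal A)$, that is, iff the unique run of $\mathcal A$ on $wv$ is accepting. Since $\mathcal A$ is deterministic, this run factors as $q_0 \xrightarrow{w} q \xrightarrow{v} q'$ with $q = \delta_\mathcal A^*(w)$, and it is accepting iff $q' \in F$, i.e.\ iff $v \in L(\mathcal A_q)$. Hence $w^{-1}T = L(\mathcal A_q)$, as claimed. Both languages are regular (the residual $w^{-1}T$ is regular because $T$ is), so asking whether they have finite shift is meaningful.

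The only point requiring care is the partiality of $\delta$: if $\delta_\mathcal A^*(w)$ is undefined, then $w$ prefixes no word of $T$, so $w^{-1}T = \emptyset \in \fs$, whereas the right-hand side would be false since $\delta_\mathcal A^*(w)$ is then not a state at all. I would remove this mismatch by assuming without loss of generality that $\mathcal A$ is complete, adding a non-final trap state that absorbs all missing transitions; this trap state accepts the empty language, whose shift is finite (indeed $0$), so it lies in $Q_\mathcal A^\mathsf{FS}$ and the equivalence holds uniformly. There is no real obstacle beyond this bookkeeping — the statement is a direct consequence of determinism — and handling the completeness convention is the only subtlety.
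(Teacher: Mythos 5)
Your proof is correct and matches the paper's intent: the paper states this lemma without proof, calling it a simple consequence of the definition of $Q_\mathcal A^\mathsf{FS}$, and the residual identity $w^{-1}T = L(\mathcal A_{\delta_\mathcal A^*(w)})$ together with the completeness bookkeeping is exactly the argument being left implicit.
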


We say that two words $x,y \in (\SigmaIO)^*$ are \emph{compatible} if they are of same length and there exist $u,v \in (\SigmaIO)^*$ such that $\llbracket xu \rrbracket = \llbracket yv \rrbracket$. So two words of same length are compatible if they can be extended such that they synchronize the same pair. For example, with $\Sigma = \{a,b\}$ and $\Gamma = \{c,d\}$, the words $acab$ and $cadc$ are compatible because $\llbracket acab \cdot dc \rrbracket = \llbracket cadc \cdot ab \rrbracket = (aab,cdc)$.
Furthermore, we define $\mathit{diff}(x,y) = (u,v)$ if $x$ and $y$ are compatible and $u,v$ are the shortest words such that $\llbracket xu \rrbracket = \llbracket yv \rrbracket$, and let $|\mathit{diff}(x,y)| = |uv|$. Note that $u$ and $v$ are unique because $|x|=|y|$ implies that one of $u,v$ consists only of input symbols, and the other one only of output symbols. Taking the previous example, we have $\mathit{diff}(acab,cadc) = (dc,ab)$

As a tool in our proofs we use an automaton that synchronously reads pairs of words $x,y$ that have bounded lag and, if $x,y$ are compatible, it reaches the state $(u,v)$ for $\mathit{diff}(x,y) = (u,v)$. This is explained in more detail below.

First note that for compatible words of bounded lag, the length of their difference is bounded.
\begin{remark} \label{rem:compatible-bounded-lag}
Let $x,y$ be compatible with $\mathit{diff}(x,y) = (u,v)$. If $\mathit{lag}(x),\mathit{lag}(y) \leq k$, then $|u|,|v| \le k$.
\end{remark}

Given $k \in \mathbbm N$, we define the relation $\mathbb D_k \subseteq (\SigmaIO)^* \times (\SigmaIO)^* $ as 
\[
 \mathbb D_k := \{ (x,y) \mid \text{$x$ and $y$ are compatible and $\mathit{lag}(x),\mathit{lag}(y) \leq k$}\}.
\]
The relation $\mathbb D_k$ is automatic, that is, the set $\{ x \otimes y \mid (x,y) \in \mathbb D_k\}$ is regular, where $x \otimes y$ denotes the \emph{convolution} of $x$ and $y$ defined as the word $(x[1],y[1])\cdots(x[n],y[n]) \in (\Sigma \times \Gamma)^*$, where $n = |x| = |y|$ (if $|x| \neq |y|$ the convolution is not defined).
Due to \cref{rem:compatible-bounded-lag}, the set of convolutions of pairs in $\mathbb D_k$ is recognized by a DFA \label{page:dfa} $\mathcal D_{k}$ that keeps track of the difference of the words. Formally, $\mathcal D_{k} = (Q,(\SigmaIO) \times (\SigmaIO), (\varepsilon,\varepsilon),\Delta,Q)$, where $Q = \{ (w_0,w_1) \mid w_0,w_1 \in \SigmaI^* \cup \SigmaO^*, |w_0|=|w_1|, |w_0|,|w_1|\leq k\}$ and the transitions in $\Delta$ update the difference for the next input $(a_0,a_1)$, which can formally be defined by $\bigl( (w_0,w_1),(a_0,a_1),(y_0z_0,y_1z_1) \bigr) \in \Delta$ with 
  \[ y_i = 
    \begin{cases} 
      a_i^{-1}w_i & \text{if } a_i \in \SigmaI, w_i \in \SigmaI^+ \text{ or } a_i \in \SigmaO, w_i \in \SigmaO^+\\
      w_i & \text{if } a_i \in \SigmaI, w_i \in \SigmaO^+ \text{ or } a_i \in \SigmaO, w_i \in \SigmaI^+ \\
      \varepsilon & \text{if } w_i = \varepsilon, a_i = a_{1-i}\\
    a_{1-i} & \text{if } w_i = \varepsilon, a_i \neq a_{1-i}, \text{ and}
    \end{cases}
  \]
  \[ z_i = 
    \begin{cases} 
      \varepsilon & \text{if } a_{1-i} \in \SigmaI, w_{1-i} \in \SigmaI^+ \text{ or } a_{1-i} \in \SigmaO, w_{1-i} \in \SigmaO^+\\
      a_{1-i} & \text{if } a_{1-i} \in \SigmaI, w_{1-i} \in \SigmaO^+ \text{ or } a_{1-i} \in \SigmaO, w_{1-i} \in \SigmaI^+ \\
      \varepsilon & \text{if } w_i = \varepsilon, a_i = a_{1-i}\\
    a_{1-i} & \text{if } w_i = \varepsilon, a_i \neq a_{1-i}
    \end{cases}
  \]
  for $i = 0,1$.
  Note that if the operation $a_i^{-1}w_i$ does not yield an element from $(\SigmaIO)^*$ then there exists no transition from $(w_0,w_1)$ with label $(a_0,a_1)$, because this indicates incompatibility.
  
  By an easy but cumbersome induction one can show that the above definition of $\mathcal D_k$ is such that the automaton indeed computes the difference of two compatible string with lag at most $k$, as stated in the following lemma.

\begin{lemma}\label{lemma:kdelay}
Given $x,y$ compatible with $\mathit{lag}(x),\mathit{lag}(y) \le k$ and $\mathit{diff}(x,y) = (u,v)$, then $\mathcal D_{k}\colon (\varepsilon,\varepsilon) \xrightarrow{x \otimes y} (u,v)$.
\end{lemma}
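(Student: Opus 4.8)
The plan is to prove, by induction on $i$, the strengthened invariant that for every $i$ with $0 \le i \le n$ (where $n = |x| = |y|$) the automaton satisfies $\mathcal D_{k}\colon (\varepsilon,\varepsilon) \xrightarrow{x[1:i] \otimes y[1:i]} \mathit{diff}(x[1:i], y[1:i])$; the lemma is then the special case $i = n$. Before running the induction I would check that this statement is even well-posed, i.e.\ that every prefix pair $(x[1:i], y[1:i])$ is compatible and lands in a genuine state of $Q$. Compatibility of the prefixes follows from compatibility of $x,y$: since $\pi_\inp(x)$ and $\pi_\inp(y)$ are both prefixes of the common input word $\pi_\inp(xu) = \pi_\inp(yv)$ (and symmetrically for outputs), all input projections of all prefixes are prefix-comparable, so $x[1:i]$ and $y[1:i]$ can be extended to synchronize the same pair. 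Moreover $\mathit{lag}(x[1:i]) \le \mathit{lag}(x) \le k$ and likewise for $y$, so by \cref{rem:compatible-bounded-lag} each component of $\mathit{diff}(x[1:i], y[1:i])$ has length at most $k$; hence this difference is an element of $Q$. In particular the run never blocks on an incompatibility, so the (unique, deterministic) transition at each step is defined.

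The base case $i = 0$ is immediate, since the initial state $(\varepsilon,\varepsilon)$ equals $\mathit{diff}(\varepsilon,\varepsilon)$. For the inductive step I would assume $\mathcal D_{k}$ reaches $(w_0,w_1) = \mathit{diff}(x[1:i], y[1:i])$ and read the letter $(a_0,a_1) = (x[i+1], y[i+1])$, and show that the transition lands in $\mathit{diff}(x[1:i+1], y[1:i+1])$. The guiding interpretation of the state is that $w_0$ records the (purely input or purely output) word that must be appended to the $x$-side and $w_1$ the complementary word appended to the $y$-side in order to reach a common synchronization; because $|x[1:i]| = |y[1:i]|$, exactly one of $w_0, w_1$ lies in $\SigmaI^+$ and the other in $\SigmaO^+$ (or both are empty). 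Reading $a_0$ advances the $x$-side by one symbol and $a_1$ advances the $y$-side, and the new difference is obtained either by cancelling $a_i$ against the front of the opposite residual (when $a_i$ matches a symbol the other side still owes) or by extending the surplus on one coordinate.

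The main work, and the main obstacle, is the bookkeeping: I would verify that the two-part update $w_i \mapsto y_i z_i$ in the definition of $\Delta$ reproduces exactly this recomputed difference in each of the defining cases. Here $y_i$ handles how $a_i$ interacts with the current residual $w_i$ of the same side (either shortening it via the free-group cancellation $a_i^{-1} w_i$ when $a_i$ continues the same coordinate, leaving $w_i$ untouched when $a_i$ falls on the other coordinate, or, when $w_i = \varepsilon$, opening a fresh surplus determined by the comparison of $a_0$ and $a_1$), while $z_i$ adds the contribution that $a_{1-i}$ makes to coordinate $i$. Checking that the resulting pair $(y_0 z_0, y_1 z_1)$ is again shortest, of equal length, and genuinely equals $\mathit{diff}(x[1:i+1], y[1:i+1])$ is the cumbersome part; it amounts to a finite case distinction on the side of $a_0$, the side of $a_1$, and which of $w_0, w_1$ is non-empty, using throughout that the two residual components stay on opposite coordinates. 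Completing this case analysis closes the induction and yields the claim at $i = n$.
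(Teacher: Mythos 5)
Your proposal is correct and is essentially the argument the paper has in mind: the paper itself only remarks that the lemma follows ``by an easy but cumbersome induction'' on the definition of $\mathcal D_k$, and your induction on prefix length with the invariant $\mathcal D_{k}\colon (\varepsilon,\varepsilon) \xrightarrow{x[1:i] \otimes y[1:i]} \mathit{diff}(x[1:i],y[1:i])$, together with the well-posedness checks via \cref{rem:compatible-bounded-lag}, is exactly that induction. The remaining case distinction you defer is the same ``cumbersome'' part the paper also omits.
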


Lastly, given $T \in \fsl$, we define the set $\fse(T)$ -- finite shift entry -- as
\[
\fse(T) := \{ x \in \mathit{Prefs}(T) \mid x^{-1}T \in \fs \text{ and } (x')^{-1}T \notin \fs \text{ for all $x'$ proper prefix of $x$} \},
\]
that is, the set of all prefixes of $T$ that induce an entry point into some finite shift remainder of $T$.
It is easy to see that this set is regular because by \cref{remark:FS} it is sufficient to accept those words that enter a state from $Q_\mathcal A^\mathsf{FS}$ in a DFA $\mathcal A$ for $T$ for the first time.
Furthermore, $\mathit{lag}(\fse(T)) \leq \gamma$, where $\gamma$ is chosen according to \cref{lemma:form}.

\subparagraph{Main lemmas.} 
We are ready to show our key lemmas. The decidability of regularity of the sets $ \mathit{minsync}(S,T)$, $\mathit{maxsync}(S,T)$, and $\mathit{allsync}(S,T)$ is stated in \cref{lemma:minmaxsync}.


We start with a helpful lemma about the regularity of sets of non-minimal synchronizations, and then consider the sets $ \mathit{minsync}(T,T)$, $\mathit{maxsync}(T,T)$ without the additional parameter $S$ in \cref{lemma:minreg} and \cref{lemma:if-maxreg}.

\begin{lemma}\label{lemma:larger-sync-reg}
For $T \in \fsl$ and a regular $T' \subseteq T$, the set $\{w \in T \mid\text{ there is } w' \in T' \text{ such that } w \succ w'\}$  is regular.
\end{lemma}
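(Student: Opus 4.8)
The plan is to build an NFA that reads $w$, guesses a suitable $w'$ together with the point where $w'$ enters a finite-shift remainder of $T$, and verifies every required condition with finitely much memory. First I would reformulate the order. If $\llbracket w\rrbracket=\llbracket w'\rrbracket$ then $|w|=|w'|$, since both projections agree on $w$ and $w'$. Moreover $\fs$ is preserved under a single left quotient (if $L\in\fs$ then $a^{-1}L\in\fs$, because $\mathit{shift}(a^{-1}L)\le\mathit{shift}(L)+1$), so the set of positions $i$ with $(w[1,i])^{-1}T\in\fs$ is upward closed, i.e.\ an up-set $[\mathrm{ent}(w),|w|]$, where $\mathrm{ent}(w)$ is its least element ($\mathrm{ent}(w)=\infty$ if the set is empty). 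By \cref{remark:FS} this entry point is detected by a DFA for $T$ reaching $Q_\mathcal A^\mathsf{FS}$. Unwinding the definition of $\preceq_T$, one gets that $w\succ w'$ holds iff $\llbracket w\rrbracket=\llbracket w'\rrbracket$ and $\mathrm{ent}(w')<\mathrm{ent}(w)$; in particular $w'$ must enter a finite-shift remainder of $T$ strictly before $w$ does.

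Fix DFAs for $T$ and for $T'$, and let $\gamma$ be as in \cref{lemma:form}. The NFA runs in two phases. In phase one it reads $w$ up to position $p:=\mathrm{ent}(w')$ while guessing $w'[1,p]$ letter by letter, maintaining: the $T$-DFA state on the $w$-prefix, the $T$- and $T'$-DFA states on the $w'$-prefix, and the state of the difference automaton $\mathcal D_\gamma$ on the pair of prefixes. At position $p$ it requires the $w'$-prefix to enter $Q_\mathcal A^\mathsf{FS}$ for the first time (so $w'[1,p]\in\fse(T)$ and $\mathrm{ent}(w')=p$) and the $w$-prefix to lie outside $Q_\mathcal A^\mathsf{FS}$ (so $\mathrm{ent}(w)>p$). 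By \cref{lemma:kdelay} the state of $\mathcal D_\gamma$ then records $\mathit{diff}(w[1,p],w'[1,p])=(r,s)$, say with $r\in\SigmaI^*$, $s\in\SigmaO^*$ and $|r|,|s|\le\gamma$ by \cref{rem:compatible-bounded-lag}. The use of $\mathcal D_\gamma$ is justified because both prefixes have lag at most $\gamma$: for $w'[1,p]\in\fse(T)$ this is the bound $\mathit{lag}(\fse(T))\le\gamma$, and for $w[1,p]$ it follows from the claim that every $z\in\prefs{T}$ with $z^{-1}T\notin\fs$ satisfies $\mathit{lag}(z)\le\gamma$.

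In phase two the NFA reads only the remaining tail $\alpha$ of $w$, no longer tracking $w'$. Since $p=\mathrm{ent}(w')$, the remainder $y^{-1}T$ for $y=w'[1,p]$ lies in $\fs$, hence so does $y^{-1}T'\subseteq y^{-1}T$, and $\llbracket y^{-1}T'\rrbracket$ is a recognizable relation $\bigcup_i U_i\times V_i$ that depends only on the tracked $T'$-state. A tail $\beta$ with $w'=y\beta\in T'$ and $\llbracket w\rrbracket=\llbracket w'\rrbracket$ exists iff the (now determined) pair $(r^{-1}\pi_\inp(\alpha),\,s\,\pi_\outp(\alpha))$ lies in $\llbracket y^{-1}T'\rrbracket$, i.e.\ iff $\pi_\inp(\alpha)\in rU_i$ and $\pi_\outp(\alpha)\in s^{-1}V_i$ for some $i$; the symmetric case $r\in\SigmaO^*$, $s\in\SigmaI^*$ is handled identically with the roles swapped. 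Guessing $i$ at the phase boundary (from the finitely many precomputed decompositions), these two membership tests on the input- and output-projection of $\alpha$ are verified by a product of DFAs advancing on $\SigmaI$- resp.\ $\SigmaO$-letters, exactly as in \cref{lemma:recognizable-subset}, while the $T$-DFA on $w$ is driven to a final state to ensure $w\in T$. Accepting precisely when all these checks succeed gives an NFA for $\{w\in T\mid\exists\,w'\in T'\colon w\succ w'\}$, so this set is regular.

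The main obstacle is the claim invoked in phase one that a prefix $z$ of a word in $T$ with $z^{-1}T\notin\fs$ has $\mathit{lag}(z)\le\gamma$; this is what guarantees that $\mathcal D_\gamma$ never overflows on the relevant runs and keeps the state space finite (without it, $w$ and $w'$ could be compatible but arbitrarily out of sync, and their difference could not be tracked). I would prove it from \cref{lemma:form}: writing any extension $w=z\nu\in T$ as an element of $L_{\le\gamma}\cdot(\SigmaI^*+\SigmaO^*)^n$, a prefix whose lag exceeds $\gamma$ must already reach into the finite-shift blocks, which forces $z^{-1}T\in\fs$ and contradicts the hypothesis.
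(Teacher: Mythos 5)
Your proof is correct and takes essentially the same route as the paper's: both decompose $w$ at the position where a candidate $w'$ first enters a finite-shift remainder of $T$ (a prefix in $\fse(T)$), track $\mathit{diff}$ with $\mathcal D_\gamma$ using the lag bound $\gamma$, and reduce the tail condition to \cref{lemma:recognizable-subset} via recognizability of $\llbracket y^{-1}T'\rrbracket$, the only difference being that the paper packages the construction as a finite union of concatenations $M_{p,q,u,v}\cdot L_{p,q,u,v}$ of regular sets rather than as a single NFA. Your explicit reformulation of $\succ_T$ via entry points, and your argument that a prefix $z$ with $z^{-1}T\notin\fs$ has lag at most $\gamma$ (a fact the paper uses but does not prove), are both correct and fill in details the paper leaves implicit.
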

\begin{proof}
Let $U := \{w \in T \mid\text{ there is } w' \in T' \text{ such that } w \succ_T w'\}$.

A synchronization $w \in T$ is in $U$ if $w=xz$ such that $x^{-1}T \notin \fs$ and there exists some $y \in \fse(T)$ such that $x,y$ are compatible and $\llbracket w \rrbracket \in \llbracket y(y^{-1}T') \rrbracket$. The last condition ensures that $y$ can be extended to some $w'=yz' \in T'$ with $\llbracket w \rrbracket = \llbracket w' \rrbracket$.

This description allows us to represent $U$ as
\begin{align*}
   & \bigcup_{
    \mathclap{\substack{
    x \text{ s.t.\ } x^{-1}T \notin \fs\\
      y \in \fse(T)
    }}
  }\ 
  \{ xz \in T \mid \llbracket xz \rrbracket \in \llbracket y (y^{-1}T') \rrbracket\} 
  = \bigcup_{
    \mathclap{\substack{
    x \text{ s.t.\ } x^{-1}T \notin \fs\\
      y \in \fse(T)
    }}
  }\ 
  x \cdot \{ z \in x^{-1}T \mid \llbracket z \rrbracket \in \llbracket x \rrbracket^{-1} (\llbracket y \rrbracket \llbracket y^{-1}T' \rrbracket)\} \\
  = &   \bigcup_{
    \mathclap{\substack{
      x \text{ s.t.\ } x^{-1}T \notin \fs\\
        y \in \fse(T)
      }}
    }\ 
    x \cdot \{ z \in x^{-1}T \mid \llbracket z \rrbracket \in \llbracket v \rrbracket^{-1} (\llbracket u \rrbracket \llbracket y^{-1}T' \rrbracket)\}, \text{where $\mathit{diff}(x,y) = (u,v)$.}
\end{align*}

Using this representation, our goal is to show that $U$ is regular by rewriting it it terms of automata for $T,T'$.
Let $\mathcal A,\mathcal A'$ denote DFAs for $T,T'$ respectively. For simplicity, we assume that $\mathcal A$ and $\mathcal A'$ use the same transition structure, and only differ in their sets of final states (this can be achieved by taking the product of the two transition structures).
Recall that $x$ and $y$ (chosen as above) have a lag of at most $\gamma$, where $\gamma$ is chosen according to \cref{lemma:form}.
According to \cref{lemma:kdelay}, the run of $\mathcal D_{\gamma}$ on $x \otimes y$ ends in the state $\mathit{diff}(x,y)$, say $(u,v)$, which indicates that $\llbracket x u \rrbracket = \llbracket y v \rrbracket$.
So we can write the representation of $U$ as
\[
\bigcup_{
  \mathclap{\substack{
    p \in Q_\mathcal A \setminus Q_\mathcal A^{\mathsf{FS}} \\
     q \in Q_\mathcal A^{\mathsf{FS}}\\
    u,v \in \SigmaI^{\leq \gamma} \cup \SigmaO^{\leq \gamma}
  }}
  }\ 
  M_{p,q,u,v} \cdot 
  \overbrace{
  \{ z \in L(\mathcal A_p) \mid \llbracket z \rrbracket \in \underbrace{
    \llbracket u \rrbracket^{-1} \bigl(\llbracket v \rrbracket \llbracket L(\mathcal A_q') \rrbracket\bigr)
    }_{R_{q,u,v}}\}
  }^{L_{p,q,u,v}},
\]
where
\[
\begin{array}{ll}
M_{p,q,u,v} = \{ x \in (\SigmaIO)^* \mid & \text{there is } y \in \fse(T) \text{ such that } \mathcal A\colon q_0^\mathcal A \xrightarrow{x} p, \\
&
\mathcal A\colon q_0^\mathcal A \xrightarrow{y} q, \text{ and } \mathcal D_{\gamma}\colon (\varepsilon,\varepsilon) \xrightarrow{x \otimes y} (u,v)\}.
\end{array}
\]
The set $M_{p,q,u,v}$ is easily seen to be regular.
Since $L(\mathcal A_q)$ is regular and has finite shift, the relation $R_{q,u,v}$ is recognizable.
Hence, the set $L_{p,q,u,v}$ is regular according to \cref{lemma:recognizable-subset}.

In conclusion, the set $U$ is representable as a finite union of regular sets, thus, it is regular.
\end{proof}

\begin{lemma}\label{lemma:minreg}
For each $T \in \fsl$ the set $\mathit{minsync}(T,T)$ is regular.
\end{lemma}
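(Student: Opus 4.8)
The plan is to express $\mathit{minsync}(T,T)$ as the difference of $T$ with the set treated in \cref{lemma:larger-sync-reg}, so that regularity is immediate. Concretely, I would prove the identity
\[
  \mathit{minsync}(T,T) = T \setminus \{ w \in T \mid \text{there is } w' \in T \text{ such that } w \succ_T w'\},
\]
and then invoke \cref{lemma:larger-sync-reg} with $T' := T$ to conclude that the subtracted set is regular; since $T$ is regular, $\mathit{minsync}(T,T)$ is then regular as a difference of regular sets.

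All the real content is in justifying this identity. By definition $\mathit{minsync}(T,T)$ consists of the \emph{least} elements of each class (those $w$ with $w \preceq_T w'$ for \emph{all} $w'$ in the class), whereas the subtracted set removes exactly the \emph{non-minimal} elements. These two notions agree only if $\preceq_T$ is total on each class $\{w' \in T \mid \llbracket w' \rrbracket = \llbracket w \rrbracket\}$, so the crux is to show totality. First I would observe that finite shift is preserved under left quotients: whenever $\delta_\mathcal A(q,a)=q'$ we have $L(\mathcal A_{q'}) = a^{-1}L(\mathcal A_q)$, and deleting a leading symbol cannot increase the shift of any word, so $\mathit{shift}(a^{-1}L) \le \mathit{shift}(L)$. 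Hence $Q_\mathcal A^{\mathsf{FS}}$ is closed under the transition function, and by \cref{remark:FS} the predicate $i \mapsto \bigl[(w[1,i])^{-1}T \in \fs\bigr]$ is monotone in $i$ for every $w \in T$: it is false up to some entry position and true from then on.

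Two synchronizations with the same projection necessarily have the same length, so their associated predicates are two monotone $0/1$ step functions on a common index set and are therefore pointwise comparable. Consequently $w \preceq_T w'$ holds exactly when $w$ enters its finite-shift remainder no later than $w'$, which makes $\preceq_T$ a total preorder on each class. In a total preorder a minimal element is automatically a least element, so $w$ satisfies $w \preceq_T w'$ for all $w'$ in its class iff there is no $w' \in T$ with $\llbracket w' \rrbracket = \llbracket w \rrbracket$ and $w \succ_T w'$; this is precisely the displayed identity.

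With the identity in hand, the conclusion follows by applying \cref{lemma:larger-sync-reg} to $T' = T$ and subtracting the resulting regular set from $T$. I expect the main obstacle to be the monotonicity/totality step, and in particular making the quotient argument airtight: confirming that the left quotient of a finite-shift language again has finite shift, so that $Q_\mathcal A^{\mathsf{FS}}$ is genuinely closed under transitions and the entry into the finite-shift region is well defined and monotone. Everything after that is bookkeeping and a direct appeal to the preceding lemma.
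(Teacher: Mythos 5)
Your proof is correct and follows the same route as the paper: write $\mathit{minsync}(T,T)$ as $T$ minus the non-minimal synchronizations and apply \cref{lemma:larger-sync-reg} with $T'=T$. The paper treats the identity as immediate, whereas you additionally (and correctly) justify the minimal-versus-least subtlety by showing that finite shift is preserved under left quotients, so the entry predicate is monotone and $\preceq_T$ is total on each class.
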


\begin{proof}
The set $\mathit{minsync}(T,T)$ is equal to
$
  T \setminus \{ w \in T \mid \text{ there is } w' \in T \text{ such that } w \succ w'\},
$
that is, $T$ without all synchronizations that are not minimal. An application of \cref{lemma:larger-sync-reg} with $T' =T$ yields that the set of non-minimal synchronizations in $T$ is regular, and hence also its complement $\mathit{minsync}(T,T)$.
\end{proof}

In contrast to \cref{lemma:minreg}, given $T \in \fsl$, the set $\mathit{maxsync}(T,T)$ is not regular in general.

\begin{example}\label{ex:max-not-reg}
Consider $T = (\SigmaI\SigmaO)^* + \SigmaI^*\SigmaO^*$.
The set $\mathit{maxsync}(T,T)$ is of the form 
\[
  T \setminus \{xy \in \SigmaI^*\SigmaO^* \mid x \in \Sigma^*, y \in \SigmaO^*, |x| = |y|\}.
\]
It is easy to see that the set to be removed is not regular, thus, $\mathit{maxsync}(T,T)$ is not regular.
\end{example}

In fact, regularity of $\mathit{maxsync}(T,T)$ turns out to be a strong property because then the resynchronized definability problem reduces to the question of regularity of $\mathit{maxsync}(S,T)$, as stated in the following lemma.
\begin{lemma}\label{lemma:if-maxreg}
Let $S$ and $T \in \fsl$ such that $\mathit{maxsync}(T,T)$ is regular.
Then $\llbracket S \rrbracket \in \textnormal{\textsc{Rel}}(T)$ iff $\mathit{maxsync}(S,T)$ is regular and $\llbracket S \rrbracket \subseteq \llbracket T \rrbracket$.
\end{lemma}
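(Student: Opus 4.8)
The plan is to prove the two directions separately: the right-to-left direction is essentially the remark made right after the definitions of $\mathit{minsync}$, $\mathit{maxsync}$, $\mathit{allsync}$, while the hypothesis that $\mathit{maxsync}(T,T)$ is regular is only needed for the left-to-right direction. Before starting, I would record one structural fact: on the (finitely many) synchronizations in $T$ of a fixed pair, $\preceq_T$ is a total preorder that has a greatest element. Indeed, finite shift is preserved under taking residuals (a one-letter residual of an $\fs$-language is again in $\fs$, since shifts of $z$ inject into shifts of $az$), so for each $w$ the predicate ``$(w[1,i])^{-1}T \in \fs$'' is monotone in $i$; letting $t_w$ be the least threshold at which it becomes true, one has $w \preceq_T w'$ iff $t_w \le t_{w'}$. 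Consequently every pair in $\llbracket S\rrbracket \cap \llbracket T\rrbracket$ has a $\succeq_T$-greatest synchronization in $T$, so $\mathit{maxsync}(S,T)$ contains at least one synchronization of each such pair while $\llbracket \mathit{maxsync}(S,T)\rrbracket \subseteq \llbracket S\rrbracket$ holds by construction.

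For the direction from right to left, I assume $\mathit{maxsync}(S,T)$ is regular and $\llbracket S\rrbracket \subseteq \llbracket T\rrbracket$. Then every pair of $\llbracket S\rrbracket$ lies in $\llbracket T\rrbracket$ and hence has a greatest synchronization in $T$, which belongs to $\mathit{maxsync}(S,T)$; together with $\llbracket \mathit{maxsync}(S,T)\rrbracket \subseteq \llbracket S\rrbracket$ this yields $\llbracket \mathit{maxsync}(S,T)\rrbracket = \llbracket S\rrbracket$. As $\mathit{maxsync}(S,T)$ is a regular subset of $T$, it witnesses $\llbracket S\rrbracket \in \textsc{Rel}(T)$.

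For the converse, I assume $\llbracket S\rrbracket \in \textsc{Rel}(T)$ and fix a regular $T' \subseteq T$ with $\llbracket T'\rrbracket = \llbracket S\rrbracket$; note $T' \in \fsl$ as a subset of $T$. The inclusion $\llbracket S\rrbracket = \llbracket T'\rrbracket \subseteq \llbracket T\rrbracket$ is immediate. The crucial step is the identity
\[
  \mathit{maxsync}(S,T) = \mathit{maxsync}(T,T) \cap Z, \qquad Z := \{ w \in T \mid \text{there is } w' \in T' \text{ with } w \succeq_T w'\},
\]
where, by the convention that $\preceq_T$ compares only synchronizations of the same pair, $w \succeq_T w'$ entails $\llbracket w\rrbracket = \llbracket w'\rrbracket$. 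Left-to-right holds because a $w \in \mathit{maxsync}(S,T)$ is greatest among the synchronizations of $\llbracket w\rrbracket \in \llbracket S\rrbracket = \llbracket T'\rrbracket$, hence dominates some $w' \in T'$ with $\llbracket w'\rrbracket = \llbracket w\rrbracket$; right-to-left, if $w \in \mathit{maxsync}(T,T)$ and $w \succeq_T w'$ for some $w' \in T'$, then $\llbracket w\rrbracket = \llbracket w'\rrbracket \in \llbracket T'\rrbracket = \llbracket S\rrbracket$, so $w \in \mathit{maxsync}(S,T)$. Since $\mathit{maxsync}(T,T)$ is regular by hypothesis, it remains to prove that $Z$ is regular.

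The regularity of $Z$ is the main obstacle, and I would obtain it by re-running the proof of \cref{lemma:larger-sync-reg}. I split $Z = Z_\succ \cup Z_\approx$ according to whether $w \succ_T w'$ or $w \approx_T w'$. The set $Z_\succ = \{ w \in T \mid \exists w' \in T',\, w \succ_T w'\}$ is regular directly by \cref{lemma:larger-sync-reg} applied to $T' \subseteq T$. For $Z_\approx = \{ w \in T \mid \exists w' \in T',\, w \approx_T w'\}$ I reuse the same decomposition: $w \in Z_\approx$ iff $w = xz \in T$ with $x \in \fse(T)$ and there is $y \in \fse(T)$ compatible with $x$ such that $\llbracket w\rrbracket \in \llbracket y(y^{-1}T')\rrbracket$. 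This captures exactly $w \approx_T w'$ for $w' = yz' \in T'$, because $w \approx_T w'$ means $t_w = t_{w'}$, i.e.\ the finite-shift-entry prefixes $x,y \in \fse(T)$ have equal length and are compatible. The only change with respect to \cref{lemma:larger-sync-reg} is that the prefix $x$ now satisfies $x \in \fse(T)$, so $\delta_\mathcal A^*(x) \in Q_\mathcal A^\mathsf{FS}$ and $L(\mathcal A_x) = x^{-1}T \in \fs$, rather than $x^{-1}T \notin \fs$. Running the identical rewriting—using $\mathcal D_\gamma$ to read off $\mathit{diff}(x,y) = (u,v)$ (legitimate since $x,y$ have lag at most the $\gamma$ of \cref{lemma:form}) and \cref{lemma:recognizable-subset} for the recognizable relation $R_{q,u,v}$—expresses $Z_\approx$ as a finite union of products of regular sets. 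Hence $Z = Z_\succ \cup Z_\approx$ is regular, so $\mathit{maxsync}(S,T) = \mathit{maxsync}(T,T) \cap Z$ is regular, completing the converse. The delicate points to verify are the bookkeeping in the $\approx$ case (that equal-length $\fse(T)$-prefixes really characterise $w \approx_T w'$) and the lag bound ensuring that $\mathcal D_\gamma$ is applicable.
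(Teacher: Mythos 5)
Your overall strategy is the same as the paper's: both directions rest on writing $\mathit{maxsync}(S,T)$ as $\mathit{maxsync}(T,T)$ intersected with a regular set built from a witness $T'$ via \cref{lemma:larger-sync-reg}. Your right-to-left direction is correct (and your explicit verification via thresholds $t_w$ that $\preceq_T$ is a total preorder with a greatest element on the synchronizations of a fixed pair is a worthwhile addition), and your identity $\mathit{maxsync}(S,T)=\mathit{maxsync}(T,T)\cap Z$ with $Z=\{w\in T\mid \exists w'\in T' \text{ with } w\succeq_T w'\}$ is set-theoretically the right one; the paper instead intersects with $T'\cup\{w\in T\mid \exists w'\in T' \text{ with } w\succ w'\}$, which trades your $Z_\approx$ for the readily regular set $T'$.

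The genuine gap is in your regularity proof for $Z_\approx$. Your characterization ``$w\in Z_\approx$ iff $w=xz$ with $x\in\fse(T)$ and there is a compatible $y\in\fse(T)$ with $\llbracket w\rrbracket\in\llbracket y(y^{-1}T')\rrbracket$'' presupposes that every $w\in T$ has a prefix in $\fse(T)$, i.e., that the threshold $t_w$ is finite. This fails for $w\in T$ with $w^{-1}T\notin\fs$ (then $t_w=\infty$, and $w\approx_T w'$ exactly for the $w'$ with $t_{w'}=\infty$). Concretely, for $T=(\SigmaI\SigmaO)^*$ one has $\fse(T)=\emptyset$, so both your $Z_\succ$ and your $Z_\approx$ are empty and your identity would yield $\mathit{maxsync}(S,T)=\emptyset$, which is wrong whenever $\llbracket S\rrbracket\cap\llbracket T\rrbracket\neq\emptyset$. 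The missing piece is $Z_\approx^\infty=\{w\in T\mid w^{-1}T\notin\fs \text{ and there is } w'\in T' \text{ with } (w')^{-1}T\notin\fs \text{ and } \llbracket w'\rrbracket=\llbracket w\rrbracket\}$. It is regular: by \cref{lemma:form} every such $w$ and $w'$ has lag at most $\gamma$ (as used in the proof of \cref{lemma:minmaxsync}), so an NFA can read $w$ while guessing a word $w'$ of the same length, tracking $\mathit{diff}$ with $\mathcal D_\gamma$ and requiring it to end in $(\varepsilon,\varepsilon)$, with $w\in T$, $w'\in T'$, and both runs ending outside the finite-shift states of \cref{remark:FS}. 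Adding $Z_\approx^\infty$ to your union repairs the argument; your finite-threshold case and the application of \cref{lemma:larger-sync-reg} to $Z_\succ$ are otherwise sound.
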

\begin{proof}
If $\mathit{maxsync}(S,T)$ is regular and $\llbracket S \rrbracket \subseteq \llbracket T \rrbracket$, then $\llbracket S \rrbracket = \llbracket \mathit{maxsync}(S,T)\rrbracket \in \textnormal{\textsc{Rel}}(T)$.

Now assume that $\llbracket S \rrbracket \in \textnormal{\textsc{Rel}}(T)$, and let $T' \subseteq T$ be a regular set such that $\llbracket T' \rrbracket = \llbracket S \rrbracket$. Then $\mathit{maxsync}(S,T) = \mathit{maxsync}(T',T)$, and we have
\[
\mathit{maxsync}(T',T) = \mathit{maxsync}(T,T) \cap (T' \cup \underbrace{\{w \in T \mid \text{ there is } w' \in T' \text{ with } w \succ w'\}}_{U}).
\]
The set $U$ is regular according to \cref{lemma:larger-sync-reg}, and hence $\mathit{maxsync}(T',T)$ is a  Boolean combination of regular sets.
\end{proof}

We note that \cref{lemma:if-maxreg} fails without the regularity assumption:
If $\mathit{maxsync}(T,T)$ is not regular, we can choose $S=T$ and obtain that $\llbracket S \rrbracket  = \llbracket T \rrbracket \in \textnormal{\textsc{Rel}}(T)$, and $\mathit{maxsync}(S,T) = \mathit{maxsync}(T,T)$ is not regular (and \cref{ex:max-not-reg} shows that there are regular sets $T$ such that $\mathit{maxsync}(T,T)$ is not regular).

We now prove our main lemma on the decidability of the regularity of the different sets of synchronizations.

\begin{lemma}\label{lemma:minmaxsync}
Given two synchronization languages $S,T \in \fsl$.
The regularity of the followings sets is effectively decidable.
\begin{enumerate}
  \item $\mathit{minsync}(S,T)$,
  \item $\mathit{maxsync}(S,T)$, and
  \item $\mathit{allsync}(S,T)$.
\end{enumerate}
\end{lemma}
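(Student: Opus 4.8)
The plan is to reduce all three regularity questions to a single core problem: given a regular language $R \subseteq T$ with $R \in \fsl$ and $S \in \fsl$, decide whether $A_R := \{\, w \in R \mid \llbracket w \rrbracket \in \llbracket S \rrbracket \,\}$ is regular. Item~(3) is exactly this with $R = T$. For item~(1) I would use that $\mathit{minsync}(T,T)$ is regular by \cref{lemma:minreg} and observe that $\mathit{minsync}(S,T) = A_{\mathit{minsync}(T,T)}$, since requiring $w$ to be minimal among all $T$-synchronizations of $\llbracket w \rrbracket$ is a condition on $T$ alone, independent of the extra requirement $\llbracket w \rrbracket \in \llbracket S \rrbracket$. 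Item~(2) is the most delicate, because $\mathit{maxsync}(T,T)$ need not be regular (\cref{ex:max-not-reg}); there I would fold the maximality condition (``longest finite-shift-entry prefix for the given pair'') directly into the core construction, reusing the book-keeping from \cref{lemma:larger-sync-reg}.

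The core construction mirrors the proof of \cref{lemma:larger-sync-reg}. Fix a DFA $\mathcal A$ for $T$ and a synchronous DFA $\mathcal C$ for the automatic relation $\llbracket S \rrbracket$. Every $w \in R$ factors as $w = xz$, where $x \in \fse(T)$ is the prefix that first enters the finite-shift region $Q_\mathcal A^{\mathsf{FS}}$ (recall $\mathit{lag}(\fse(T)) \le \gamma$ for $\gamma$ from \cref{lemma:form}, and that $Q_\mathcal A^{\mathsf{FS}}$ is forward closed), and $z \in L(\mathcal A_p)$ with $p := \delta_\mathcal A^*(x) \in Q_\mathcal A^{\mathsf{FS}}$, so $x^{-1}R \subseteq x^{-1}T \in \fs$ (using \cref{remark:FS} and a product DFA for $R$). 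Feeding $\llbracket x \rrbracket$ into $\mathcal C$ leaves $\mathcal C$ in some state $s$ with a pending difference $(u,v)$ of length at most $\gamma$ (by \cref{rem:compatible-bounded-lag}), which is computed by $\mathcal D_\gamma$ along $x$ as in \cref{lemma:kdelay}. This lets me write $A_R$ as a \emph{finite} union
\[
  A_R \;=\; \bigcup_{p,s,u,v} M_{p,s,u,v} \cdot \bigl\{\, z \in L(\mathcal A_p) \mid \llbracket z \rrbracket \in R^S_{p,s,u,v} \,\bigr\},
\]
indexed by states $p$ of $\mathcal A$, states $s$ of $\mathcal C$, and bounded difference words $u,v \in \SigmaI^{\le \gamma} \cup \SigmaO^{\le \gamma}$, where each $M_{p,s,u,v}$ is regular exactly as in \cref{lemma:larger-sync-reg}, and $R^S_{p,s,u,v}$ is the residual of $\llbracket S \rrbracket$ from state $s$ with pending difference $(u,v)$, hence again automatic. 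The only difference with \cref{lemma:larger-sync-reg} is that $R^S_{p,s,u,v}$ is automatic rather than recognizable, so \cref{lemma:recognizable-subset} no longer applies and the inner sets need not be regular.

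It therefore remains to decide regularity of each inner set $B_{p,s,u,v} := \{\, z \in L(\mathcal A_p) \mid \llbracket z \rrbracket \in R^S_{p,s,u,v} \,\}$, and to show that regularity of $A_R$ is equivalent to regularity of all (finitely many) of these. The equivalence follows because the $\fse(T)$-prefix, being the \emph{first} entry into $Q_\mathcal A^{\mathsf{FS}}$, is uniquely located along any run: if a representative $x_0 \in M_{p,s,u,v}$ exists, then $x_0^{-1}A_R = B_{p,s,u,v}$, and left quotient by a single word preserves regularity; conversely, if every $B_{p,s,u,v}$ is regular then $A_R$ is a finite union of regular concatenations. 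For the individual sets I would exploit that $L(\mathcal A_p) \in \fs$: each $z$ has at most a computable number of maximal one-sided blocks, so $L(\mathcal A_p)$ splits into finitely many block patterns (with a regular constraint on the letters inside each block), and over a fixed pattern the constraint $\llbracket z \rrbracket \in R^S_{p,s,u,v}$ for an automatic relation becomes a semilinear (Presburger-definable) condition on the block lengths; deciding whether the resulting language is regular is then handled by the Ginsburg--Spanier theory of bounded regular languages.

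The main obstacle is precisely this last step: translating ``an asynchronously produced pair lies in a synchronous (automatic) relation'' into a semilinear condition on the boundedly many block lengths, and effectively testing regularity of the resulting language — the sole source of non-regularity being a forced equality of two unboundedly growing blocks, as in $\{a^n b^n\}$. For item~(2) there is the additional subtlety that maximality couples the prefix with the suffix (it asks that the pair admit no strictly more synchronous representation in $T$), so it is not a pure prefix condition; I expect expressing it as a semilinear boundary condition on the first blocks of the suffix, and verifying that the single-word left-quotient isolation of components still goes through, to be the technically heaviest part of the argument.
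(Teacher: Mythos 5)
Your overall architecture coincides with the paper's: factor each synchronization at its $\fse(T)$ entry point, track the bounded pending difference with $\mathcal D_\gamma$ (\cref{lemma:kdelay}, \cref{rem:compatible-bounded-lag}), reduce $\mathit{minsync}$ via $\mathit{minsync}(S,T)=\mathit{allsync}(S,\mathit{minsync}(T,T))$ and \cref{lemma:minreg}, and express everything as a finite union $\bigcup M_{\cdot}\cdot B_{\cdot}$ whose regularity is equivalent (by a left-quotient argument) to the regularity of the finitely many inner sets $B=\{z\in L\mid \llbracket z\rrbracket\in R'\}$ with $L\in\fs$ and $R'$ automatic. Two omissions are repairable: (a) not every $w\in T$ has a prefix in $\fse(T)$ (for $T=(\SigmaI\SigmaO)^*$ no word ever enters a finite-shift residual), so the part $\{w\mid w^{-1}T\notin\fs\}$ needs separate treatment; the paper handles it by first passing to the \emph{full $\gamma$-lagged representation} of $S$, under which this part is just $\{w\in S\cap T\mid w^{-1}T\notin\fs\}$ and hence regular. (b) For $\mathit{maxsync}$ the maximality condition couples the $\fse(T)$-prefix with \emph{all} compatible prefixes $x$ with $x^{-1}T\notin\fs$; the paper carries this out via the profiles $(p,q,G)$, which you flag as the hard part but do not execute.

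The genuine gap is your final step, deciding regularity of the inner sets. Reducing ``$\llbracket z\rrbracket\in R'$ for $z$ ranging over a finite-shift language'' to a semilinear condition on block lengths and then applying Ginsburg--Spanier cannot work for non-unary alphabets: membership in an automatic relation constrains the \emph{letters} of the blocks read in lockstep, not merely their lengths. For instance, with $L=\SigmaI^*\SigmaO^*$ and $R'=\{(w,\bar w)\mid w\in\SigmaI^*\}$ (copying with relabelled letters), the inner set is $\{w\bar w\}$, which is non-regular although every length constraint it imposes is trivially semilinear; the obstruction is letter-by-letter correlation across blocks, invisible to Presburger conditions on block lengths. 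The paper's resolution is different and essential: since $L\in\fs$, every \emph{regular} subset of $L$ defines a recognizable relation, and conversely \cref{lemma:recognizable-subset} shows that if $R'$ is recognizable then the inner set is regular; hence the inner set is regular iff the corresponding automatic relation (e.g.\ $\llbracket y^{-1}S\rrbracket\cap\llbracket y^{-1}T\rrbracket$, resp.\ $R_{p,q,G}$) is recognizable, and this is decidable by the theorem of \cite{DBLP:journals/ita/CartonCG06} on recognizability of automatic relations. That decision procedure is the missing ingredient; without it (or an equivalent replacement, which your Presburger sketch is not) the argument does not close.
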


\begin{proof}
First note that the definition of the sets $\mathit{minsync}(S,T)$, $\mathit{maxsync}(S,T)$, and $\mathit{allsync}(S,T)$ does not depend on the precise representation of $S$, but only on the relation $\llbracket S \rrbracket$. Hence, we can choose a representation of $\llbracket S \rrbracket$ inside $\fsl$ that is most convenient for our purposes.

According to \cref{lemma:form}, $T \subseteq L_{\leq \gamma} \cdot (\SigmaI^* + \SigmaO^*)^m$ for some $\gamma$ and $m$. This means the synchronizations in $T$ have lag at most $\gamma$ before entering the finite shift part. We choose for $S$ what we call the full $\gamma$-lagged representation, which means that $S$ satisfies $S = \{w \in L_{\leq \gamma} \cdot (\SigmaI^* + \SigmaO^*) \mid \llbracket w \rrbracket \in \llbracket S \rrbracket\}$. We can obtain such a representation of $\llbracket S \rrbracket$ as follows: Let $S'$ be the canonical $(\SigmaI\SigmaO)^*(\SigmaI^* + \SigmaO^*)$-controlled representation of $\llbracket S \rrbracket$. Then $\{ y \mid \text{exists } x \in S' \text{ and } \mathcal{D}_\gamma: (\varepsilon,\varepsilon) \xrightarrow{x \otimes y}(\varepsilon,\varepsilon)\}$ is the full $\gamma$-lagged representation of $\llbracket S \rrbracket$, which is clearly regular.

So in the following, we assume that $S$ is the full $\gamma$-lagged representation of $\llbracket S \rrbracket$, and let $\mathcal A$ be a DFA for $S$.
The advantage of this representation is the following. Let $y \in (\SigmaIO)^*$ with $\mathit{lag}(y) \le \gamma$. Then $ \llbracket y \rrbracket^{-1} \llbracket S \rrbracket =  \llbracket y^{-1}S \rrbracket$, that is, if there is a pair in $\llbracket S \rrbracket$ that starts with $\llbracket y \rrbracket$, then there is a synchronization of that pair in $S$ that starts with $y$. 

Let $\mathcal B$ be a DFA for $T$.

\begin{claim}
It is decidable whether $\mathit{minsync}(S,T)$ is regular.
\end{claim}

\begin{proof}
Note that $\mathit{minsync}(S,T) = \mathit{allsync}(S,\mathit{minsync}(T,T))$.
\Cref{lemma:minreg} yields that $\mathit{minsync}(T,T)$ is regular.
Together with the next claim, decidability follows.
\end{proof}

In the proofs of the next claims we use that it is decidable whether an automatic relation is recognizable, see \cite{DBLP:journals/ita/CartonCG06}.

\begin{claim}\label{claim:allreg}
It is decidable whether $\mathit{allsync}(S,T)$ is regular.
\end{claim}
  
\begin{proof}
We show that $\mathit{allsync}(S,T)$ is not regular iff there exists some $y \in \fse(T)$ such that the relation $\llbracket y^{-1}S \rrbracket \cap \llbracket y^{-1}T \rrbracket$ is not recognizable. 

Assume that $\llbracket y^{-1}S \rrbracket \cap \llbracket y^{-1}T \rrbracket$ is not recognizable for some $y \in \fse(T)$.
Hence, $\llbracket y^{-1}\mathit{allsync}(S,T) \rrbracket = \llbracket y^{-1}S \rrbracket \cap \llbracket y^{-1}T \rrbracket$ is not recognizable.
Since $y^{-1}T \in \fs$, the set $y^{-1}\mathit{allsync}(S,T) \subseteq y^{-1}T$ has finite shift.
However, it is not regular, because every regular set with finite shift describes a recognizable relation.
Thus, $y^{-1}\mathit{allsync}(S,T)$ is not regular which implies that $\mathit{allsync}(S,T)$ is not regular.

Conversely, assume that $\llbracket y^{-1}S \rrbracket \cap \llbracket y^{-1}T \rrbracket$ is recognizable for all $y \in \fse(T)$.
We partition the set $\mathit{allsync}(S,T)$ into $\{ w \mid w^{-1}T \notin \fs\} \cap \mathit{allsync}(S,T)$ and $\{ w \mid w^{-1}T \in \fs\} \cap \mathit{allsync}(S,T)$ and show that both sets are regular.

The first set is regular, because if $w^{-1}T \notin \fs$, then $\mathit{lag}(w) \leq \gamma$, thus, we can rewrite first set as $\{ w\in S \cap T \mid w^{-1}T \notin \fs\}$.

The second set is described as
\begin{align*}
 & \bigcup_{\mathclap{\substack{
    y \in \fse(T)
  }}
  }\ 
  y \cdot (y^{-1}\mathit{allsync}(S,T))
  = \bigcup_{\mathclap{\substack{
    y \in \fse(T)
  }}
  }\ 
  y \cdot \{ z \in y^{-1}T \mid \llbracket z \rrbracket \in \llbracket y^{-1}S \rrbracket \cap \llbracket y^{-1}T \rrbracket\} \\
  = & \bigcup_{\mathclap{\substack{
    p \in Q_\mathcal A\\
    q \in Q_\mathcal B 
    }}
    }\ 
     M_{p,q} \cdot 
\{ z \in L(\mathcal B_q) \mid \llbracket z \rrbracket \in \llbracket L(\mathcal A_p)\rrbracket \cap \llbracket L(\mathcal B_q)\rrbracket\},
\end{align*}
where $M_{p,q} = \{ y \in \fse(T) \mid \mathcal A\colon q_0^\mathcal A \xrightarrow{y} p, \mathcal B\colon q_0^\mathcal B \xrightarrow{y} q \}$.
The set $M_{p,q}$ is clearly regular.
If $M_{p,q}$ is non-empty then $\llbracket L(\mathcal A_p)\rrbracket \cap \llbracket L(\mathcal B_q)\rrbracket$ is recognizable because then $\llbracket L(\mathcal A_p) \rrbracket = \llbracket y^{-1}S \rrbracket$, $\llbracket L(\mathcal B_q) \rrbracket = \llbracket y^{-1}T \rrbracket$ for some $y \in \fse(T)$, and $\llbracket y^{-1}S \rrbracket \cap \llbracket y^{-1}T \rrbracket$ is recognizable by assumption.
Thus, the set $\{ z \in L(\mathcal B_q) \mid \llbracket z \rrbracket \in \llbracket L(\mathcal A_p)\rrbracket \cap \llbracket L(\mathcal B_q)\rrbracket\}$ is regular according to \cref{lemma:recognizable-subset}.

We have shown $\mathit{allsync}(S,T)$ is regular iff $\llbracket y^{-1}S \rrbracket \cap \llbracket y^{-1}T \rrbracket$ is recognizable for all $y \in \fse(T)$.
With $p,q$ as above, $\llbracket y^{-1}S \rrbracket \cap \llbracket y^{-1}T \rrbracket = \llbracket L(\mathcal A_p)\rrbracket \cap \llbracket L(\mathcal B_q)\rrbracket$, and hence there are only finitely many such relations. Further, $\llbracket L(\mathcal A_p)\rrbracket \cap \llbracket L(\mathcal B_q)\rrbracket$ is an automatic relation because $\llbracket L(\mathcal A_p)\rrbracket$ and $\llbracket L(\mathcal B_q)\rrbracket$ are automatic. Hence, it is decidable whether $\llbracket L(\mathcal A_p)\rrbracket \cap \llbracket L(\mathcal B_q)\rrbracket$ is recognizable. 
\end{proof}

The proof of the next claim is similar to the proof of the previous claim but more involved.

\begin{claim}\label{claim:maxreg}
It is decidable whether $\mathit{maxsync}(S,T)$ is regular.
\end{claim}
  
\begin{proof}
Our goal is to show that $\mathit{maxsync}(S,T)$ is regular iff $\llbracket y^{-1} \mathit{maxsync}(S,T)\rrbracket$ is recognizable for all $y \in \fse(T)$. The proof of this also provides a way to decide this property.

Assume that $\mathit{maxsync}(S,T)$ is regular.
Consider some $y \in \fse(T)$.
Clearly, $y^{-1}\mathit{maxsync}(S,T)$ is regular.
Consequently, $\llbracket y^{-1}\mathit{maxsync}(S,T) \rrbracket$ is recognizable, because $y^{-1}\mathit{maxsync}(S,T) \subseteq y^{-1}T \in \fs$ is regular and has finite shift.

Conversely, assume that $\llbracket y^{-1} \mathit{maxsync}(S,T)\rrbracket$ is recognizable for all $y \in \fse(T)$, and consider some $y \in \fse(T)$. We describe $\llbracket y^{-1} \mathit{maxsync}(S,T) \rrbracket$ as an intermediate step to showing that $\mathit{maxsync}(S,T)$ is regular.
First, we have that $z \in y^{-1}(\mathit{maxsync}(S,T))$ if $z \in y^{-1}T$ and $\llbracket z \rrbracket \in \llbracket y^{-1}S \rrbracket$, and there is no $x$ compatible to $y$ such that $x^{-1}T \notin \fs$ and $\llbracket yz \rrbracket \in \llbracket x(x^{-1}T)\rrbracket$ because this would mean that $yz$ is not a maximal synchronization. We obtain that 
\begin{align*}
  \llbracket y^{-1} \mathit{maxsync}(S,T) \rrbracket =
  & \left(\llbracket  y^{-1}S  \rrbracket \cap \llbracket  y^{-1}T  \rrbracket\right) \setminus \Bigl(\bigcup_{\mathclap{\substack{
      x \text{ compatible to } y\\
      x^{-1}T \notin \fs
    }}
  }\ 
  \llbracket y \rrbracket^{-1} \llbracket x(x^{-1}T) \rrbracket
  \Bigr)\\
    = & \left(\llbracket  y^{-1}S  \rrbracket \cap \llbracket  y^{-1}T  \rrbracket\right) \setminus \Bigl(\bigcup_{\mathclap{\substack{
    x \text{ compatible to } y\\
    x^{-1}T \notin \fs
  }}
}\ 
\llbracket u \rrbracket^{-1} \bigl(\llbracket v \rrbracket \llbracket x^{-1}T \rrbracket\bigr)\Bigr) \text{ with } \mathit{diff}(x,y) = (u,v).
 \end{align*}
 
We now rewrite this representation of $\llbracket y^{-1} \mathit{maxsync}(S,T) \rrbracket$ based on the automata $\mathcal A, \mathcal B, \mathcal D_\gamma$. Let $\mathcal A\colon q_0^\mathcal A \xrightarrow{y} p$, and $\mathcal B\colon q_0^\mathcal B \xrightarrow{y} q$. Then $\llbracket L(\mathcal A_p) \rrbracket = \llbracket  y^{-1}S  \rrbracket$, and $\llbracket L(\mathcal B_q) \rrbracket = \llbracket  y^{-1}T  \rrbracket$.
Further, consider some $x$ compatible to $y$ such that $x^{-1}T \notin \fs$, and let $B\colon q_0^\mathcal B \xrightarrow{x} r$. Then $\llbracket L(\mathcal B_r) \rrbracket = \llbracket  x^{-1}T  \rrbracket$.
The lags of $x$ and of $y$ are at most $\gamma$, hence $\mathcal D_{\gamma}\colon (\varepsilon,\varepsilon) \xrightarrow{x \otimes y} (u,v)$ with $\mathit{diff}(x,y) = (u,v)$ according to \cref{lemma:kdelay}. Thus, we obtain
\begin{align*}
\llbracket y^{-1} \mathit{maxsync}(S,T)\rrbracket = \overbrace{
  \bigl(\llbracket L(\mathcal A_p) \rrbracket \cap \llbracket L(\mathcal B_q) \rrbracket\bigr) \setminus \Bigl( 
    \bigcup_{\mathclap{\substack{(r,u,v) \in G}}}\ 
    \llbracket u \rrbracket^{-1} \bigl(\llbracket v \rrbracket \llbracket L(\mathcal B_r) \rrbracket\bigr)
    \Bigr)}^{R_{p,q,G}},
\end{align*}
where $G = \{ (r,u,v) \mid \text{there is } x \text{ with } \mathcal B\colon q_0^\mathcal B \xrightarrow{x} r \in Q_\mathcal B \setminus Q_\mathcal B^{\mathsf{FS}} \text{ and }\mathcal D_{\gamma}\colon (\varepsilon,\varepsilon) \xrightarrow{x \otimes y} (u,v)\}$.
Note that $G$ is finite because $|u|,|v| \leq \gamma$ for all $(r,u,v) \in G$.
Furthermore, we define the transition profile $P(y)$ of $y$ as the triple $(p,q,G)$.
The profile $P(y)$ contains all necessary information to express $\llbracket y^{-1} \mathit{maxsync}(S,T)\rrbracket = R_{P(y)}$.
Let $\mathcal P = \{ P(y) \mid y \in \fse(T)\}$ be the set of all relevant profiles, and note that the set is finite.

We are ready to show that $\mathit{maxsync}(S,T)$ is regular.
Towards that we partition it into $\{ w \mid w^{-1}T \notin \fs\} \cap \mathit{maxsync}(S,T)$ and $\{ w \mid w^{-1}T \in \fs\} \cap \mathit{maxsync}(S,T)$ and show that both sets are regular.
The first set contains only synchronizations that do not have some $y \in \fse(T)$ as prefix, and the second set contains only synchronizations that have some $y \in \fse(T)$ as prefix.

The first set is regular, because if $w^{-1}T \notin \fs$, then $\mathit{lag}(w) \leq \gamma$, thus, we can rewrite first set as $\{ w\in S \cap T \mid w^{-1}T \notin \fs\}$.

The second set is described as
\[
\bigcup_{\mathclap{y \in \fse(T)}}\ y \cdot \bigl(y^{-1} \mathit{maxsync}(S,T)\bigr) = \bigcup_{\mathclap{y \in \fse(T)}}\ y \cdot \{ z \in y^{-1}T \mid \llbracket z \rrbracket \in \llbracket y^{-1}\mathit{maxsync}(S,T)\rrbracket\}.
\]
Using the definition of $\llbracket y^{-1}\mathit{maxsync}(S,T)\rrbracket$ from above we can express this as
\[
\bigcup_{\mathclap{(p,q,G) \in \mathcal P}}\ M_{p,q,G} \cdot 
\{ z \in L(\mathcal B_q) \mid \llbracket z \rrbracket \in R_{p,q,G}\},
\]
where $M_{p,q,G} = \{ y \in \fse(T) \mid P(y) = (p,q,G) \}$.
The set $M_{p,q,G}$ is regular because one can build an automaton that checks for in input $y$ whether $y \in \fse(T)$ and whether $P(y) = (p,q,G)$ by standard automaton constructions.
If $M_{p,q,G}$ is non-empty, then there is some $y \in \fse(T)$
with  $ R_{p,q,G} = \llbracket y^{-1}\mathit{maxsync}(S,T) \rrbracket$, which is recognizable by assumption.
Thus, the set $\{ z \in L(\mathcal B_q) \mid \llbracket z \rrbracket \in R_{p,q,G}\}$ is regular according to \cref{lemma:recognizable-subset}.

Hence, $\mathit{maxsync}(S,T)$ is a finite union of regular sets.

We have shown that $\mathit{maxsync}(S,T)$ is regular iff  $\llbracket y^{-1} \mathit{maxsync}(S,T)\rrbracket$ is recognizable for all $y \in \fse(T)$. We have seen that $y^{-1} \mathit{maxsync}(S,T)\rrbracket = R_{(p,q,G)}$ for $P(y) = (p,q,G)$ only depends on the profile of $y$, so there are only finitely many such relations. The relations used in the description of $R_{(p,q,G)}$ are all automatic, and hence $R_{(p,q,G)}$ is also automatic. Thus, it is decidable whether $R_{(p,q,G)}$ is recognizable.
\end{proof}

This completes the proof \cref{lemma:minmaxsync}.
\end{proof}

The decidability results presented next follow easily from the above lemma.

\begin{theorem}\label{prop:minmaxsync}
 It is decidable, given $S$ and $T \in \fsl$, whether $\mathit{minsync}(S,T)$, $\mathit{maxsync}(S,T)$, or $\mathit{allsync}(S,T)$ is a definition of $\llbracket S \rrbracket$ in $T$.
\end{theorem}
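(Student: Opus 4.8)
The plan is to reduce the statement to two independent, individually decidable checks. Recall that being a \emph{definition} of $\llbracket S \rrbracket$ in $T$ means, for a candidate set $X$, that $X$ is regular and that $\llbracket X \rrbracket = \llbracket S \rrbracket$ (membership $X \subseteq T$ holds by construction for all three sets). The first observation I would make is that each of the three sets captures exactly the same relation, namely
\[
  \llbracket \mathit{minsync}(S,T) \rrbracket = \llbracket \mathit{maxsync}(S,T) \rrbracket = \llbracket \mathit{allsync}(S,T) \rrbracket = \llbracket S \rrbracket \cap \llbracket T \rrbracket,
\]
and crucially this identity holds \emph{regardless} of whether $\llbracket S \rrbracket \subseteq \llbracket T \rrbracket$. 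For $\mathit{allsync}(S,T)$ this is immediate from the definition. For $\mathit{minsync}$ and $\mathit{maxsync}$ one has to check that every pair in $\llbracket S \rrbracket \cap \llbracket T \rrbracket$ actually contributes a synchronization; here I would use that the preorder $\preceq_T$, restricted to the (finitely many, equal-length) synchronizations of a fixed pair, is \emph{total}. Indeed, since finite shift is preserved under left quotients (\cref{remark:FS}, together with the fact that $Q_\mathcal A^{\mathsf{FS}}$ is closed along runs), the predicate $(w[1,i])^{-1}T \in \fs$ becomes and stays true once the run of a DFA for $T$ enters $Q_\mathcal A^{\mathsf{FS}}$. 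Hence $w \preceq_T w'$ merely compares the positions at which $w$ and $w'$ first enter the finite-shift part. A total preorder on a finite nonempty set has both a minimum and a maximum, so any synchronization realizing the earliest (respectively latest) such entry point witnesses membership in $\mathit{minsync}(S,T)$ (respectively $\mathit{maxsync}(S,T)$); conversely every element of all three sets maps into $\llbracket S \rrbracket \cap \llbracket T \rrbracket$.

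With this identity in hand, for any $X \in \{\mathit{minsync}(S,T), \mathit{maxsync}(S,T), \mathit{allsync}(S,T)\}$ the condition $\llbracket X \rrbracket = \llbracket S \rrbracket$ is equivalent to $\llbracket S \rrbracket \cap \llbracket T \rrbracket = \llbracket S \rrbracket$, that is, to $\llbracket S \rrbracket \subseteq \llbracket T \rrbracket$. Note this inclusion condition does not depend on which of the three sets was chosen. Therefore $X$ is a definition of $\llbracket S \rrbracket$ in $T$ if and only if (i) $X$ is regular and (ii) $\llbracket S \rrbracket \subseteq \llbracket T \rrbracket$.

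Both conditions are decidable. Condition (i) is precisely the content of \cref{lemma:minmaxsync}, which already provides an effective test for regularity of each of the three sets individually. For condition (ii), since $S, T \in \fsl$ the relations $\llbracket S \rrbracket$ and $\llbracket T \rrbracket$ are automatic, and inclusion of automatic relations is decidable, as these are effectively closed under Boolean operations and have decidable emptiness. Running the appropriate instance of \cref{lemma:minmaxsync} together with the inclusion test and returning the conjunction of their outcomes decides the problem for each of the three sets.

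The genuinely new work here is light, since the heavy machinery sits in \cref{lemma:minmaxsync}; the one point that needs care is the relation identity, and specifically the existence of a $\preceq_T$-minimum and $\preceq_T$-maximum synchronization for each pair in $\llbracket S \rrbracket \cap \llbracket T \rrbracket$. The main obstacle is therefore to argue cleanly that $\preceq_T$ is total on the synchronizations of a fixed pair, which rests on the monotonicity of the ``entered finite shift'' predicate along a run. Once that is established, decidability is simply the conjunction of \cref{lemma:minmaxsync} and the automatic-inclusion test.
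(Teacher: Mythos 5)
Your proposal is correct and follows essentially the same route as the paper: decide regularity of the chosen set via \cref{lemma:minmaxsync} and separately decide $\llbracket S \rrbracket \subseteq \llbracket T \rrbracket$ using that both relations are automatic (the paper does this via the effective canonical $(\SigmaI\SigmaO)^*(\SigmaI^*+\SigmaO^*)$-representation). Your explicit justification that $\preceq_T$ is total on the synchronizations of a fixed pair — so that minima and maxima exist and all three sets define $\llbracket S \rrbracket \cap \llbracket T \rrbracket$ — is a detail the paper only asserts informally before stating the lemma, and is a welcome addition rather than a deviation.
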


\begin{proof}
  The regularity of $\mathit{minsync}(S,T)$, $\mathit{maxsync}(S,T)$, and $\mathit{allsync}(S,T)$ is decidable by \cref{lemma:minmaxsync}.
  It is decidable whether $\llbracket S \rrbracket \subseteq \llbracket T \rrbracket$, because $S,T \in \fsl$ and $(\Sigma\Gamma)^*(\Sigma^* + \Gamma^*)$ is an effective canonical representation of $\fsl$.
  So, deciding whether $\llbracket S \rrbracket \subseteq \llbracket T \rrbracket$ can be reduced to deciding inclusion of regular languages.
\end{proof}

In combination with \cref{lemma:if-maxreg} we obtain the following theorem.
\begin{theorem} \label{thm:maxsyncTT-regular}
The resynchronized definability problem is decidable for given $S$ and $T \in \fsl$ with $\mathit{maxsync}(T,T)$ regular.
\end{theorem}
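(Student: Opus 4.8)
The plan is to obtain the statement as an immediate packaging of \cref{lemma:if-maxreg} and \cref{lemma:minmaxsync}. We are promised instances $S,T \in \fsl$ for which $\mathit{maxsync}(T,T)$ is regular, so the hypothesis of \cref{lemma:if-maxreg} is met and we may apply it verbatim: it tells us that $\llbracket S \rrbracket \in \textsc{Rel}(T)$ holds if and only if the two conditions
\[
  \text{(i) } \mathit{maxsync}(S,T) \text{ is regular, \quad and \quad (ii) } \llbracket S \rrbracket \subseteq \llbracket T \rrbracket
\]
both hold. Hence it suffices to show that each of these two conditions can be checked algorithmically, and the decision procedure then simply returns ``yes'' precisely when both checks succeed.

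For condition (i), I would directly invoke part~(2) of \cref{lemma:minmaxsync}, which states that regularity of $\mathit{maxsync}(S,T)$ is effectively decidable for $S,T \in \fsl$; this is where the real work of the argument resides, and it has already been carried out. For condition (ii), I would argue exactly as in the proof of \cref{prop:minmaxsync}: since $S,T \in \fsl$, both $\llbracket S \rrbracket$ and $\llbracket T \rrbracket$ are automatic relations, and by \cref{thm:canonical-representatives} the language $(\Sigma\Gamma)^*(\Sigma^*+\Gamma^*)$ is an effective canonical representative of $\fsl$. Thus one can effectively compute synchronous representations of $\llbracket S \rrbracket$ and $\llbracket T \rrbracket$, reducing the inclusion $\llbracket S \rrbracket \subseteq \llbracket T \rrbracket$ to an inclusion of regular languages, which is decidable.

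Combining the two checks yields a decision procedure for $\llbracket S \rrbracket \in \textsc{Rel}(T)$ on the promised class of instances, which is exactly the claim. I do not expect any genuine obstacle here: the delicate part — deciding regularity of $\mathit{maxsync}(S,T)$ via the transition-profile analysis — is precisely what \cref{lemma:minmaxsync} supplies, and the present theorem is essentially a corollary that assembles \cref{lemma:if-maxreg,lemma:minmaxsync} together with the decidability of automatic-relation inclusion. The only point warranting a word of care is that the regularity of $\mathit{maxsync}(T,T)$ is part of the input promise rather than something to be verified, so the procedure may assume it without testing it.
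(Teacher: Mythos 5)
Your proposal is correct and matches the paper's argument exactly: the paper also obtains this theorem by combining \cref{lemma:if-maxreg} with the decidability of regularity of $\mathit{maxsync}(S,T)$ from \cref{lemma:minmaxsync} and the decidability of the inclusion $\llbracket S \rrbracket \subseteq \llbracket T \rrbracket$ via the effective canonical representative of $\fsl$. No gaps.
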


In \cref{sssec:prefixrec} we give an example for a class of languages $T$ such that $\mathit{maxsync}(T,T)$ is regular.

We finish with some relations between the regularity of the sets $\mathit{minsync}(S,T)$, $\mathit{maxsync}(S,T)$ and $\mathit{allsync}(S,T)$ for $S, T \in \fsl$.
In the next examples we show that the regularity of $\mathit{allsync}(S,T)$ does not imply the regularity of $\mathit{maxsync}(S,T)$ and vice versa.

In \cref{ex:max-not-reg}, we have given an example where $\mathit{allsync}(T,T)$ is regular and $\mathit{maxsync}(T,T)$ is not regular.
Recall that $\mathit{allsync}(T,T) = T$.

\begin{example}
 Consider $S = (\SigmaI\SigmaO)^*$ describing all pairs of words where the input and output component have the same length.
 Consider $T = (\SigmaI\SigmaO)^* + \SigmaI^*\SigmaO^*$.
 Clearly, $\mathit{maxsync}(S,T) = (\SigmaI\SigmaO)^*$ is regular.
 However, $\mathit{allsync}(S,T)$ is not regular, because it is of the form 
 \[
  (\SigmaI\SigmaO)^* \cup \{xy \in \SigmaI^*\SigmaO^* \mid x \in \Sigma^*, y \in \SigmaO^*, |x| = |y|\},
 \]
 and the latter set is not regular.
\end{example}

We show that the regularity of $\mathit{minsync}(S,T)$ implies the regularity of $\mathit{allsync}(S,T)$ and vice versa.

\begin{lemma}
 Given $S, T \in \fsl$, $\mathit{minsync}(S,T)$ is regular iff $\mathit{allsync}(S,T)$ is regular.
\end{lemma}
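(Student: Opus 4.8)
The plan is to reduce both implications to the two structural facts already established for $T$ on its own: that $\mathit{minsync}(T,T)$ is regular (\cref{lemma:minreg}) and that enlarging a regular subset of $T$ upward in the order $\preceq_T$ preserves regularity (\cref{lemma:larger-sync-reg}). Everything rests on two set identities relating $\mathit{minsync}(S,T)$ and $\mathit{allsync}(S,T)$, both of which come from unfolding the definitions together with the observation that, for a fixed pair $(u,v)$, the synchronizations of $(u,v)$ lying in $T$ form a finite set that $\preceq_T$ preorders \emph{totally}, according to the position at which a synchronization enters the finite-shift part of $T$. In particular every such class has $\preceq_T$-least synchronizations, and these are precisely the ones collected by $\mathit{minsync}$.

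First I would record the purely definitional identity
\[
  \mathit{minsync}(S,T) = \mathit{allsync}(S,T) \cap \mathit{minsync}(T,T),
\]
since a synchronization $w \in T$ with $\llbracket w \rrbracket \in \llbracket S \rrbracket$ is $\preceq_T$-minimal among all $T$-synchronizations of $\llbracket w \rrbracket$ exactly when $w \in \mathit{minsync}(T,T)$. This settles the direction from $\mathit{allsync}$ to $\mathit{minsync}$: if $\mathit{allsync}(S,T)$ is regular, then, as $\mathit{minsync}(T,T)$ is regular by \cref{lemma:minreg}, the intersection $\mathit{minsync}(S,T)$ is regular.

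For the converse I would use the identity
\[
  \mathit{allsync}(S,T) = \mathit{minsync}(S,T) \cup \{ w \in T \mid \text{there is } w' \in \mathit{minsync}(S,T) \text{ with } w \succ_T w' \}.
\]
The inclusion $\supseteq$ is immediate, since $w \succ_T w'$ forces $\llbracket w \rrbracket = \llbracket w' \rrbracket \in \llbracket S \rrbracket$ and $w \in T$. For $\subseteq$, any $w \in \mathit{allsync}(S,T)$ is either itself $\preceq_T$-least in its class, hence in $\mathit{minsync}(S,T)$, or, by totality of $\preceq_T$ on the finite class of $\llbracket w \rrbracket$, lies strictly above a $\preceq_T$-least synchronization $w'$ of that class, which (because $\llbracket w'\rrbracket \in \llbracket S\rrbracket$) belongs to $\mathit{minsync}(S,T)$. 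Assuming $\mathit{minsync}(S,T)$ is regular, \cref{lemma:larger-sync-reg} applied with $T' = \mathit{minsync}(S,T)$ shows the second set on the right is regular, so the union $\mathit{allsync}(S,T)$ is regular as well.

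The one delicate point, which I would treat carefully, is the justification that every $\llbracket\cdot\rrbracket$-class actually has $\preceq_T$-least elements picked up by $\mathit{minsync}$, so that the $\subseteq$-inclusion holds. Here I would make explicit that once a prefix $w[1,i]$ drives $T$ into a finite-shift residual, i.e.\ $(w[1,i])^{-1}T \in \fs$, this persists for every longer prefix (finite shift is preserved under left quotient by a letter). Hence the indicator ``has entered the finite-shift part'' is monotone along each synchronization, so $\preceq_T$ compares two synchronizations of a single pair solely by their entry positions and is therefore total, guaranteeing the required least elements. The remainder is routine bookkeeping with the regular and recognizable sets supplied by the already-proven lemmas.
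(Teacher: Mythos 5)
Your proof is correct and follows essentially the same route as the paper: the harder direction uses the identical decomposition of $\mathit{allsync}(S,T)$ into $\mathit{minsync}(S,T)$ and the set of strictly $\succ_T$-larger synchronizations, handled by \cref{lemma:larger-sync-reg} with $T'=\mathit{minsync}(S,T)$, while the easy direction differs only cosmetically (you intersect $\mathit{allsync}(S,T)$ with $\mathit{minsync}(T,T)$, whereas the paper writes $\mathit{minsync}(S,T)$ as $\mathit{minsync}(\mathit{allsync}(S,T),\mathit{allsync}(S,T))$ and applies \cref{lemma:minreg} to the regular set $\mathit{allsync}(S,T)$). Your explicit verification that $\preceq_T$ is total on the (finite) set of $T$-synchronizations of a fixed pair, so that minimal elements exist, is a point the paper leaves implicit.
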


\begin{proof}
Assume $\mathit{allsync}(S,T)$ is regular.
Clearly, $\mathit{minsync}(S,T)$ can be expressed as 
\[
  \mathit{minsync}(\mathit{allsync}(S,T),\mathit{allsync}(S,T)).
\]
Thus, $\mathit{minsync}(S,T)$ is regular according to \cref{lemma:minreg}.
 
Assume $T' := \mathit{minsync}(S,T)$ is regular and note that $T' \subseteq T$. We have
\[
\mathit{allsync}(S,T) = T' \cup \underbrace{\{ w \in T \mid \text{there is } w' \in T' \text{ such that } w \succ_T w'\}}_U
\]
The set $U$ is regular by \cref{lemma:larger-sync-reg}.
\end{proof}

\subsubsection{Unambiguous targets}\label{sssec:injective}
Based on the results in \cref{sssec:minmax} we can now show that the resychronized definability problem for automatic source relations and target languages of finite shiftlag is decidable if the target language is additionally unambiguous.

A regular language $T \subseteq (\SigmaIO)^*$ is called \emph{unambiguous} if $w_1 \neq w_2$ implies that $\llbracket w_1 \rrbracket \neq \llbracket w_2 \rrbracket$ for all $w_1,w_2 \in T$.

In general, unambiguity is undecidable for regular synchronization languages.
We obtain this by a simple reduction from a known result as follows.
As mentioned in \cref{sec:prelims}, for a regular synchronization language $S$ one can easily construct a transducer $\mathcal T$ such that $S(\mathcal T) = S$ and vice versa.
Hence, it is not difficult to see that $S$ is unambiguous iff $\mathcal T$ is unambiguous\footnote{Note that in the literature unambiguous is sometimes defined only with respect to the input.}, that is, for each $(u,v) \in \llbracket S \rrbracket = R(\mathcal T)$ there exists exactly one accepting run in $\mathcal T$ whose input is $u$ and output is $v$.
\cite{allauzen2011general} have shown that it is undecidable whether a transducer is unambiguous.
As a direct consequence we obtain that

\begin{lemma}
It is undecidable whether a given regular language $S \subseteq (\SigmaIO)^*$ is unambiguous.
\end{lemma}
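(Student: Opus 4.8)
The plan is to prove undecidability by a reduction from the problem of deciding whether a given transducer is unambiguous, which is undecidable by \cite{allauzen2011general}. The reduction map sends a transducer $\mathcal T$ to its synchronization language $S := S(\mathcal T)$; this is a legitimate computable reduction since, as recalled in \cref{sec:prelims}, an NFA for $S(\mathcal T)$ can be read off from $\mathcal T$, so $S(\mathcal T)$ is regular and effectively computable. It then remains to show that $\mathcal T$ is unambiguous (each $(u,v) \in R(\mathcal T)$ has exactly one accepting run) if and only if $S(\mathcal T)$ is unambiguous in the sense defined above.

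The heart of the argument is to relate accepting runs of $\mathcal T$ to the words of $S(\mathcal T)$. Each accepting run $\rho$ of $\mathcal T$ yields a synchronization word $w_\rho \in S(\mathcal T)$, with $\llbracket w_\rho \rrbracket$ equal to the pair $(u,v)$ read and written by $\rho$. This immediately gives the clean direction: if $S(\mathcal T)$ is ambiguous, witnessed by $w_1 \neq w_2 \in S(\mathcal T)$ with $\llbracket w_1 \rrbracket = \llbracket w_2 \rrbracket$, then any accepting runs producing $w_1$ and $w_2$ are distinct (they induce different synchronization words) and read/write the same pair, so $\mathcal T$ is ambiguous. For the converse I would make the correspondence $\rho \mapsto w_\rho$ a \emph{bijection} between accepting runs and words of $S(\mathcal T)$; under such a bijection, two distinct accepting runs over a common pair correspond exactly to two distinct synchronizations of the same pair, and the two notions of unambiguity coincide. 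The cleanest way to guarantee this bijection is to work with a transducer whose underlying automaton over $\SigmaIO$ is unambiguous — for instance, the transducer obtained from a \emph{deterministic} automaton $\mathcal A$ for $S$ by replacing each $\mathcal A$-transition on a letter $a \in \Sigma$ by a transition reading $a$ with empty output, and each transition on $b \in \Gamma$ by one with empty input and output $b$; determinism of $\mathcal A$ then forces a unique accepting run per word.

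\textbf{The main obstacle is precisely this converse direction}, which is the mismatch between the run-based notion of transducer ambiguity and the interleaving-based notion of ambiguity of $S(\mathcal T)$: a priori a transducer may have two distinct accepting runs producing the \emph{same} synchronization word (reading and writing the same segments in the same order, but passing through different states), which witnesses run-ambiguity without witnessing ambiguity of $S(\mathcal T)$. I would resolve this by arranging that the instances fed into the reduction have an unambiguous underlying automaton over $\SigmaIO$, so that distinct accepting runs always yield distinct synchronization words; one verifies that the hard instances produced in the undecidability proof of \cite{allauzen2011general} are of this shape, the two competing runs over a common pair differing in how input and output are interleaved rather than only in their visited states. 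With this alignment in place, $\mathcal T$ is unambiguous iff $S(\mathcal T)$ is, and undecidability of transducer unambiguity transfers to undecidability of unambiguity of regular synchronization languages.
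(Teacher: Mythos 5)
Your proposal follows the same route as the paper: a reduction from the undecidability of transducer (double-tape) unambiguity of \cite{allauzen2011general} via the map $\mathcal T \mapsto S(\mathcal T)$, whose output is an effectively computable regular language. The obstacle you isolate --- that two distinct accepting runs may induce the same synchronization word, so run-ambiguity of $\mathcal T$ need not transfer to ambiguity of $S(\mathcal T)$ --- is real but is silently absorbed by the paper's ``it is not difficult to see'' step, and your fix (restricting attention to hard instances whose competing runs differ in their input/output interleaving, as the PCP-based instances do) is exactly what is needed to make that step rigorous.
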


However, unambiguity becomes decidable for regular synchronization languages with finite shiftlag.
\cite{allauzen2011general} have shown that it is decidable whether a transducer with finite lag is unambiguous which yields that it is decidable whether a regular synchronization language with finite lag is unambiguous.
We slightly generalize this. 

\begin{lemma}
It is decidable whether a given regular language $S \subseteq (\SigmaIO)^*$ with finite shiftlag is unambiguous.
\end{lemma}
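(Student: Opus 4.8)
The plan is to reduce the problem to the finite-lag case, for which unambiguity is already known to be decidable (as noted just above the statement), and to treat the genuinely asynchronous part of $S$ via the recognizable structure of finite-shift relations. Throughout I use that two distinct synchronizations $w_1 \neq w_2$ with $\llbracket w_1 \rrbracket = \llbracket w_2 \rrbracket = (u,v)$ necessarily have the same length $|w_1| = |w_2| = |u| + |v|$, so that $w_1 \otimes w_2$ is well defined. Let $i$ be the first position at which $w_1$ and $w_2$ differ, and let $x = w_1[1{:}i-1] = w_2[1{:}i-1]$ be their common prefix. Since $\pi_\inp(w_1) = \pi_\inp(w_2)$, $\pi_\outp(w_1) = \pi_\outp(w_2)$, and the projections of $x$ already agree, at position $i$ one word must carry an input symbol and the other an output symbol; from that point on the two runs drift apart in their relative number of consumed input symbols, reconverging only at the very end.

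By \cref{lemma:form} I may fix $\gamma$ and $n$ with $S \subseteq L_{\leq \gamma} \cdot (\SigmaI^* + \SigmaO^*)^n$, so every synchronization is a prefix of lag at most $\gamma$ followed by a suffix of finite shift. I build a single product device reading $w_1 \otimes w_2$ that checks $w_1, w_2 \in S$ (running two copies of a DFA for $S$, one on each track), checks $w_1 \neq w_2$, and must verify $\pi_\inp(w_1) = \pi_\inp(w_2)$ and $\pi_\outp(w_1) = \pi_\outp(w_2)$. The last verification is the only part needing care. As long as both prefixes have lag at most $\gamma$, the signed input-difference $\#_\inp(w_1[1{:}t]) - \#_\inp(w_2[1{:}t])$ stays bounded by $\gamma$, so it can be tracked by a finite-state component entirely analogous to the automaton $\mathcal D_\gamma$ of \cref{lemma:kdelay}. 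Hence all ambiguities whose divergence already occurs in the bounded-lag region are detected by exactly the finite-lag unambiguity test, which is decidable. The remaining ambiguities are those whose divergence involves the finite-shift suffixes, where this difference is unbounded.

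The main obstacle is precisely this second case: inside a finite-shift suffix the symbolwise difference tracking is no longer finite-state. I resolve it by exploiting that a finite-shift suffix contains at most $n$ maximal pure-input and pure-output blocks, and that any $\fse(S)$-reachable remainder $y^{-1}S$ has finite shift and therefore defines a \emph{recognizable} relation. Concretely, for each of $w_1, w_2$ I guess the $\fse$-entry prefix into its finite-shift remainder, using the transition-profile machinery already developed for \cref{lemma:minmaxsync}; this fixes finitely many pairs of recognizable relations $\llbracket y_1^{-1}S \rrbracket$, $\llbracket y_2^{-1}S \rrbracket$ and the bounded resynchronization $\mathit{diff}$ between $y_1$ and $y_2$ (note $y_1, y_2$ both have lag at most $\gamma$, so their difference is bounded and computable by $\mathcal D_\gamma$). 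A collision by two \emph{distinct} synchronizations living in these remainders then reduces, using the bounded block structure, to a non-emptiness question about an intersection of recognizable relations together with a distinctness constraint on the block decomposition; this is decidable because recognizable relations admit effective finite-union-of-products representations and because it is decidable whether an automatic relation is recognizable \cite{DBLP:journals/ita/CartonCG06}. Combining the single bounded-lag test with the finitely many finite-shift collision tests yields the decision procedure: $S$ is unambiguous iff none of these tests exhibits a witnessing pair.
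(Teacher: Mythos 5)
Your proposal takes a genuinely different route from the paper (which annotates each synchronization with the run of a DFA for $S$, passes to the canonical synchronous representation of the resulting automatic relation, and shows that $S$ is unambiguous iff the letter sequence determines the state sequence there --- a regular property --- using a pumping argument against finite shiftlag). Unfortunately your route has a gap at its crucial final step. When the two witnesses enter the \emph{same} finite-shift remainder, i.e.\ $w_1 = yz_1$ and $w_2 = yz_2$ for the same $y \in \fse(S)$, you must decide whether $y^{-1}S$ contains two \emph{distinct} synchronizations $z_1 \neq z_2$ of the same pair. Passing to the recognizable relation $\llbracket y^{-1}S \rrbracket$ forgets exactly the information this question is about: a recognizable relation records only the pairs $\llbracket z \rrbracket$, not the interleavings $z$, so ``$z_1 \neq z_2$ as words over $\SigmaIO$'' is not a constraint you can impose on an intersection of recognizable relations. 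The phenomenon is real even with identical block patterns: with $a \in \SigmaI$, $b \in \SigmaO$, the words $z_1 = a\,b\,aa\,b$ and $z_2 = aa\,b\,a\,b$ are distinct, both have maximal-block pattern $1212$, and both synchronize $(aaa,bb)$. Neither of the two facts you invoke (effective union-of-products representations; decidability of recognizability for automatic relations) addresses this sub-problem, so the reduction is not actually carried out.

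A second, structural problem is that your case split is keyed to \emph{where the divergence occurs}, while the finite-state difference tracking works only \emph{while both prefixes stay $\gamma$-lagged} --- and these do not coincide. Take $S = a^*b^* + (ab)^*$, which has finite shiftlag and is ambiguous via $w_1 = a^kb^k$ and $w_2 = (ab)^k$. These diverge at position $2$, squarely inside the bounded-lag region, yet $\#_{\inp}(w_1[1{:}t]) - \#_{\inp}(w_2[1{:}t])$ grows linearly in $t$, so no $\mathcal D_\gamma$-style component detects the collision and no ``finite-lag unambiguity test'' applies ($S$ is not of finite lag). Moreover only $w_1$ ever enters a finite-shift remainder, a mixed configuration your second case (which guesses an $\fse$-entry for \emph{each} of $w_1,w_2$) does not cover. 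The case analysis would need to be re-based on the $\fse$-entry points rather than the first point of disagreement, and even then the same-remainder distinctness question above remains unresolved. (Minor: the claim that the two synchronizations, once diverged, reconverge ``only at the very end'' is also false --- $ab\,ba$ and $ba\,ab$ reconverge after two letters --- though nothing essential rests on it.)
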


\begin{proof}
Let $\mathcal A$ be a DFA for $S$. We consider an automatic relation over the input and output alphabets $\SigmaI \times Q_\mathcal A$ and $\SigmaO \times Q_\mathcal A$, respectively. This relation is defined by the regular synchronization language $S' \subseteq ((\SigmaIO) \times Q_\mathcal A)^*$ with finite shiftlag as 
\begin{align*}
 S':=\{ (a_1,q_0)(a_2,q_1)\dots(a_n,q_{n-1}) \mid &\ a_1\dots a_n \in L(\mathcal A) \text{ and }\\
 &\ q_0\cdots q_{n-1}\delta_\mathcal A(q_{n-1},a_n) \text{ is the run of $\mathcal A$ on } a_1\dots a_n \}.
\end{align*}

Let $S'' \subseteq ((\SigmaI \times Q_\mathcal A)(\SigmaO \times Q_\mathcal A))^*((\SigmaI \times Q_\mathcal A)^* + (\SigmaO \times Q_\mathcal A)^*)$ be the canonical representation of $S'$.
We show that the regular language $S$ is unambiguous iff $w_1 = (\sigma_1,r_0)\dots(\sigma_i,r_{i-1})$ and $w_2 = (\bar \sigma_1,s_0)\dots(\bar \sigma_j,s_{j-1})$ with $\sigma_1\cdots \sigma_i = \bar \sigma_1\cdots \bar \sigma_j$ implies that $r_0\cdots r_{i-1} = s_0\cdots s_{j-1}$ for all $w_1,w_2 \in S''$. 
This property can be checked given an NFA for $S''$.
If $S''$ does not satisfy the property one can guess letter-by-letter some word $u$ over $\SigmaIO$ and two different words $v,v'$ over $Q_\mathcal A$ and check that $u \otimes v \in S''$ and $u \otimes v' \in S''$ using an NFA for $S''$.

If $S$ is unambiguous then $w_1 = (\sigma_1,r_0)\dots(\sigma_i,r_{i-1})$ and $w_2 = (\bar \sigma_1,s_0)\dots(\bar \sigma_j,s_{j-1})$ with $\sigma_1\cdots \sigma_i = \bar \sigma_1\cdots \bar \sigma_j$ implies that $r_0\cdots r_{i-1} = s_0\cdots s_{j-1}$ for all $w_1$ and $w_2 \in S''$ since the automaton $\mathcal A$ is deterministic. 

For the other direction, we show that if $S$ is not unambiguous, then there exist $w_1,w_2 \in S''$ with $w_1 = (\sigma_1,r_0)\dots(\sigma_i,r_{i-1})$ and $w_2 = (\bar \sigma_1,s_0)\dots(\bar \sigma_j,s_{j-1})$ such that $\sigma_1\cdots \sigma_i = \bar \sigma_1\cdots \bar \sigma_j$ and $r_0\cdots r_{i-1} \neq s_0\cdots s_{j-1}$.
Assume that $S$ is not unambiguous, then there are $x,y \in S$ with $x \neq y$ and $\llbracket x \rrbracket = \llbracket y \rrbracket$, that is, two different synchronizations describe the same pair.
Let $w_1, w_2 \in S''$ with $w_1 = (\sigma_1,r_0)\dots(\sigma_n,r_{n-1})$ and $w_2 = (\bar \sigma_1,s_0)\dots(\bar \sigma_n,s_{n-1})$ denote their respective representations.
Since $x$ and $y$ encode the same pair, it is easy to see that $\sigma_1\cdots \sigma_n = \bar \sigma_1\cdots \bar \sigma_n$.
We show that $x \neq y$ implies $r_0\cdots r_{n-1} \neq s_0\cdots s_{n-1}$.

Towards a contradiction, we assume that $r_0\cdots r_{n-1} = s_0\cdots s_{n-1}$ which implies that $w_1 = w_2$.
Let 
\[(p_0,a_1,p_1)(p_1,a_2,p_2)\cdots (p_{n-1},a_n,p_n)\] be the run of $\mathcal A$ on $x$ and \[(q_0,b_1,q_1)(q_1,b_2,q_2)\cdots (q_{n-1},b_n,q_n)\] be the run of $\mathcal A$ on $y$.
Pick the first $i$ such that $a_i \neq b_i$, without loss of generalization, assume that $a_i \in \Sigma$ and $b_i \in \Gamma$.
Let $j$ be the smallest $j>i$ such that $j$ is a shift of $x$.
Let $u = a_i\cdots a_j$, clearly, $u\in\Sigma^*$.
Analogously, let $k$ be the smallest $k>i$ such that $k$ is a shift of $y$.
Let $v = b_i\cdots b_k$, clearly, $v\in\Gamma^*$.
Note that this implies that $a_{j+1} = b_i$ and $b_{k+1} = a_i$.
Furthermore, since $w_1 = w_2$, this also implies that $p_{i-1} = q_k$ and $q_{i-1} = p_j$.
Moreover, since $a_1\cdots a_{i-1} = b_1\cdots b_{i-1}$, we have that $\delta_\mathcal A(p_{0},a_1\cdots a_{i-1}) = p_{i-1} = q_{i-1} = \delta_\mathcal A(q_{0},b_1\cdots b_{i-1})$.
Thus, we can conclude that $p_{i-1} = q_{i-1} = q_k = p_j$.
Since $\delta_\mathcal A(p_{i-1},u) = p_j$ and $\delta_\mathcal A(q_{i-1},v) = q_k$, we obtain that $a_1\cdots a_{i-1}(u+v)^*a_{j+1}\cdots a_n \subseteq S$.
Clearly, this subset does not have finite shiftlag, thus, $S$ does not have finite shiftlag, which is a contradiction.
\end{proof}

Using the results from the previous section we obtain that

\begin{theorem}\label{prop:injectiveTarget}
The resynchronized definability problem is decidable for source languages with finite shiftlag and unambiguous target language with finite shiftlag.
\end{theorem}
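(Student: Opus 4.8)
The plan is to reduce immediately to \cref{thm:maxsyncTT-regular}, which already gives decidability of the resynchronized definability problem for $S,T \in \fsl$ whenever $\mathit{maxsync}(T,T)$ is regular. Thus the only point to establish is that unambiguity of $T$ forces $\mathit{maxsync}(T,T)$ to be regular, which turns out to hold trivially.

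First I would unwind the definition of the partial order $\preceq_T$: it only ever compares two words $w,w' \in T$ that already satisfy $\llbracket w \rrbracket = \llbracket w' \rrbracket$. If $T$ is unambiguous, then $w,w' \in T$ together with $\llbracket w \rrbracket = \llbracket w' \rrbracket$ forces $w = w'$, so the restriction of $\preceq_T$ to $T$ is just the identity: each $w \in T$ is comparable only to itself, and is hence both $\preceq_T$-minimal and $\preceq_T$-maximal in its (singleton) equivalence class. Consulting the definition of $\mathit{maxsync}(T,T)$, the maximality requirement ``$w \succeq_T w'$ for all $w' \in T$ with $\llbracket w \rrbracket = \llbracket w' \rrbracket$'' reduces to the single instance $w \succeq_T w$, which holds by reflexivity. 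Consequently $\mathit{maxsync}(T,T) = \{ w \in T \mid \llbracket w \rrbracket \in \llbracket T \rrbracket\} = T$, which is regular. (For the same reason all three sync sets coincide here, $\mathit{minsync}(S,T) = \mathit{maxsync}(S,T) = \mathit{allsync}(S,T)$, giving a useful sanity check, though only the regularity of $\mathit{maxsync}(T,T)$ is needed for the reduction.)

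With $\mathit{maxsync}(T,T)$ regular, \cref{thm:maxsyncTT-regular} applies verbatim and yields decidability of $\llbracket S \rrbracket \in \textsc{Rel}(T)$. I do not expect any genuine obstacle: all of the technical machinery—the difference automaton $\mathcal D_\gamma$, the transition-profile analysis, and the reduction of the regularity of $\mathit{maxsync}(S,T)$ to recognizability of finitely many automatic relations in \cref{lemma:minmaxsync}—has already been set up, and unambiguity is exactly the hypothesis that eliminates the only genuinely difficult phenomenon, namely a single pair of $\llbracket S \rrbracket$ admitting several incomparable synchronizations in $T$. As an aside, since unambiguity is decidable for finite-shiftlag languages by the preceding lemma, one could even drop unambiguity as a promise and first test whether $T$ is unambiguous before invoking this argument.
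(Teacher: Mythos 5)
Your proof is correct. The key observation --- that unambiguity of $T$ collapses the order $\preceq_T$ to the identity on each $\llbracket\cdot\rrbracket$-class, so that $\mathit{maxsync}(T,T)=T$ is trivially regular --- is sound, and \cref{thm:maxsyncTT-regular} then applies as you say. The paper takes a slightly different, more direct route: it observes that if a definition $U\subseteq T$ of $\llbracket S\rrbracket$ exists at all, then unambiguity forces $U=\mathit{minsync}(S,T)=\mathit{maxsync}(S,T)=\mathit{allsync}(S,T)$, so one simply decides whether $\mathit{minsync}(S,T)$ is regular (\cref{lemma:minmaxsync}) and whether $\llbracket\mathit{minsync}(S,T)\rrbracket=\llbracket S\rrbracket$ (decidable since $S,T\in\fsl$). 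Both arguments ultimately rest on the same machinery; yours routes through \cref{lemma:if-maxreg} and the decidability of regularity of $\mathit{maxsync}(S,T)$, plus the inclusion test $\llbracket S\rrbracket\subseteq\llbracket T\rrbracket$, whereas the paper's avoids \cref{lemma:if-maxreg} entirely and tests equality of relations instead of inclusion. Your version has the mild advantage of exhibiting unambiguous targets as a special case of the already-stated \cref{thm:maxsyncTT-regular} (alongside the prefix-closed targets of \cref{sssec:prefixrec}), while the paper's is marginally more self-contained; your closing remark that unambiguity need not be given as a promise, since it is decidable for finite-shiftlag languages by the preceding lemma, is also correct.
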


\begin{proof}
Let $S, T \in \fsl$ and let $T$ be unambiguous.
Our goal is to decide whether $\llbracket S \rrbracket \in \textnormal{\textsc{Rel}}(T)$, i.e., whether there is a regular language $U \subseteq T$ such that $\llbracket U \rrbracket = \llbracket S \rrbracket$.
Assume, such a $U$ exists, then $U = \mathit{minsync}(S,T) = \mathit{maxsync}(S,T) = \mathit{allsync}(S,T)$, because $T$ is unambiguous.
Thus, $\llbracket S \rrbracket \in \textnormal{\textsc{Rel}}(T)$ iff $\mathit{minsync}(S,T)$ is regular and $\llbracket S \rrbracket = \llbracket \mathit{minsync}(S,T) \rrbracket$.
The first condition is decidable according to \cref{lemma:minmaxsync}, the second condition is decidable because $S, T \in \fsl$.
\end{proof}

\subsubsection{Prefix-recognizable targets}\label{sssec:prefixrec}

In this section we describe a class of target synchronization languages $T$ that have the property that $\mathit{maxsync}(T,T)$ is regular. According to \cref{thm:maxsyncTT-regular}, the resynchronized definability problem is decidable for such targets (and source languages $S$ of finite shiftlag).

Based on that we then give an alternative proof for decidability of the problem whether a given (binary) automatic relation is prefix-recognizable (the decidability of this problem has already been shown by \cite{Choffrut:2014:DWS:2692050.2692051} for relations of arbitrary arity).

A set $U \subseteq (\SigmaI\SigmaO)^*$ is called \emph{$(\SigmaI\SigmaO)^*$-prefix closed} if $uab \in U$ implies $u \in U$ for all $u \in (\SigmaI\SigmaO)^*$ and all $a \in \SigmaI$, $b \in \SigmaO$.

The class of target languages that we consider contains the synchronization languages that are of the form $T = U\SigmaI^*\SigmaO^*$ for a regular $(\SigmaI\SigmaO)^*$-prefix closed set $U$.

\begin{lemma}\label{lem:prefix-closed}
Let $T = U\SigmaI^*\SigmaO^*$ for a regular $(\SigmaI\SigmaO)^*$-prefix closed set $U$. Then $\mathit{maxsync}(T,T)$ is regular.
\end{lemma}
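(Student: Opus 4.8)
The plan is to exploit the special shape of $T$ to reduce membership in $\mathit{maxsync}(T,T)$ to a purely local, finite-memory check. Fix a DFA $\mathcal A$ with $L(\mathcal A)=T$ together with its set $Q_\mathcal A^{\mathsf{FS}}$ of finite-shift states, so that by \cref{remark:FS} a prefix $\sigma$ of a $T$-word satisfies $\sigma^{-1}T\in\fs$ iff $\delta_\mathcal A^*(\sigma)\in Q_\mathcal A^{\mathsf{FS}}$. First I would describe the synchronizations in $T$ of a fixed pair $(p,q)\in\llbracket T\rrbracket$: since $U\subseteq(\SigmaI\SigmaO)^*$ is prefix closed, every $w\in T$ with $\llbracket w\rrbracket=(p,q)$ has the form $w_k=p[1]q[1]\cdots p[k]q[k]\cdot p[k+1]\cdots p[|p|]\cdot q[k+1]\cdots q[|q|]$, where $k\le\min(|p|,|q|)$ and the alternating prefix $u_k:=p[1]q[1]\cdots p[k]q[k]$ lies in $U$. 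By prefix-closedness the admissible values of $k$ form a downward closed set $\{0,\dots,k_{\max}(p,q)\}$ (distinct $k$ may yield the same string when a tail is short).

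Next I would establish two structural facts. First, \emph{dead stays dead}: along the run of $\mathcal A$ reading consecutive $\SigmaI\SigmaO$-pairs, once a state of $Q_\mathcal A^{\mathsf{FS}}$ is reached it is never left, since a finite-shift remainder cannot regain unbounded alternation; consequently the \emph{live} prefixes (those $\sigma$ with $\sigma^{-1}T\notin\fs$) of a single synchronization form an initial segment of its positions, and $w_k$ can be live only at positions $\le 2k+1$ (a tail containing two consecutive input symbols kills all further alternation). Second, because $w_k[1:i]=w_{k'}[1:i]$ for all $i\le 2k+1$ whenever $k\le k'$, while $w_k$ is dead from position $2k+2$ on, the definition of $\preceq_T$ gives $w_k\preceq_T w_{k'}$ exactly when $k\le k'$. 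Thus the synchronizations of $(p,q)$ form a $\preceq_T$-chain, so its maximal elements coincide with its greatest ones, i.e.\ with the top $\approx_T$-class. Combining this with dead-stays-dead yields the local criterion: writing $w=uxy$ with $u\in U$, $x\in\SigmaI^*$, $y\in\SigmaO^*$, we have $w\in\mathit{maxsync}(T,T)$ iff $x=\varepsilon$, or $y=\varepsilon$, or $\delta_\mathcal A^*(u\,x[1]\,y[1])\in Q_\mathcal A^{\mathsf{FS}}$; that is, extending $u$ by the next synchronous pair is either impossible or lands in a finite-shift (dead) state.

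Then I would turn the criterion into an automaton. An NFA can guess the factorization $w=uxy$, track the $\mathcal A$-state through $u$, and at the $u\mid x$ border remember the single state $\delta_\mathcal A^*(u\,x[1])$ reached after the first tail input; it carries this bounded information through the rest of $x$ and, on reading the first tail output $y[1]$, computes $\delta_\mathcal A^*(u\,x[1]\,y[1])$ and accepts iff that state lies in $Q_\mathcal A^{\mathsf{FS}}$ (accepting outright when $x=\varepsilon$ or $y=\varepsilon$), while simultaneously verifying $w\in T$. Only finitely much is stored, so the accepted language is regular. Correctness of the guessing rests on a uniqueness observation: a word failing the criterion admits \emph{only} its canonical (longest-$u$) factorization, since any shorter alternating prefix would leave an output-then-input pattern in the suffix, contradicting the suffix being in $\SigmaI^*\SigmaO^*$; hence no spurious accepting run exists, whereas a word in $\mathit{maxsync}(T,T)$ is always accepted along its canonical factorization.

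The main obstacle is the two structural lemmas of the second step, namely proving dead-stays-dead for this $T$ and deducing that each pair's synchronizations form a chain, which is exactly what makes ``maximal'' equal to ``greatest'' and converts the global order condition into the stated local one. A secondary technical point is the coupling across the gap: the criterion refers both to $x[1]$, read immediately after $u$, and to $y[1]$, read only after all of $x$; the key realisation is that $u\,x[1]\,y[1]$ equals the alternating word $u_{k+1}$, so the automaton need only retain the bounded state $\delta_\mathcal A^*(u\,x[1])$ while scanning $x$. Note also that, in contrast to $\mathit{maxsync}(T,T)$ being non-regular for $T=(\SigmaI\SigmaO)^*+\SigmaI^*\SigmaO^*$ (\cref{ex:max-not-reg}), here the uniform tail $\SigmaI^*\SigmaO^*$ appended after \emph{every} $U$-word is precisely what makes the decomposition and the liveness threshold locally checkable.
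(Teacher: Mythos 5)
Your proposal is correct and, at the top level, takes the same route as the paper: reduce membership in $\mathit{maxsync}(T,T)$ to a local condition on the factorization $w=uxy$ with $u\in U$, $x\in\SigmaI^*$, $y\in\SigmaO^*$, and then realize that condition by a finite automaton. The genuine difference lies in the third disjunct of the local test. The paper uses the syntactic condition ``$u\,x[1]\,y[1]\notin U$'', whereas you use the semantic condition ``$(u\,x[1]\,y[1])^{-1}T\in\fs$'', i.e.\ $\delta_{\mathcal A}^*(u\,x[1]\,y[1])\in Q_{\mathcal A}^{\mathsf{FS}}$. The two agree whenever every word of $U$ extends to arbitrarily long words of $U$ --- in particular for the identity language underlying the prefix-recognizable target in \cref{sssec:prefixrec}, so the paper's intended application is unaffected --- but for general prefix-closed $U$ only your condition matches the definition of $\mathit{maxsync}$. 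For instance, with $\SigmaI=\{a\}$, $\SigmaO=\{b\}$ and $U=\{\varepsilon,ab\}$ one has $T=a^*b^*\cup ab\,a^*b^*\in\fs$, so every residual of $T$ has finite shift, the preorder $\preceq_T$ is trivial, and $\mathit{maxsync}(T,T)=T$; the test ``$uab\notin U$'' would wrongly exclude $aabb$, while your finite-shift test keeps it. Beyond this, you supply the order-theoretic argument that the paper leaves entirely implicit: prefix-closedness makes the synchronizations of a fixed pair a $\preceq_T$-chain, residuals of finite-shift languages remain of finite shift (so ``dead stays dead''), hence ``maximal'' coincides with ``greatest'' and the whole comparison collapses to the single position $|u|+2$. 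You also correctly isolate the one delicate point of the automaton construction, namely that a word violating the test admits only its canonical factorization, so nondeterministic guessing cannot create spurious acceptances. Both criteria define regular sets, so the lemma survives either way; your version of the criterion is the one that is correct for arbitrary prefix-closed $U$.
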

\begin{proof}
  Let $w \in T$. Then $w = uxy$ with $u \in U$, $x \in \SigmaI^*$, and $y \in \SigmaO^*$.  Because $U$ is $(\SigmaI\SigmaO)^*$-prefix closed, we have that $w \in \mathit{maxsync}(T,T)$ iff
  \begin{itemize}
  \item $x = \varepsilon$, or $y = \varepsilon$, or
  \item $x = ax'$, $y = by'$, and $uab \notin U$.
  \end{itemize}
An automaton that accepts the words $w$ with this property can easily be built from an automaton for~$U$.
\end{proof}

A direct consequence of \cref{lem:prefix-closed} and \cref{thm:maxsyncTT-regular} is:
\begin{theorem} \label{thm:prefix-closed-dec}
The resynchronized definability problem is decidable for given $S,T \in \fsl$ where $T = U\SigmaI^*\SigmaO^*$ for a regular $(\SigmaI\SigmaO)^*$-prefix closed set $U$.
\end{theorem}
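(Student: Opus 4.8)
The plan is to combine the two immediately preceding results, which together leave essentially nothing further to do. The first step is to observe that the target language $T = U\SigmaI^*\SigmaO^*$ matches exactly the hypothesis of \cref{lem:prefix-closed}: the set $U$ is regular and $(\SigmaI\SigmaO)^*$-prefix closed by assumption. Applying \cref{lem:prefix-closed} therefore yields directly that $\mathit{maxsync}(T,T)$ is regular.

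The second step is to feed this into \cref{thm:maxsyncTT-regular}. That theorem asserts that the resynchronized definability problem is decidable for any $S,T \in \fsl$ whenever $\mathit{maxsync}(T,T)$ is regular. Since the hypothesis of the present statement already provides $S,T \in \fsl$, and the first step supplies the regularity of $\mathit{maxsync}(T,T)$, the conclusion---decidability of whether $\llbracket S \rrbracket \in \textsc{Rel}(T)$---follows immediately. Thus I would write the proof as a one-line deduction: $T = U\SigmaI^*\SigmaO^*$ gives $\mathit{maxsync}(T,T)$ regular by \cref{lem:prefix-closed}, and then \cref{thm:maxsyncTT-regular} applies.

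There is no real obstacle to overcome here, since all the substantive work has already been carried out in establishing \cref{lem:prefix-closed} and in the chain \cref{lemma:larger-sync-reg}--\cref{lemma:if-maxreg}--\cref{lemma:minmaxsync} underlying \cref{thm:maxsyncTT-regular}; the only thing to verify is that the two hypotheses line up, which they do verbatim. For completeness one may recall why the combination decides the problem: by \cref{lemma:if-maxreg}, once $\mathit{maxsync}(T,T)$ is regular, the condition $\llbracket S \rrbracket \in \textsc{Rel}(T)$ is equivalent to $\mathit{maxsync}(S,T)$ being regular together with $\llbracket S \rrbracket \subseteq \llbracket T \rrbracket$; the former is decidable by \cref{lemma:minmaxsync}, and the latter is decidable because $S,T \in \fsl$ means $\llbracket S \rrbracket$ and $\llbracket T \rrbracket$ are automatic relations, for which inclusion is decidable.
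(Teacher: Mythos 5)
Your proposal is correct and is exactly the paper's argument: the paper states \cref{thm:prefix-closed-dec} as a direct consequence of \cref{lem:prefix-closed} and \cref{thm:maxsyncTT-regular}, which is precisely the two-step deduction you give. Nothing is missing.
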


We now turn to prefix-recognizable relations. The standard definition of this class of relations uses a single alphabet $A$ as input and output alphabet. A relation over $A^* \times A^*$ is called \emph{prefix-recognizable} if it can be written in the form 
 \[
    \bigcup_{i=1}^n U_i (V_i \times W_i) := \{(uv,uw) \mid \text{$u \in U_i$, $v\in V_i$, and $w \in W_i$}\},
 \]
 for regular languages $U_i,V_i,W_i \subseteq A^*$ for each $i$.
 
A simple example of a prefix-recognizable relation is the lexicographical ordering on words over an ordered alphabet.

Prefix-recognizable relations where first studied by \cite{DBLP:journals/mst/AngluinH84} and \cite{DBLP:journals/jsyml/LauchliS87}.
The class of prefix-recognizable relations is a natural class that enjoys many nice properties.
An overview over these properties (for prefix-recognizable graphs) is given by \cite[Theorem 1]{Bl01}.
Furthermore, the graphs of prefix-recognizable relations have a decidable MSO theory which was shown by \cite{CAUCAL200379}.

\newcommand{\id}{\mathit{Id_A}}
\newcommand{\Tpr}{T_\mathit{pr}}
In order to capture prefix-recognizable relations in our setting, we we make the input and output alphabets disjoint by annotating input letters by $1$ and output letters by $2$, that is, $\SigmaI = \{1\} \times A$ and $\Gamma = \{2\} \times A$. And we say that a relation over $\SigmaI^* \times \SigmaO^*$ is prefix-recognizable if the relation over $A^* \times A^*$ that is obtained by removing the annotations $1$ and $2$ is prefix-recognizable.

Let $\id = \{(1,a)(2,a) \mid a \in A\}^* \subseteq (\SigmaI\SigmaO)^*$.
Then it is not hard to see that the prefix-recognizable relations are precisely those that can be defined by regular subsets of $\Tpr := \id\SigmaI^*\SigmaO^*$.
\begin{remark}
A relation over $\SigmaI^* \times \SigmaO^*$ is prefix recognizable iff it is in $\textnormal{\textsc{Rel}}(\Tpr)$.
\end{remark}

Since $\id$ is $(\SigmaI\SigmaO)^*$-prefix closed, we obtain the following corollary of \cref{thm:prefix-closed-dec}:

\begin{corollary}[\cite{Choffrut:2014:DWS:2692050.2692051}]
 It is decidable whether a given automatic relation is prefix-recognizable.
\end{corollary}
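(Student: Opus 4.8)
The plan is to reduce prefix-recognizability directly to the resynchronized definability problem and then invoke \cref{thm:prefix-closed-dec}. By the remark preceding the statement, a relation $R$ over $\SigmaI^* \times \SigmaO^*$ is prefix-recognizable if and only if $R \in \textsc{Rel}(\Tpr)$, where $\Tpr = \id\SigmaI^*\SigmaO^*$. Hence, for a given automatic relation $R$, deciding whether $R$ is prefix-recognizable is exactly deciding the instance of the resynchronized definability problem whose source represents $R$ and whose target is $\Tpr$.

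First I would fix source and target languages of finite shiftlag. For the source, since $R$ is automatic, \cref{thm:canonical-representatives} lets me take the canonical $(\SigmaI\SigmaO)^*(\SigmaI^* + \SigmaO^*)$-controlled representation $S$ of $R$, so that $\llbracket S \rrbracket = R$ and $S \in \fsl$. For the target I take $T = \Tpr$. It remains to check that $\Tpr$ meets the hypotheses of \cref{thm:prefix-closed-dec}. It is of the form $U\SigmaI^*\SigmaO^*$ with $U = \id$ regular and $(\SigmaI\SigmaO)^*$-prefix closed, as already observed. Moreover, $\Tpr \in \fsl$: within the $\id$-part input and output symbols strictly alternate, so every position there has lag at most one, and $\SigmaI^*\SigmaO^*$ has finite shift; thus $\Tpr$ is the concatenation of a finite-lag language with a finite-shift language, which has finite shiftlag.

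With $S, T \in \fsl$ and $T = \id\SigmaI^*\SigmaO^*$ of the required form, \cref{thm:prefix-closed-dec} applies and decides whether $\llbracket S \rrbracket \in \textsc{Rel}(\Tpr)$, that is, whether $R$ is prefix-recognizable. Since constructing $S$ from $R$ and verifying the form of $\Tpr$ are both effective, the overall procedure is decidable. I do not expect any genuine obstacle, as all the work has been done in \cref{lem:prefix-closed} and \cref{thm:prefix-closed-dec}; the only point needing a short argument is that $\Tpr \in \fsl$, which follows from the bounded-lag alternation inside $\id$, and everything else is a direct instantiation of the general result.
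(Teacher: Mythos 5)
Your proposal is correct and follows exactly the paper's route: represent the automatic relation canonically in $\fsl$, note that $\id$ is $(\SigmaI\SigmaO)^*$-prefix closed so that $\Tpr = \id\SigmaI^*\SigmaO^*$ falls under \cref{thm:prefix-closed-dec}, and conclude via the remark identifying prefix-recognizability with membership in $\textsc{Rel}(\Tpr)$. The only addition is your explicit check that $\Tpr \in \fsl$, which the paper leaves implicit and which you justify correctly.
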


\section{Conclusion}\label{sec:conclusion}
We have considered uniformization and definability problems for subclasses of rational relations that are defined in terms of synchronization languages. Our results in combination with known results from the literature provide a systematic overview of the decidability and undecidability for the different variations of the problems. While this picture is almost complete, there are a few problems left open, as shown in \cref{tab:overview,tab:overviewdefinabilty}.

Furthermore, since we consider synchronization languages over $(\SigmaIO)^*$, our decidability results carry over to the more restricted case of synchronization languages over $\{1,2\}$. However, the undecidability results from the middle parts of the tables (where a synchronization language $T$ is given as input to the problem) do not carry over, in general. The undecidability results in \cref{sec:unif} also hold in the setting of synchronization languages over $\{1,2\}$ (in particular, \cref{thm:undec-aut-control} is formulated for such synchronization languages). For the undecidability results in \cref{tab:overviewdefinabilty} that are derived from \cref{prop:toRAT} it is open whether they carry over to synchronization languages over $\{1,2\}$.


\bibliographystyle{abbrvnat}
\bibliography{biblio}
\label{sec:biblio}

\end{document}